\definecolor{NiColor}{RGB}{77,77,255}
\definecolor{NiColoRed}{RGB}{255,77,77}
\definecolor{NiCitation}{RGB}{0,181,26}
\newtheoremstyle{TheoremStyle}
{\topsep}
{\topsep}
{}
{}
{\sc}
{:}
{.5em}
{}
\def\@endtheorem{\begin{flushright}$\diamond$\end{flushright}} 
\theoremstyle{TheoremStyle}
\newtheorem{theorem}{Theorem}[section]
\newtheorem{proposition}[theorem]{Proposition}
\newtheorem{lemma}[theorem]{Lemma}
\newtheorem{remark}[theorem]{Remark}
\newtheorem{example}[theorem]{Example} 
\title{Classical and Quantum KMS States on Spin Lattice Systems}
\author[a,b]{N Drago\thanks{\href{mailto:nicolo.drago@unige.it}{nicolo.drago@unige.it}}}
\author[a]{L. Pettinari\thanks{\href{mailto:lorenzo.pettinari@unitn.it}{lorenzo.pettinari@unitn.it}}}
\author[c,d]{C. J. F. van de Ven\thanks{\href{mailto:christiaan.van-de-ven@uni-tuebingen.de}{christiaan.van-de-ven@uni-tuebingen.de}}}
\affil[a]{Dipartimento di Matematica, Universit\`a di Trento and INFN-TIFPA and INdAM, Via Sommarive 14, I-38123 Povo, Italy}
\affil[b]{Dipartimento di Matematica, Universit\`{a} di Genova and INdAM, Via Dodecaneso 35, I-16146 Genova, Italy}
\affil[c]{T\"{u}bingen University, Fachbereich Mathematik, AB Mathematische Physik, Auf der Morgenstelle 10,
72076 T\"{u}bingen, Germany}
\affil[d]{Friedrich-Alexander-Universit\"at Erlangen-N\"urnberg, Department of Mathematics,
Cauerstra\ss e 11 91058 Erlangen, Germany}
\begin{document}
\maketitle

\begin{abstract}
We study the classical and quantum KMS conditions within the context of spin lattice systems.
Specifically, we define a strict deformation quantization (SDQ) for a $\mathbb{S}^2$-valued spin lattice system over $\mathbb{Z}^d$ generalizing the renown Berezin SDQ for a single sphere.
This allows to promote a classical dynamics on the algebra of classical observables to a quantum dynamics on the algebra of quantum observables.
We then compare the notion of classical and quantum thermal equilibrium by showing that any weak*-limit point of a sequence of quantum KMS states fulfils the classical KMS condition.
In short, this proves that the semiclassical limit of quantum thermal states describes classical thermal equilibrium, strenghtening the physical interpretation of the classical KMS condition.
Finally we provide two sufficient conditions ensuring uniqueness of classical and quantum KMS states: The latter are based on an version of the Kirkwood-Salzburg equations adapted to the system of interest.
As a consequence we identify a mild condition which ensures uniqueness of classical KMS states and of quantum KMS states for the quantized dynamics for a common sufficiently high temperature.
\end{abstract}

\tableofcontents

\section{Introduction}
\label{Sec: introduction}

The description of thermal equilibrium is a well-established and extensively studied topic in classical and quantum statistical mechanics \cite{Bratteli_Robinson_97,Friedli_Velenik_2017,Israel_1979,Ruelle_1967}.
Adopting the algebraic approach, where observables of the physical system of interest are modelled by a $C^*$-algebra $\mathfrak{A}$, classical and quantum thermal equilibrium are characterized by two slightly different yet related conditions, called classical and quantum Kubo-Martin-Schwinger (KMS) conditions, \textit{cf.} \cite{Gallavotti_Verboven_1975,Haag_Hugenholtz_Winnik_1967}.

Specifically, a quantum system is described in term of an abstract non-commutative $C^*$-algebra $\mathfrak{A}$ ---thorough all paper we will only be interested in algebras with a unit.
Time evolution is modelled by a strongly continuous one-parameter group $t\mapsto\tau_t$ of $*$-automorphisms on $\mathfrak{A}$ with infinitesimal generator $\delta$.
Within this setting a state $\omega\in S(\mathfrak{A})$ ---that is, a linear, positive and normalized functional $\omega\colon\mathfrak{A}\to\mathbb{C}$--- is called \textbf{$(\beta,\delta)$-KMS quantum state}, $\beta\in[0,\infty)$, if
\begin{align}\label{Eq: KMS condition - quantum on abstract C* algebra}
	\omega\big(\mathfrak{a}\tau_{i\beta}(\mathfrak{b})\big)
	=\omega(\mathfrak{b}\mathfrak{a})\,,
\end{align}
for all pairs $\mathfrak{a},\mathfrak{b}\in\mathfrak{A}$ of analytic elements for $\tau$, \textit{cf.} \cite{Bratteli_Robinson_97}.
The quantum $(\beta,\delta)$-KMS condition \eqref{Eq: KMS condition - quantum on abstract C* algebra} selects those states on $\mathfrak{A}$ which are interpreted as describing thermal equilibrium with respect to $\tau$ at a fixed inverse temperature $\beta$, \textit{cf.} \cite{Haag_Hugenholtz_Winnik_1967} ---here $\beta=0$ corresponds to infinite temperature.

The description of thermal equilibrium for a classical system is slightly different, \textit{cf.} \cite{Aizenman_Gallavotti_Goldstein_Lebowitz_1976,Aizenman_Goldstein_Gruber_Lebowitz_Martin_1977,Drago_Van_de_Ven_2023}.
In this scenario the observables of a classical physical system are described by a commutative Poisson $C^*$-algebra $\mathfrak{A}$.
We recall that a \textbf{Poisson structure} over a commutative $C^*$-algebra $\mathfrak{A}$ is given by a bilinear map $\{\;,\;\}\colon \dot{\mathfrak{A}}\times \dot{\mathfrak{A}}\to\dot{\mathfrak{A}}$ defined on a dense $*$-subalgebra $\dot{\mathfrak{A}}\subset\mathfrak{A}$ which fulfils:
\begin{multline*}
	\{\mathfrak{a},\mathfrak{b}\}
	=-\{\mathfrak{b},\mathfrak{a}\}\,,
	\qquad
	\{\mathfrak{a},\mathfrak{b}\}^*=\{\mathfrak{a}^*,\mathfrak{b}^*\}\,,
	\qquad
	\{\mathfrak{a},\mathfrak{b}\mathfrak{c}\}
	=\{\mathfrak{a},\mathfrak{b}\}\mathfrak{c}+\mathfrak{b}\{\mathfrak{a},\mathfrak{c}\}\,,
	\\
	\{\mathfrak{a},\{\mathfrak{b},\mathfrak{c}\}\}
	=\{\{\mathfrak{a},\mathfrak{b}\},\mathfrak{c}\}
	+\{\mathfrak{b},\{\mathfrak{a},\mathfrak{c}\}\}\,,
\end{multline*}
for all $\mathfrak{a},\mathfrak{b},\mathfrak{c}\in \dot{\mathfrak{A}}$.
Given $\beta\in[0,+\infty)$ and a $*$-derivation $\delta\colon\dot{\mathfrak{A}}\to\mathfrak{A}$, a state $\omega\in S(\mathfrak{A})$ is called \textbf{$(\beta,\delta)$-KMS classical state} if
\begin{align}\label{Eq: KMS condition - classical on abstract C* algebra}
	\omega(\{\mathfrak{a},\mathfrak{b}\})
	=\beta\omega(\mathfrak{b}\delta(\mathfrak{a}))\,,
\end{align}
for all $\mathfrak{a},\mathfrak{b}\in\dot{\mathfrak{A}}$.
Once again $\delta$ is regarded as the infinitesimal generator of the time evolution on $\mathfrak{A}$ and $\beta$ is interpreted as an inverse temperature.

The quantum KMS condition has received a lot of attention and has been investigated in several scenarios \cite{Bratteli_Robinson_97,Israel_1979,Ruelle_1967}.
In particular the physical content of Equation \eqref{Eq: KMS condition - quantum on abstract C* algebra} has been investigated, \textit{cf.} \cite{Haag_Hugenholtz_Winnik_1967,Pusz_Woronowicz_1978}, providing concrete justifications for its interpretation.
Conversely, there are fewer investigations on the classical KMS condition \eqref{Eq: KMS condition - classical on abstract C* algebra}.
The latter has been introduced in \cite{Gallavotti_Verboven_1975} and further developed in \cite{Aizenman_Gallavotti_Goldstein_Lebowitz_1976,Aizenman_Goldstein_Gruber_Lebowitz_Martin_1977,Fannes_Pule_Verbeure_1977,Gallavotti_Pulvirenti_1976,Pulvirenti_Riela_1977} in the context classical system of infinitely many particles.
The classical KMS condition has been investigated also in the context of pure Poisson geometry in \cite{Bayen_Flato_Fronsdal_Lichnerowicz_78I, Bayen_Flato_Fronsdal_Lichnerowicz_78II,
Basart_Flato_Lichnerowicz_Sternheimer_84,Basart_Lichnewicz_85,Bordermann_Hartmann_Romer_Waldmann_98,Drago_Waldmann_2024}, moreover, its relation with the Dobrushin-Landford-Ruelle (DLR) \cite{Dobrushin_1968_1,Dobrushin_1968_2,Dobrushin_1968_3,Lanford_Ruelle_1969} probabilistic approach to classical thermal equilibrium was investigated in \cite{Drago_vandeVen_2023}
while its connection to the notion of Gibbs measures for non-linear Hamiltonian systems was studied in \cite{Ammari_Sohinger_2023,Arsernev_1984,Chueshov_1987,Peskov_1985}.

The physical justification of the classical KMS condition \eqref{Eq: KMS condition - classical on abstract C* algebra} seems to be less studied.
In particular, in \cite{Gallavotti_Verboven_1975} condition \eqref{Eq: KMS condition - classical on abstract C* algebra} has been formally derived by considering a suitable semiclassical limit of the quantum KMS condition \eqref{Eq: KMS condition - quantum Gibbs state}.
The first goal of this paper is to provide a mathematically rigorous version of this derivation within the setting of strict deformation quantization, \textit{cf.} Theorem \ref{Thm: limit points of quantum KMS in oo volume are classical KMS in oo volume}.

Strict (or $C^*$-algebraic) deformation quantization  (SDQ) provides a mathematically rigorous setting to study the quantization of a classical system \cite{Landsman_1998,Rieffel_94}.
This framework is not only suitable to investigate the semiclassical limit of states of a quantum system with a fixed, but arbitrary number of degrees of freedom \textit{cf.} \cite{Landsman_2017,Moretti_vandeVen_2022}, but it can also be applied to describe the macroscopic properties of quantum systems over an infinitely extended lattice \cite{Drago_Van_de_Ven_2023,Landsman_Moretti_vandeVen_2020,Moretti_vandeVen_2020,vandeVen_2020,vandeVen_2022} or the semi-classical properties of condensates \cite{Pettinari_2024}.

From a mathematical point of view, a SDQ requires the notion of bundle of $C^*$-algebra which we briefly recall following \cite[App. C.19]{Landsman_2017}, \cite[\S IV.1.6]{Blackadar_2006}.
Setting $\overline{\mathbb{Z}_+/2}:=\mathbb{Z}_+/2\cup\{\infty\}$ and given a collection $\{\mathfrak{A}_j\}_{j\in\overline{\mathbb{Z}_+/2}}$ of $C^*$-algebras, we denote by $\prod_{j\in\overline{\mathbb{Z}_+/2}}\mathfrak{A}_j$ the associated \textbf{full $C^*$-direct product}, which is the $C^*$-algebra made by sequences $(\mathfrak{a}_j)_{j\in\overline{\mathbb{Z}_+/2}}$, $\mathfrak{a}_j\in\mathfrak{A}_j$, such that $\|(\mathfrak{a}_j)_{j\in\overline{\mathbb{Z}_+/2}}\|_{\prod_{j\in\overline{\mathbb{Z}_+/2}}\mathfrak{A}_j}:=\sup_j\|\mathfrak{a}_j\|_{\mathfrak{A}_j}<\infty$.
Within this setting a \textbf{continuous bundle of $C^*$-algebras} over $\overline{\mathbb{Z}_+/2}$ (with fibers $\{\mathfrak{A}_j\}_{j\in\overline{\mathbb{Z}_+/2}}$) is a $C^*$-subalgebra $\mathfrak{A}\subset\prod_{j\in\overline{\mathbb{Z}_+/2}}\mathfrak{A}_j$ such that:
(i) $(\|\mathfrak{a}_j\|_{j\in\overline{\mathbb{Z}_+/2}})\in C(\overline{\mathbb{Z}_+/2})$ for all $(\mathfrak{a}_j)_{j\in\overline{\mathbb{Z}_+/2}}\in\mathfrak{A}$, where $C(\overline{\mathbb{Z}_+/2})$ denotes the space of sequences $(\alpha_j)_{j\in\overline{\mathbb{Z}_+/2}}$, $\alpha_j\in\mathbb{C}$, such that $\alpha_\infty=\lim_{j\to\infty}\alpha_j$;
(ii) $\alpha \mathfrak{a}\in\mathfrak{A}$ for all $\mathfrak{a}\in\mathfrak{A}$ and $\alpha\in C(\overline{\mathbb{Z}_+/2})$.
A \textbf{strict deformation quantization} (SDQ) is then defined by the following data:
\begin{enumerate}\label{data on SDQ}
	\item
	A commutative Poisson $C^*$-algebra $\mathfrak{A}_\infty$, with Poisson structure $\{\;,\;\}\colon\dot{\mathfrak{A}}_\infty\times\dot{\mathfrak{A}}_\infty\to\dot{\mathfrak{A}}_\infty$;
	
	\item
	A continuous bundle of $C^*$-algebras \cite{Dixmier_1977} $\widetilde{\mathfrak{A}}\subset\prod_{j\in\overline{\mathbb{Z}_+/2}}\mathfrak{A}_j$;
	
	\item
	A family of linear maps, called \textbf{quantization maps}, $Q_j\colon\dot{\mathfrak{A}}_\infty\to\mathfrak{A}_j$, $j\in\overline{\mathbb{Z}_+/2}$, such that:
	\begin{enumerate}
		\item\label{Item: quantization maps are Hermitian and define a continuous section}
		$Q_\infty(\mathfrak{a})=\mathfrak{a}$ for all $\mathfrak{a}\in\dot{\mathfrak{A}}_\infty$, moreover, $Q_j(\mathfrak{a})^*=Q_j(\mathfrak{a}^*)$ and $(Q_j(\mathfrak{a}))_{j\in\overline{\mathbb{Z}_+/2}}\in\widetilde{\mathfrak{A}}$.
		
		\item\label{Item: quantization maps fulfils the DGR condition}
		For all $\mathfrak{a},\mathfrak{b}\in\dot{\mathfrak{A}}_\infty$ the \textbf{Dirac-Groenewold-Rieffel (DGR) condition} holds:
		\begin{align}\label{Eq: Dirac-Groenewold-Rieffel condition}
			\lim_{j\to\infty}\big\|Q_j(\{\mathfrak{a},\mathfrak{b}\})-i (2j+1)[Q_j(\mathfrak{a}),Q_j(\mathfrak{b})]\big\|_{\mathfrak{A}_j}=0\,.
		\end{align}
		
		\item\label{Item: quantization maps are strict}
		For all $j\in\overline{\mathbb{Z}_+/2}$, $Q_j(\dot{\mathfrak{A}})$ is a dense $*$-subalgebra of $\mathfrak{A}_j$.
	\end{enumerate}
\end{enumerate}
In this framework $j\in\mathbb{Z}_+/2$ is interpreted as a semiclassical parameter ---in fact, $h_j:=1/(2j+1)$ is the proper semiclassical parameter--- and $j\to\infty$ corresponds to the semiclassical limit.
Given a sequence $(\omega_j)_{j\in\mathbb{Z}_+/2}$ of states such that $\omega_j\in S(\mathfrak{A}_j)$, the semiclassical limit is obtained considering the weak*-limit points of the sequence $(\omega_j\circ Q_j)_{j\in\mathbb{Z}_+/2}$ of functionals over $\mathfrak{A}_\infty$ ---each such weak*-limit point defines a state on $\mathfrak{A}_\infty$.

In this paper we provide a concrete, yet general SDQ model where the relation between classical and quantum KMS conditions \eqref{Eq: KMS condition - quantum on abstract C* algebra}-\eqref{Eq: KMS condition - classical on abstract C* algebra} can be studied rigorously.
In particular we will focus on $\mathbb{S}^2$-valued spin lattice systems on $\Gamma:=\mathbb{Z}^d$, $d\in\mathbb{N}$, \textit{cf.} \cite{Bratteli_Robinson_97, Friedli_Velenik_2017}, which are described by the renown quasi-local algebras $B_\infty^\Gamma$, $B_j^\Gamma$.
The latter are $C^*$-inductive limits of corresponding $C^*$-inductive systems $\{B_\infty^\Lambda\}_{\Lambda\Subset\Gamma}$, $\{B_j^\Gamma\}_{\Lambda\Subset\Gamma}$ where, for any finite region $\Lambda\Subset\Gamma$, $B_\infty^\Gamma$ (\textit{resp.} $B_j^\Lambda$) denotes the algebra of observables localized in $\Lambda$ for the classical (\textit{resp.} quantum) system, \textit{cf.} Section \ref{Subsec: classical and quantum lattice systems on Gamma}.
Within this setting classical and quantum KMS states have been investigated in detail \cite{Bratteli_Robinson_97,Friedli_Velenik_2017}.
Moreover, this setting fits within the framework of the Berezin quantization, which identifies a SDQ for the physical system associated with finite regions, \textit{cf.} \cite{Berezin_1975,Murro_vandeVen_2022}.

Within this framework we may summarize our results as follows:
\begin{enumerate}[(I)]
	\item\label{Item: our result - SDQ}
	In Theorem \ref{Thm: Berezin SDQ on Gamma} we construct a SDQ for the spin lattice system associated to the infinite region $\Gamma$, extending the results of \cite{Berezin_1975,Murro_vandeVen_2022}.
        This completes the framework in which we will subsequently investigate the properties of the semiclassical limits of KMS quantum states.
        It is worth to mention that the study of thermal equilibrium leads to physically relevant results only for infinitely extended systems: Thus, the construction of our SDQ is well-suited for the purposes of this study, \textit{cf.} \eqref{Item: our result - weak*-limit of KMS quantum states}-\eqref{Item: our result - absence of CPT implies absence of QPT} below.
	
	\item\label{Item: our result - weak*-limit of KMS quantum states}
	We study the properties of weak*-limit points of KMS quantum states, in particular, in Theorem \ref{Thm: limit points of quantum KMS in oo volume are classical KMS in oo volume} we prove that they all fulfil the KMS classical condition \eqref{Eq: KMS condition - classical on abstract C* algebra}.
	This provides a rigorous derivation of the classical KMS condition from the quantum KMS condition along the line of \cite{Gallavotti_Verboven_1975}.
	
	\item\label{Item: our result - absence of CPT implies absence of QPT}
	We investigate further the relationship between classical and quantum thermal equilibrium with a specific focus on phase transitions.
	The latter describe the uniqueness/non-uniqueness of KMS states and are of utmost relevance for describing when a physical system undergoes an abrupt change in its macroscopic behaviour, \textit{e.g.} gas-to-liquid condensation.
	It is common folklore that classical and quantum phase transition at non-vanishing temperature should be in bijection.
	Yet, to the best of our knowledge, no mathematically rigorous proof of this claim has been given.
	In Section \ref{Sec: Common absence of CPTs and of QPTs} we prove that, within the model $\mathbb{S}^2$-valued lattice spin system considered in Section \ref{Sec: Berezin SDQ on a lattice system}, under a sufficiently mild assumption, \textit{cf.} \eqref{Eq: uniqueness result for classical KMS states - assumption on potential}, classical and quantum KMS states are unique for temperatures higher than a common sufficiently high threshold temperature.
	This results is a consequence of Theorems \ref{Thm: uniqueness result for classical KMS states}, \ref{Thm: uniqueness result for quantum KMS states} which provides two new sufficient conditions for the uniqueness of KMS classical and quantum states.
	Thus, our result is in line with the claimed equivalence between classical and quantum phase transitions.
\end{enumerate}

It is worth to point out that our results are companions of other existing works in this area.
In particular, \cite{Murro_vandeVen_2022} already described an abstract framework which covers the Berezin SDQ for a lattice system in a finite region.
Our result \eqref{Item: our result - SDQ} generalizes this setting for a specific model but allowing to deal with a spin lattice system on an infinitely extended region: This is important because physically interesting results on thermal equilibrium, \textit{e.g.} phase transitions, can only be described on infinitely extended system.
Concerning \eqref{Item: our result - weak*-limit of KMS quantum states}, in \cite{Falconi_2018} the semiclassical limit of KMS quantum states has been investigated under an assumption which is an abstract version of our Lemma \ref{Lem: classical limit of local dynamics in oo-volume}.
In \cite{Ammari_Ratsimanetrimanana_2019} the classical and quantum KMS conditions were related for the case of the Bose-Hubbard system on a finite graph.
Similarly, \cite{vandeVen_2024} deals with the semiclassical limit of Gibbs quantum and classical states, \textit{i.e.} KMS states on a spin lattice system associated with a finite region: From this point of view, our result \eqref{Item: our result - weak*-limit of KMS quantum states} can be seen as a generalization of \cite{vandeVen_2024} to a physically more interesting scenario.
Finally, concerning \eqref{Item: our result - absence of CPT implies absence of QPT}, Theorems \ref{Thm: uniqueness result for classical KMS states}, \ref{Thm: uniqueness result for quantum KMS states} are inspired by \cite[Prop. 6.2.45]{Bratteli_Robinson_97}, which provides a sufficient condition for uniqueness of KMS quantum states with an argument based on a quantum version of the Kirkwood-Salzburg equations.
Our result provides a classical analogous of \cite[Prop. 6.2.45]{Bratteli_Robinson_97}, moreover, it strengths some of its conclusion ---specifically the $j$-dependence of the inverse critical temperature, \textit{cf.} Remark \ref{Rmk: NO CPT implies NO QPT; comparison with BR result}.
It is also worth to point out that our results cover the regime of high temperatures.
For low temperatures it was shown in \cite{Biskup_Chayes_Starr_2007} that, whenever chessboard estimates can be used to prove a phase transition in the classical model, the corresponding quantum model will have a similar phase transition provided $\beta^2\ll j$.

\bigskip

The paper is organized as follows.
Section \ref{Sec: Berezin SDQ on a lattice system} describes the model of interest and construct a SDQ suitable for our purposes.
Section \ref{Sec: The semiclassical limit of the quantum KMS condition} deals with the semiclassical limit of KMS quantum states, proving that each weak*-limit point fulfils the classical KMS condition \eqref{Eq: KMS condition - classical on abstract C* algebra}.
This requires a few technical results, \textit{cf.} Lemmata \ref{Lem: classical auto-correlation lower bound}-\ref{Lem: classical limit of local dynamics in oo-volume}.
Finally Section \ref{Sec: Common absence of CPTs and of QPTs} deals with the topic of classical and quantum phase transitions.
In particular, Section \ref{Subsec: uniqueness result for classical KMS state} is devoted to the proof of a uniqueness result for KMS classical states, while Section \ref{Subsec: uniqueness result for quantum KMS state} deals with an analogous result for KMS quantum states.
Eventually the relation between these results is discussed, \textit{cf.} Remark \ref{Rmk: NO CPT implies NO QPT; comparison with BR result}, leading to the proof that, under reasonably mild assumptions, classical and quantum phase transitions are absent for temperatures higher than a common threshold temperature.

\paragraph{Acknowledgements.}
We are indebted with V. Moretti for many helpful discussions on this project.
N.D. and L. P. acknowledge the support of the GNFM group of INdAM.
C. J. F. van de Ven is supported by a postdoctoral fellowship granted by the Alexander von Humboldt Foundation (Germany).

\paragraph{Data availability statement.}
Data sharing is not applicable to this article as no new data were created or analysed in this study.

\paragraph{Conflict of interest statement.}
The authors certify that they have no affiliations with or involvement in any
organization or entity with any financial interest or non-financial interest in
the subject matter discussed in this manuscript.

\section{Berezin SDQ on a lattice system}
\label{Sec: Berezin SDQ on a lattice system}

The goal of this section is to provide a SDQ in the sense of Berezin for a classical lattice system.
For definiteness we will consider the lattice $\mathbb{Z}^d$, $d\in\mathbb{N}$, where to each site $x\in\mathbb{Z}^d$ one associates the spin space $\mathbb{S}^2$, in the classical case, or $\mathbb{C}^{2j+1}$, $j\in\mathbb{Z}_+/2$, in the quantum case.
Here $j$ will play the role of a semiclassical parameter, the semiclassical limit being $j\to\infty$.
Within this setting the Berezin SDQ $Q_j\colon C(\mathbb{S}^2)\to M_{2j+1}(\mathbb{C})$ identifies a classical-to-quantum map between the algebras of observables associated to the system \cite{Berezin_1975}.
We will prove that such SDQ can be lifted to a quantization on the associated quasi local-algebra on the whole $\Gamma$, \textit{cf.} Proposition \ref{Thm: Berezin SDQ on Gamma}, \textit{i.e.} for the infinitely extended classical system.
This generalizes \cite{Murro_vandeVen_2022} where the Berezin SDQ for a classical system localized in a finite region $\Lambda\Subset\mathbb{Z}^d$ was considered.

In section \ref{Subsec: classical and quantum lattice systems on Gamma} we briefly introduce the data for the classical and quantum system on a single site $x\in\mathbb{Z}^4$.
Section \ref{Subsec: Berezin SDQ for a single site system} recollects the relevant properties of the standard Berezin deformation quantization, which will play a role also for the discussion in Section \ref{Sec: Common absence of CPTs and of QPTs}.
Finally, Section \ref{Subsec: Berezin SDQ on Gamma} extends the result of \cite{Murro_vandeVen_2022} by constructing a Berezin SDQ for the infinitely extended classical system on $\Gamma$.

\subsection{Classical and quantum lattice systems on $\Gamma$}
\label{Subsec: classical and quantum lattice systems on Gamma}

In this section we will briefly summarize the data of the classical and quantum spin system we will consider for the rest of the paper, \textit{cf.} \cite{Bratteli_Robinson_97,Drago_Van_de_Ven_2023,Friedli_Velenik_2017}.

At a classical level, we will consider the lattice $\boxed{\Gamma}:=\mathbb{Z}^d$, $d\in\mathbb{N}$.
The spin configuration space $\boxed{\mathbb{S}_x^2}:=\mathbb{S}^2$ at each $x\in\mathbb{Z}^d$ is a closed symplectic manifold.
By definition, the algebra of classical observables at $x\in\mathbb{Z}^d$ is the $C^*$-algebra $\boxed{B_\infty}:=C(\mathbb{S}_x^2)$, the latter being also a Poisson $C^*$-algebra with Poisson bracket $\{\;,\;\}_{B_\infty}$ defined on $\boxed{\dot{B}_\infty}:=C^\infty(\mathbb{S}_x^2)$.

For any finite region $\Lambda\Subset\mathbb{Z}^d$ the algebra of classical observables $\boxed{B_\infty^\Lambda}$ associated with $\Lambda$ is defined by $B_\infty^\Lambda:=\overline{\bigotimes_{x\in\Lambda}B_\infty}\simeq C(\mathbb{S}^2_\Lambda)$ where the spin configuration space is now $\boxed{\mathbb{S}^2_\Lambda}:=\bigotimes_{x\in\Lambda}\mathbb{S}^2\simeq(\mathbb{S}^2)^{|\Lambda|}$.
Notice that $B_\infty^\Lambda$ is a Poisson $C^*$-algebra with Poisson bracket $\{\;,\;\}_{B_\infty^\Lambda}\colon \dot{B}_\infty^\Lambda\times\dot{B}_\infty^\Lambda\to \dot{B}_\infty^\Lambda$ defined on the dense $*$-sub-algebra $\boxed{\dot{B}_\infty^\Lambda}:=C^\infty(\mathbb{S}^2_\Lambda)$ and associated with the symplectic structure of $\mathbb{S}^2_\Lambda$.

In the thermodynamic limit one identifies the $C^*$-algebra $\boxed{B_\infty^\Gamma}$ of quasi-local classical observables on $\Gamma$ with $B_\infty^\Gamma:=C(\mathbb{S}_\Gamma)$, where $\boxed{\mathbb{S}_\Gamma}:=(\mathbb{S}^2)^\Gamma$ is compact in the product topology.
It is worth to point out that $B_\infty^\Gamma$ is the $C^*$-direct limit of the $C^*$-direct system $\{B_\infty^\Lambda\}_{\Lambda\Subset\mathbb{Z}^d}$.
The latter is characterized by the $C^*$-injective maps
\begin{align}\label{Eq: inclusion map for classical observables}
	\iota^{\Lambda_0}_{\Lambda_1}\colon B_\infty^{\Lambda_0}\to B_\infty^{\Lambda_1}\,,
	\qquad
	\iota^{\Lambda_0}_{\Lambda_1}a_{\Lambda_0}
	:=a_{\Lambda_0}\otimes\bigotimes_{x\in\Lambda_1\setminus\Lambda_0}I_\infty
	\qquad
	\forall a_{\Lambda_0}\in B_\infty^{\Lambda_0}\,,
\end{align}
where $\Lambda_0\subset\Lambda_1\Subset\mathbb{Z}^d$ while $I_\infty\in B_\infty$ denotes the constant function $I_\infty\equiv 1$.
Denoting by $\iota^\Lambda\colon B_\infty^\Lambda\to B_\infty^\Gamma$ the associated $C^*$-inclusion maps we observe that
\begin{align}
	\dot{B}_\infty^\Gamma:=\bigcup_{\Lambda\Subset\mathbb{Z}^d}\iota^\Lambda\dot{B}_\infty^\Lambda\,,
\end{align}
is a dense $*$-subalgebra of $B_\infty^\Gamma$.
With a standard slight abuse of notation in what follows we will identify $B_\infty^\Lambda$ and $\iota^\Lambda B_\infty^\Lambda$, therefore, we will drop the inclusions maps $\iota^\Lambda$, $\iota^{\Lambda_1}_{\Lambda_2}$.

As described in \cite{Drago_vandeVen_2023}, $B_\infty^\Gamma$ is a Poisson $C^*$-algebra with Poisson structure defined on $\dot{B}_\infty^\Gamma$.
In particular one observes that the maps $\iota^{\Lambda_0}_{\Lambda_1}\colon B_\infty^{\Lambda_0}\to B_\infty^{\Lambda_1}$ are Poisson, namely
\begin{align}\label{Eq: Poisson structures - consistency with respect to inclusions}
	\iota^{\Lambda_0}_{\Lambda_1}\{a_{\Lambda_0},\tilde{a}_{\Lambda_0}\}_{B_\infty^{\Lambda_0}}
	:=\{\iota^{\Lambda_0}_{\Lambda_1}a_{\Lambda_0},
	\iota^{\Lambda_0}_{\Lambda_1}\tilde{a}_{\Lambda_0}\}_{B_\infty^{\Lambda_1}}\,.
\end{align}
Out of Equation \eqref{Eq: Poisson structures - consistency with respect to inclusions} the Poisson structure $\{\;,\;\}_{B_\infty^\Gamma}$ is defined by
\begin{align}\label{Eq: Poisson bracket on Cstar algebra of classical observables}
	\{\;,\;\}_{B_\infty^\Gamma}\colon\dot{B}_\infty^\Gamma\times\dot{B}_\infty^\Gamma\to\dot{B}_\infty^\Gamma\,,
	\qquad
	\{a_{\Lambda_1},a_{\Lambda_2}\}_{B_\infty^\Gamma}
	:=\{a_{\Lambda_1},a_{\Lambda_2}\}_{B_\infty^{\Lambda_1\cup\Lambda_2}}\,.
\end{align}
Notice that, in fact,
\begin{align}\label{Eq: Poisson structures - reduction property}
	\{a_{\Lambda_1},a_{\Lambda_2}\}_{B_\infty^{\Lambda_1\cup\Lambda_2}}
	=\{a_{\Lambda_1},a_{\Lambda_2}\}_{B_\infty^{\Lambda_1\cap\Lambda_2}}\,.
\end{align}

\bigskip

On the quantum side, we will consider a spin lattice system over $\Gamma$, where each site $x\in\Gamma$ is associated with a finite dimensional algebra of non-commutative observables.
Specifically, let $j\in\mathbb{Z}_+/2$ and let $\boxed{B_j}:=M_{2j+1}(\mathbb{C})$: The latter will be considered the algebra of quantum observables at each site $x\in\Gamma$ ---as we will see in Section \ref{Subsec: Berezin SDQ for a single site system} $j$ will play the role of a semiclassical parameter.
For any $\Lambda\Subset\Gamma$ we then set $\boxed{B_j^\Lambda}:=\overline{\bigotimes_{x\in\Lambda}B_j}$.
Then the collection $\{B_j^\Lambda\}_{\Lambda\Subset\Gamma}$ form a $C^*$-direct system, \textit{cf.} \cite{Bratteli_Robinson_97,Ruelle_1967}, with injective $C^*$-maps denoted by, with a slight abuse of notation,
\begin{align}\label{Eq: inclusion map for quantum observables}
	\iota^{\Lambda_0}_{\Lambda_1}\colon B_j^{\Lambda_0}\to B_j^{\Lambda_1}\,,
	\qquad
	\iota^{\Lambda_0}_{\Lambda_1}A_{\Lambda_0}
	:=A_{\Lambda_0}\otimes\bigotimes_{x\in\Lambda_1\setminus\Lambda_0}I_j
	\qquad
	\forall A_{\Lambda_0}\in B_j^{\Lambda_0}\,,
\end{align}
where $\Lambda_0\subset\Lambda_1\Subset\Gamma$ while $I_j\in B_j$ is the identity matrix.

The algebra $\boxed{B_j^\Gamma}$ of quantum observables in the thermodynamic limit is the $C^*$-direct limit of the $C^*$-direct system $\{B_j^\Lambda\}_{\Lambda\Subset\Gamma}$.
With a slight abuse of notation we will denote by $\iota^\Lambda\colon B_j^\Lambda\to B_j^\Gamma$ the associated $C^*$-inclusion maps.
In particular $\boxed{\dot{B}_j^\Gamma}:=\bigcup_{\Lambda\Subset\Gamma}\iota^\Lambda B_j^\Lambda$ is a dense $*$-algebra of $B_j^\Gamma$, moreover, $\|\iota^\Lambda a_\Lambda\|_{B_j^\Gamma}=\|a_\Lambda\|_{B_j^\Lambda}$ for all $a_\Lambda\in B_j^\Lambda$.
Similarly to the classical case we will identify $B_j^\Lambda$ and $\iota^\Lambda B_j^\Lambda$ and drop the inclusion maps $\iota^\Lambda$, $\iota^{\Lambda_0}_{\Lambda_1}$ when not strictly necessary.

\subsection{Berezin SDQ for a single site system}
\label{Subsec: Berezin SDQ for a single site system}

This section focuses on the standard Berezin quantization of the sphere $\mathbb{S}^2$ \cite{Berezin_1975,Bordemann_Meinrenken_Schlichenmaier_1994,Combescure_Robert_2012,LeFloch_2018,Perelomov_1972,Rios_Straume_2014}.
We will recall without proof the main results, pointing out useful consequences which we were not able to find in the existing literature, \textit{cf.} Remark \ref{Rmk: Berezin quantization useful remark}.

\bigskip

To begin with we consider the Lie group $SU(2)$ and denote by $\{J_i\}_{i=1}^3$ the generators of the corresponding Lie algebra $\mathfrak{su}(2)$ with commutation relations $[J_1,J_2]=iJ_3$ and extended cyclically.
Adopting the standard physicist's notation we denote by $\boxed{D^{(j)}}\colon SU(2)\to M_{2j+1}(\mathbb{C})$ the irreducible representation of $SU(2)$ of spin $j\in\mathbb{Z}_+/2$, \textit{cf.} \cite[\S 5.4]{Folland_2015}.
We will denote by
\begin{align}\label{Eq: bra-ket notation}
	|j,m\rangle\in\mathbb{C}^{2j+1}
	\qquad m\in
        \{-j,\ldots,j\}\,,
\end{align}
the orthonormal basis of $\mathbb{C}^{2j+1}$ made by the eigenvectors of $D^{(j)}(J_3)$, the latter denoting the infinitesimal generator of $D^{(j)}(e^{-iJ})$ ---we adopt the bra-ket notation, \textit{cf.} \cite{Napolitano_Sakuraki_2021}.
In particular
\begin{align*}
	D^{(j)}(J_3)|j,m\rangle=m|j,m\rangle
	\qquad
	\langle j,m|j,m'\rangle=\delta_{m,m'}\,.
\end{align*}
The \textbf{coherent state} associated with $\sigma\in\mathbb{S}^2\simeq SU(2)/U(1)$ is defined by
\begin{align}\label{Eq: coherent states}
	|j,\sigma\rangle
	:=D^{(j)}(\sigma)|j,j\rangle
	:=D^{(j)}(e^{-i\phi(\sigma) J_z}e^{-i\theta(\sigma) J_y})|j,j\rangle\,,
\end{align}
where $(\phi(\sigma),\theta(\sigma))\in(-\pi,\pi)\times(0,\pi)$ are the spherical coordinates associated with $\sigma$.
By a standard argument, \textit{cf.} \cite{Perelomov_1972}, the family of coherent states $\{|j,\sigma\rangle\}_{\sigma\in\mathbb{S}^2}$ form an over-complete set in $\mathbb{C}^{2j+1}$ in the sense that
\begin{align}\label{Eq: coherent states - overcompleteness}
	\int_{\mathbb{S}^2}
	|j,\sigma\rangle\langle j,\sigma|\,
	\mathrm{d}\mu_j(\sigma)=I\,,
\end{align}
where the integral in the left-hand side is computed in the weak sense while $|j,\sigma\rangle\langle j,\sigma|$ denotes the orthogonal projector along $|j,\sigma\rangle$ ---here $\mu_j$ denotes the standard measure on $\mathbb{S}^2$ normalized so that $\mu_j(\mathbb{S}^2)=2j+1$.

At this stage the \textbf{Berezin quantization map} $Q_j\colon B_\infty\to B_j$ is defined by the weak integral
\begin{align}\label{Eq: Berezin quantization map - single site}
	Q_j(a):=\int_{\mathbb{S}^2}a(\sigma)|j,\sigma\rangle\langle j,\sigma|\,\mathrm{d}\mu_j(\sigma)\,.
\end{align}
It is worth observing that $Q_j(a)\geq 0$ whenever $a\geq 0$, moreover, $\|Q_j(a)\|_{B_j}\leq\|a\|_{B_\infty}$.
Furthermore, setting
\begin{align}\label{Eq: Berezin SDQ - check function}
	\check{a}_j(\sigma)
	:=\langle j,\sigma|Q_j(a)|j,\sigma\rangle
	=\int_{\mathbb{S}^2}a(\sigma')|\langle j,\sigma|j,\sigma'\rangle|^2\mathrm{d}\mu_j(\sigma')\,,
	\qquad a\in C(\mathbb{S}^2)\,,
\end{align}
one finds $\check{a}_j\in C(\mathbb{S}^2)$ and $\check{a}_j\to a$ in the sup-norm.
Within this setting the data
\begin{align*}
	B_j:=\begin{dcases}
		M_{2j+1}(\mathbb{C})
		&j\in\mathbb{Z}_+/2
		\\
		C(\mathbb{S}^2)
		&j=\infty
	\end{dcases}\,,
	\qquad
	Q_j\colon \dot{B}_\infty\to B_j
	\qquad
	Q_j(a_j):=\begin{dcases}
		Q_j(a_j)
		&j\in\mathbb{Z}_+/2
		\\
		a_\infty
		&j=\infty
	\end{dcases}\,,
\end{align*}
identify a SDQ over a suitably defined bundle of $C^*$-algebras $\boxed{B_\ast}\subseteq\prod_{j\in\overline{\mathbb{Z}_+/2}}B_j$, \textit{cf.} \cite[Thm. 8.1]{Landsman_2017}.
From a physical point of view, the semiclassical parameter is identified with $\boxed{h_j}:=(2j+1)^{-1}$.

\begin{example}\label{Ex: Berezin SDQ - quantization of spherical harmonics}
	For later purposes, we report in this example the fairly explicit computation of the Berezin quantization of a generic spherical harmonic, \textit{cf.} \cite{Perelomov_1972}.
	In more details, let $\boxed{Y_{\ell,m}}\in C^\infty(\mathbb{S}^2)$ be the spherical harmonic with parameter $\ell\in\mathbb{Z}_+$, $m\in[-\ell,\ell]\cap\mathbb{Z}$: Explicitly we set
	\begin{align}\label{Eq: spherical harmonics - convention}
		Y_{\ell,m}(\sigma)
		:=\sqrt{\frac{(\ell-m)!}{(\ell+m)!}}P_{\ell,m}[\cos\theta(\sigma)]e^{im\phi(\sigma)}\,,
	\end{align}
	where $P_{\ell,m}$ denotes the Legendre polynomial of order $\ell,m$.
	Notice that this choice of normalization is such that $\|Y_{\ell,m}\|_{L^2(\mathbb{S}^2,\mu_\ell)}=1$, moreover, $\|Y_{\ell,m}\|_{B_\infty}\leq 1$, \textit{cf.} \cite[Cor. 2.9]{Stein_Weiss_1971}.
	The set $\{Y_{\ell,m}\}_{\ell,m}$ is a complete orthogonal system for $L^2(\mathbb{S}^2,\mu_0)$ made by orthogonal but not $L^2$-normalized vectors.
	With this convention we also have \cite[\S 3.6.2]{Napolitano_Sakuraki_2021}
	\begin{align}\label{Eq: coherent states - relation with spherical harmonics}
		\overline{Y_{\ell,m}(\sigma)}
		=\langle \ell,m|D^{(j)}(\sigma)|\ell,0\rangle
		=D^{(\ell)}_{m,0}(\sigma)\,,
	\end{align}
	where  $\boxed{D^{(j)}_{m,k}(\sigma)}:=\langle j,m|D^{(j)}(\sigma)|j,k\rangle$ denotes the \textbf{Wigner D-matrix}.
	By direct inspection we find
	\begin{align*}
		\langle j,m_1|Q_j(Y_{\ell,m})|j,m_2\rangle
		&=\int_{\mathbb{S}^2}Y_{\ell,m}(\sigma)\langle j,m_1|j,\sigma\rangle\langle j,\sigma|j,m_2\rangle\,\mathrm{d}\mu_j(\sigma)
		\\
		&=\int_{\mathbb{S}^2}\overline{D^{(\ell)}_{m,0}(\sigma)}
		D^{(j)}_{m_1,j}(\sigma)
		\overline{D^{(j)}_{m_2,j}(\sigma)}
		\mathrm{d}\mu_j(\sigma)
		\\
		&=\textsc{cg}_{\ell,m;j,m_2}^{j,m_1}
		\textsc{cg}_{\ell,0;j,j}^{j,j}\,,
	\end{align*}
	where we used Schur orthogonality relations, \textit{cf.} \cite[\S 4.10]{Khersonskii_Moskalev_Varshalovich_1988}, while $\boxed{\textsc{cg}_{j_1,m_1;j_2,m_2}^{j,m}}$ denotes the Clebsch-Gordan coefficients, \textit{cf.} \cite[\S 3]{Napolitano_Sakuraki_2021}.
	We recall in particular that $\textsc{cg}_{j_1,m_1;j_2,m_2}^{j,m}\in\mathbb{R}$, moreover, the coefficient vanishes unless $m=m_1+m_2$ and $|j_1-j_2|\leq j\leq j_1+j_2$.
	For later convenience it is also worth recalling that, for all $\sigma\in\mathbb{S}^2$,
	\begin{align*}
		D^{(j_1)}_{m_1,k_1}(\sigma)
		D^{(j_2)}_{m_2,k_2}(\sigma)
		=\sum_{j=|j_1-j_2|}^{j_1+j_2}
		\textsc{cg}_{j_1,m_1;j_2,m_2}^{j,m_1+m_2}
		\textsc{cg}_{j_1,k_1;j_2,k_2}^{j,k_1+k_2}
		D^{(j)}_{m_1+m_2,k_1+k_2}(\sigma)\,,
	\end{align*}
	which for the particular case of $k_1=k_2=0$ and $j\in\mathbb{Z}_+$ reduces to
	\begin{align}\label{Eq: spherical harmonics - expansion of pointwise product}
		Y_{j_1,m_1}Y_{j_2,m_2}
		=\sum_{j=|j_1-j_2|}^{|j_1+j_2|}
		\textsc{cg}_{j_1,0;j_2,0}^{j,0}
		\textsc{cg}_{j_1,m_1;j_2,m_2}^{j,m}
		Y_{j,m_1+m_2}\,.
	\end{align}
	Overall we find
	\begin{align}\label{Eq: Berezin SDQ - quantization of spherical harmonics}
		Q_j(Y_{\ell,m})
		=\textsc{cg}_{\ell,0;j,j}^{j,j}
		\sum_{m'=-j}^j
		\textsc{cg}_{\ell,m;j,m'}^{j,m+m'}
		|j,m+m'\rangle\langle j,m'|\,.
	\end{align}
	In particular $Q_j(Y_{\ell,m})=0$ for $\ell>2j$.
\end{example}

\begin{remark}\label{Rmk: Berezin quantization useful remark}
	In Section \ref{Sec: Common absence of CPTs and of QPTs}, \textit{cf.} Theorem \ref{Thm: uniqueness result for quantum KMS states}, we will profit of the following well-known properties of the Berezin quantization map $Q_j\colon C(\mathbb{S}^2)\to M_{2j+1}(\mathbb{C})$.
	The latter have interesting and crucial consequences that we could not find in the existing literature and which we describe in the present remark.
	To begin with, let
	\begin{align}\label{Eq: Hilbert-Schmidt scalar product}
		\langle A|B\rangle_{\textsc{hs}}
		:=\frac{1}{2j+1}\operatorname{tr}(A^*B)\,,
	\end{align}
	be the normalized Hilbert-Schmidt scalar product between $A,B\in M_{2j+1}(\mathbb{C})$.
	The irreducible representation $D^{(j)}\colon SU(2)\to M_{2j+1}(\mathbb{C})$, induces a new unitary representation $\boxed{\tilde{D}^{(j)}}\colon SU(2)\to\mathcal{B}(M_{2j+1}(\mathbb{C}))$, where $M_{2j+1}(\mathbb{C})$ is regarded as an Hilbert space with scalar product $\langle\,|\,\rangle_{\textsc{hs}}$, defined by
	\begin{align}\label{Eq: Hilbert-Schmidt induced representation}
		\tilde{D}^{(j)}(R)(A)
		:=\operatorname{Ad}_{D^{(j)}(R)}(A)
		=D^{(j)}(R)AD^{(j)}(R)^*\,.
	\end{align}
	By direct inspection $\tilde{D}^{(j)}(R)$ is unitary with respect to the Hilbert-Schmidt scalar product \eqref{Eq: Hilbert-Schmidt scalar product} on $M_{2j+1}(\mathbb{C})$, moreover $R\mapsto\tilde{D}^{(j)}(R)$ is a unitary representation of $SU(2)$.
	Though $D^{(j)}$ is irreducible, $\tilde{D}^{(j)}$ is not irreducible and by Peter-Weyl Theorem, \textit{cf.} \cite[\S 5.2]{Folland_2015}, it decomposes into irreducible representations $\{D^{(\ell)}\}_{\ell\in\mathbb{Z}_+/2}$.
	Notably the Berezin quantization map $Q_j\colon C(\mathbb{S}^2)\to M_{2j+1}(\mathbb{C})$ of Equation \eqref{Eq: Berezin quantization map - single site} can be used to provide the explicit decomposition of $\tilde{D}^{(j)}$ in its irreducible components.
	To this avail, let
	\begin{align}\label{Eq: SU2 left-action representation}
		SU(2)\ni R\mapsto\hat{R}\in\mathcal{B}(L^2(\mathbb{S}^2,\mu_0))
		\qquad
		(\hat{R}a)(\sigma):=a(R^{-1}\sigma)\,,
	\end{align}
	be the usual left-action unitary representation of $SU(2)$ ---with a slight abuse of notation we dropped the isomorphism $SU(2)/\{\pm I\}\simeq SO(3)$.
	When necessary will denote by $\hat{J}_k$ the infinitesimal generator of $\hat{R}$ for $R=e^{-iJ_k}$.
	It is well known that the left-action representation decomposes into the irreducible representations of $SU(2)$ with integer spin by considering the $L^2$-decomposition of $L^2(\mathbb{S}^2,\mu_0)$ in spherical harmonics.
	Actually, restricting the action of $\hat{R}$ to the vector space spanned by $\{Y_{\ell,m}\}_{m\in[-\ell,\ell]\cap\mathbb{Z}}$ leads to a representation which is unitary equivalent to $D^{(\ell)}$.
	
	At this stage we may observe that, by direct inspection,
	\begin{multline}\label{Eq: Berezin SDQ - left-action representation intertwining property}
		Q_j(\hat{R}a)
		=\int_{\mathbb{S}^2}a(R^{-1}\sigma)|j,\sigma\rangle\langle j,\sigma|\,\mathrm{d}\mu_j(\sigma)
		=\int_{\mathbb{S}^2}a(\sigma)|j,R\sigma\rangle\langle j,R\sigma|\,\mathrm{d}\mu_j(\sigma)
		\\
		=D^{(j)}(R)\int_{\mathbb{S}^2}a(\sigma)|j,\sigma\rangle\langle j,\sigma|\,\mathrm{d}\mu_j(\sigma)D^{(j)}(R)^*
		=\tilde{D}^{(j)}(R)Q_j(a)\,,
	\end{multline}
	where $a\in C(\mathbb{S}^2)$ while we used the rotation invariance of $\mu_j$ and the fact that $D^{(j)}(R)|j,\sigma\rangle=e^{i\alpha(\sigma,R)}|j,R\sigma\rangle$ for $\alpha(\sigma,R)\in\mathbb{R}$, \textit{cf.} \cite{Perelomov_1972}.
	Thus, $Q_j$ intertwines between the left-action representation and $\tilde{D}^{(j)}$.
	Moreover, it is well-known that
	\begin{multline}\label{Eq: Berezin SDQ - scalar product intertwining property}
		\langle Q_j(a)|Q_j(a')\rangle_{\textsc{hs}}
		=\int_{\mathbb{S}^2}\int_{\mathbb{S}^2}
		\overline{a(\sigma)}a'(\sigma')|\langle j,\sigma|j,\sigma'\rangle|^2
		\mathrm{d}\mu_j(\sigma)\mathrm{d}\mu_0(\sigma')
		\\
		=\langle a|\check{a}_j'\rangle_{L^2(\mathbb{S}^2,\mu_0)}
		=\langle \check{a}_j|a'\rangle_{L^2(\mathbb{S}^2,\mu_0)}\,,
	\end{multline}
	where we used both Equations \eqref{Eq: Berezin quantization map - single site}-\eqref{Eq: Berezin SDQ - check function}.
	
	Equations \eqref{Eq: Berezin SDQ - left-action representation intertwining property}-\eqref{Eq: Berezin SDQ - scalar product intertwining property} have crucial consequences in the decomposition of $\tilde{D}^{(j)}$.
	In particular, Equation \eqref{Eq: Berezin SDQ - left-action representation intertwining property} implies that $\hat{R}\check{a}_j=\widecheck{[a\circ R^{-1}]}_j$, \textit{i.e.} the left-action representation and the check-operator commute.
	At an infinitesimal level this implies
	\begin{align}\label{Eq: Berezin SDQ - check commutates with Laplacian}
		\Delta_{\mathbb{S}^2}\check{a}_j=\widecheck{[\Delta_{\mathbb{S}^2}a]}_j\,,
	\end{align}
	where we observed that $\Delta_{\mathbb{S}^2}=\sum_{k=1}^3\hat{J}_k^2$.
	Together with Equation \eqref{Eq: Berezin SDQ - scalar product intertwining property}, Equation \eqref{Eq: Berezin SDQ - check commutates with Laplacian} implies that
	\begin{align}\label{Eq: Berezin SDQ - check of spherical harmonics}
		\widecheck{[Y_{\ell,m}]}_j=c_{j,\ell}Y_{\ell,m}\,,
	\end{align}
	where $c_{j,\ell}$ is explicitly computed using Equation \eqref{Eq: Berezin SDQ - quantization of spherical harmonics}, \textit{cf.} Example \ref{Ex: Berezin SDQ - quantization of spherical harmonics}:
	\begin{align*}
		c_{j,\ell}=\langle Y_{\ell,m}|\widecheck{[Y_{\ell,m}]}_j\rangle_{L^2(\mathbb{S}^2,\mu_\ell)}
		&=\frac{2\ell+1}{2j+1}\langle Q_j(Y_{\ell,m})|Q_j(Y_{\ell,m})\rangle_{\textsc{hs}}
		\\
		&=(\textsc{cg}_{\ell,0;j,j}^{j,j})^2
		\frac{2\ell+1}{2j+1}\sum_{m'=-j}^j
		(\textsc{cg}_{\ell,m;j,m'}^{j,m+m'})^2
		=(\textsc{cg}_{\ell,0;j,j}^{j,j})^2\,,
	\end{align*}
	where in the last line we used the symmetry property of the Clebsch-Gordan coefficients, \textit{cf.} \cite[\S 8]{Khersonskii_Moskalev_Varshalovich_1988}.
	Equations \eqref{Eq: Berezin SDQ - scalar product intertwining property}-\eqref{Eq: Berezin SDQ - check of spherical harmonics} imply that \begin{align*}
		\{Q_j(Y_{\ell,m})\,|\,
		\ell\in\{0,\ldots,2j\}\,,\,
		m\in[-\ell,\ell]\cap\mathbb{Z}
		\}\,,
	\end{align*}
	form a complete orthogonal system in $M_{2j+1}(\mathbb{C})$ with respect to the Hilbert-Schmidt scalar product \eqref{Eq: Hilbert-Schmidt scalar product}.
	(As an aside, we observe that this fact provides a quick proof of that $Q_j\colon\dot{B}_\infty\to B_j$ is surjective: Indeed, for any $A_j\in B_j$ we may consider $a_j:=\sum_{\ell=0}^{2j}\sum_{m=-\ell}^\ell A_{\ell,m}Y_{\ell,m}\in\dot{B}_\infty$ where $A_{\ell,m}:=\|Q_j(Y_{\ell,m})\|_{\textsc{hs}}^{-2}\langle Q_j(Y_{\ell,m})|A_j\rangle_{\textsc{hs}}$ so that $A_j=Q_j(a_j)$.)
	Moreover, Equation \eqref{Eq: Berezin SDQ - left-action representation intertwining property} ensures that $Q_j$ intertwines between the left-action representation $R\mapsto\hat{R}$ and $\tilde{D}^{(j)}$.
	Since $R\mapsto\hat{R}$ is unitary equivalent to $D^{(\ell)}$ when restricted to the vector space spanned by $\{Y_{\ell,m}\}_{m\in[-\ell,\ell]\cap\mathbb{Z}}$, it follows that $\tilde{D}^{(j)}$ is unitary equivalent to $D^{(\ell)}$ when restricted to the vector space spanned by $\{Q_j(Y_{\ell,m})\}_{m\in[-\ell,\ell]\cap\mathbb{Z}}$.
	Thus, $\tilde{D}^{(j)}$ is not irreducible and decomposes into direct sum of the irreducible representations of $SU(2)$ with total spin $\ell\in\{0,\ldots,2j\}$, each of which taken with multiplicity one.
\end{remark}

\subsection{Berezin SDQ on $\Gamma$}
\label{Subsec: Berezin SDQ on Gamma}

The goal of this section is to prove that the Berezin SDQ $Q_j\colon\dot{B}_\infty\to B_j$, \textit{cf.} Equation \eqref{Eq: Berezin quantization map - single site}, lifts to a SDQ $Q_j^\Gamma\colon\dot{B}_\infty^\Gamma\to\dot{B}_j^\Gamma$ between the corresponding algebras of quasi-local observables for the corresponding infinitely extended systems.

To this avail we recall that in \cite{Murro_vandeVen_2022} the Berezin SDQ $Q_j\colon\dot{B}_\infty\to B_j$ has been lifted to a SDQ $\boxed{Q_j^\Lambda}\colon\dot{B}_\infty^\Lambda\to B_j^\Lambda$ for any finite region $\Lambda\Subset\Gamma$.
In a nutshell, this boils down to define $Q_j^\Lambda$ by linear extension of the tensor product map $\bigotimes_{x\in\Lambda}Q_j$ and checking that the data $B_\infty^\Gamma,B_\ast^\Lambda,\{Q_j\}_{j\in\mathbb{Z}_+/2}$ fulfil the requirement of a SDQ ---here $\boxed{B_\ast^\Lambda}:=\bigotimes_{x\in\Lambda} B_\ast$.
A non-trivial task in this setting is to prove that $B_\ast^\Lambda$ is again a bundle of $C^*$-algebras over $\overline{\mathbb{Z}_+/2}$: This is addressed in full generality in \cite{Murro_vandeVen_2022} by using the results of \cite{Kirchberg_Wassermann_1995}.

To extend the results of \cite{Berezin_1975,Murro_vandeVen_2022} to the case of an infinitely extended system over $\Gamma$ we need to identify a suitable continuous bundle of $C^*$-algebras $B_\ast^\Gamma\subset\prod_{j\in\overline{\mathbb{Z}_+/2}}B_j^\Gamma$, where $B_j^\Gamma$ are the quasi-local algebras introduced in Section \ref{Subsec: classical and quantum lattice systems on Gamma}.
Eventually we will define suitable quantization maps $Q_j^\Gamma\colon \dot{B}_\infty^\Gamma\to B_j^\Gamma$ abiding by the requirements \ref{Item: quantization maps are Hermitian and define a continuous section}-\ref{Item: quantization maps are strict}-\ref{Item: quantization maps fulfils the DGR condition} of a SDQ.

\begin{remark}\label{Rmk: sufficient condition for continuous bundle of Cstar algebras}
	Let $\{\mathfrak{A}_j\}_{j\in\overline{\mathbb{Z}_+/2}}$ be a collection of $C^*$-algebras.
	For later convenience we recall the following sufficient condition which identifies a continuous bundle of $C^*$-algebras $\mathfrak{A}\subset\prod_{j\in\overline{\mathbb{Z}_+/2}}\mathfrak{A}_j$ by defining a dense set of (a posteriori) elements of $\mathfrak{A}$ ---\textit{cf.} \cite[Prop. 1.2.3]{Landsman_1998}, \cite[Prop. C.124]{Landsman_2017}.
	In more details, let $\dot{\mathfrak{A}}\subseteq\prod_{j\in\overline{\mathbb{Z}_+/2}}\mathfrak{A}_j$ be such that:
	\begin{enumerate}
		\item\label{Item: tildeA is pointwise dense}
		For all $j\in\overline{\mathbb{Z}_+/2}$ the set $\{\mathfrak{a}_j\,|\,(\mathfrak{a}_j)_{j\in\overline{\mathbb{Z}_+/2}}\in\dot{\mathfrak{A}}\}$ is dense in $\mathfrak{A}_j$;
		
		\item\label{Item: tildeA is a star-algebra}
		$\dot{\mathfrak{A}}$ is a $*$-algebra;
		
		\item\label{Item: tildeA fulfils the Rieffel condition}
		For all $(\mathfrak{a}_j)_{j\in\overline{\mathbb{Z}_+/2}}\in\dot{\mathfrak{A}}$, we have $(\|\mathfrak{a}_j\|_{\mathfrak{A}_j})_{j\in\overline{\mathbb{Z}_+/2}}\in C(\overline{\mathbb{Z}_+/2})$.
	\end{enumerate}
	Then
	\begin{align}\label{Eq: bundle defined from tildeA}
		\mathfrak{A}:=\bigg\lbrace \mathfrak{a}\in\prod_{j\in\overline{\mathbb{Z}_+/2}}\mathfrak{A}_j\,|\,
		\forall\varepsilon>0\;\exists j_{\varepsilon}\in\mathbb{Z}_+/2\,,\,
		\exists \mathfrak{a}'\in\dot{\mathfrak{A}}\colon
		\|\mathfrak{a}_j-\mathfrak{a}'_j\|_{\mathfrak{A}_j}<\varepsilon\;\forall j\geq j_\varepsilon
		\bigg\rbrace\,,
	\end{align}
	is the smallest continuous bundle of $C^*$-algebras over $\overline{\mathbb{Z}_+/2}$ which contains $\dot{\mathfrak{A}}$.
\end{remark}

The following proposition identifies a continuous bundle of $C^*$-algebras $B_\ast^\Gamma$ with fibers $\{B_j^\Gamma\}_{j\in\overline{\mathbb{Z}_+/2}}$.

\begin{proposition}\label{Prop: oo-site continuous bundle of Cstar algebras}
	Let $\dot{B}_\ast^\Gamma\subset\prod_{j\in\overline{\mathbb{Z}_+/2}}B_j^\Gamma$ be defined by
	\begin{multline}\label{Eq: oo-site Cstar bundle dense subset}
		\dot{B}_\ast^\Gamma
		:=\operatorname{Alg}(\dot{V})\,,
		\\
		\dot{V}:=\Big\lbrace
		(\mathfrak{a}_j)_{j\in\overline{\mathbb{Z}_+/2}}\in\prod_{j\in\overline{\mathbb{Z}_+/2}}B_j^\Gamma\,|\,
		\exists \Lambda\Subset\Gamma\,,\,
		a_\Lambda\in \dot{B}_\infty^\Lambda\colon
		\mathfrak{a}_j=
		\begin{cases}
			Q_j^\Lambda(a_\Lambda)
			& j\in\mathbb{Z}_+/2
			\\
			a_\Lambda
			&j=\infty
		\end{cases}
		\Big\rbrace\,,
	\end{multline}
	where $\operatorname{Alg}(\dot{V})$ denotes the algebra generated by the vector space $\dot{V}$.
	
	Then $\dot{\mathfrak{A}}$ fulfils conditions \ref{Item: tildeA is pointwise dense}-\ref{Item: tildeA is a star-algebra}-\ref{Item: tildeA fulfils the Rieffel condition} of Remark \ref{Rmk: sufficient condition for continuous bundle of Cstar algebras}, thus, it identifies a continuous bundle of $C^*$-algebras $B_\ast^\Gamma$ over $\overline{\mathbb{Z}_+/2}$ defined by
	\begin{align}\label{Eq: oo-site continuous bundle of Cstar algebras}
		B_\ast^\Gamma
		:=\{
		(\mathfrak{a}_j)_{j\in\overline{\mathbb{Z}_+/2}}\,|\,
		\forall\varepsilon>0\,,\,
		\exists j_\varepsilon\in \mathbb{Z}_+/2\,,\,
		\exists \mathfrak{a}'\in\dot{B}_\ast^\Gamma\colon
		\|\mathfrak{a}_j-\mathfrak{a}_j'\|_{B_j^\Gamma}<\varepsilon
		\;\forall j\geq j_\varepsilon
		\}\,.
	\end{align}
\end{proposition}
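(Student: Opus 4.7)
The plan is to verify the three conditions \ref{Item: tildeA is pointwise dense}, \ref{Item: tildeA is a star-algebra}, \ref{Item: tildeA fulfils the Rieffel condition} of Remark \ref{Rmk: sufficient condition for continuous bundle of Cstar algebras} directly on $\dot B_\ast^\Gamma$, whereupon Equation \eqref{Eq: bundle defined from tildeA} produces the continuous bundle $B_\ast^\Gamma$ of \eqref{Eq: oo-site continuous bundle of Cstar algebras}. All three conditions will reduce, via a localization argument, to the analogous statements for a finite region $\Lambda\Subset\Gamma$, which are encoded in the SDQ for $B_\ast^\Lambda$ established by Murro and van de Ven in \cite{Murro_vandeVen_2022}.

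For condition \ref{Item: tildeA is a star-algebra}, $\operatorname{Alg}(\dot V)$ is an algebra by construction, so I only need to check that $\dot V$ is a vector space closed under the involution. Linearity of $Q_j^\Lambda$ and the identification of a generator supported on $\Lambda_i$ with a generator on any larger $\Lambda\supseteq\Lambda_i$ via the inclusion \eqref{Eq: inclusion map for quantum observables} (using $Q_j(I_\infty)=I_j$) take care of linear combinations, while Hermiticity $Q_j(a)^*=Q_j(a^*)$ of the single-site Berezin map, lifted tensorially, handles the involution. For condition \ref{Item: tildeA is pointwise dense}, the $j=\infty$ slices of $\dot V$ exhaust $\bigcup_\Lambda\dot B_\infty^\Lambda=\dot B_\infty^\Gamma$, which is dense in $B_\infty^\Gamma$; at finite $j$ the slices range over $\bigcup_\Lambda Q_j^\Lambda(\dot B_\infty^\Lambda)$, which by the surjectivity of $Q_j$ noted in Remark \ref{Rmk: Berezin quantization useful remark} (tensored to $Q_j^\Lambda$) coincides with $\bigcup_\Lambda B_j^\Lambda=\dot B_j^\Gamma$, again dense.

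The heart of the argument is condition \ref{Item: tildeA fulfils the Rieffel condition}. The key observation is that every $\mathfrak a\in\operatorname{Alg}(\dot V)$ is \emph{localized}: being a finite sum of finite products of generators supported on regions $\Lambda_1,\ldots,\Lambda_N$, one has $\mathfrak a_j\in B_j^\Lambda$ for $\Lambda:=\Lambda_1\cup\cdots\cup\Lambda_N\Subset\Gamma$ and all $j\in\overline{\mathbb Z_+/2}$. Because the inclusions \eqref{Eq: inclusion map for quantum observables} are isometric, $\|\mathfrak a_j\|_{B_j^\Gamma}=\|\mathfrak a_j\|_{B_j^\Lambda}$, so continuity on the left reduces to continuity on the right. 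The latter follows once I show that $\mathfrak a$, viewed as a section with values in $\{B_j^\Lambda\}_j$, belongs to the bundle $B_\ast^\Lambda$ of \cite{Murro_vandeVen_2022}: each generator $(Q_j^\Lambda(a_\Lambda),a_\Lambda)$ is a continuous section of $B_\ast^\Lambda$ by the SDQ axioms established there, and $B_\ast^\Lambda$ is a $C^*$-subalgebra of $\prod_j B_j^\Lambda$, hence closed under the algebraic operations. The bundle axiom then yields continuity of $j\mapsto\|\mathfrak a_j\|_{B_j^\Lambda}$, which is what is required.

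I expect the only delicate point to be this last claim. When two generators supported on distinct regions $\Lambda_1\ne\Lambda_2$ are multiplied, the product $Q_j^{\Lambda_1}(a_1)\,Q_j^{\Lambda_2}(a_2)$ is in general \emph{not} of the form $Q_j^{\Lambda_1\cup\Lambda_2}(b)$ for any $b\in\dot B_\infty^{\Lambda_1\cup\Lambda_2}$, so there is no direct Berezin-quantization formula available and one is genuinely forced to exploit the $C^*$-algebraic structure of $B_\ast^{\Lambda_1\cup\Lambda_2}$ rather than a section-by-section computation. Once this point is granted, the three conditions of Remark \ref{Rmk: sufficient condition for continuous bundle of Cstar algebras} are in place and the construction \eqref{Eq: bundle defined from tildeA} concludes the proof.
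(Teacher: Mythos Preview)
Your proposal is correct and follows essentially the same route as the paper. The only cosmetic difference lies in condition \ref{Item: tildeA fulfils the Rieffel condition}: the paper first checks norm continuity on the generators $\dot V$ and then passes to $\operatorname{Alg}(\dot V)$ by explicitly invoking the von Neumann property $\|Q_j^\Lambda(a_\Lambda)Q_j^\Lambda(a_\Lambda')-Q_j^\Lambda(a_\Lambda a_\Lambda')\|_{B_j^\Lambda}\to 0$, whereas you phrase the same step more abstractly by noting that $B_\ast^\Lambda$ is already a $C^*$-subalgebra of $\prod_j B_j^\Lambda$ and hence closed under products of sections---which is of course equivalent, since the von Neumann property is precisely what guarantees that the algebra generated by the quantized sections sits inside $B_\ast^\Lambda$.
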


\begin{proof}
	It suffices to prove that $\dot{B}_\ast^\Gamma$ fulfils conditions \ref{Item: tildeA is pointwise dense}-\ref{Item: tildeA is a star-algebra}-\ref{Item: tildeA fulfils the Rieffel condition}.
	\begin{description}
		\item[\ref{Item: tildeA is pointwise dense}]
		Let $j\in\mathbb{Z}_+/2$ and consider $\{\mathfrak{a}_j\,|\,(\mathfrak{a}_j)_{j\in\overline{\mathbb{Z}_+/2}}\in\dot{B}_\ast^\Gamma\}\subset B_j^\Gamma$ ---a similar argument applies for $j=\infty$.
		Since $Q_j\colon\dot{B}_\infty\to B_j$ is surjective, \textit{cf.} Remark \ref{Rmk: Berezin quantization useful remark}, the same holds for $Q_j^\Lambda\colon\dot{B}_\infty^\Lambda\to B_j^\Lambda$ for all $\Lambda\Subset\Gamma$.
		This implies that any $A_\Lambda\in B_j^\Lambda\subset \dot{B}_j^\Gamma$ can be written as $Q_j^\Lambda(a_{j,\Lambda})$ for some $a_{j,\Lambda}\in \dot{B}_\infty^\Lambda$.
		Thus, $A_j\in\{\mathfrak{a}_j\,|\,(\mathfrak{a}_{j'})_{j'\in\overline{\mathbb{Z}_+/2}}\in \dot{B}_\ast^\Gamma\}$ because $A_j=\mathfrak{a}_j$ for $(\mathfrak{a}_{j'})_{j'\in\overline{\mathbb{Z}_+/2}}$ defined by $\mathfrak{a}_{j'}:=Q_{j'}^\Lambda(a_{j,\Lambda})$ for all $j'\in\overline{\mathbb{Z}_+/2}$.
		Condition \ref{Item: tildeA is pointwise dense} follows from the density of $\dot{B}_j^\Gamma$ in $B_j^\Gamma$.
		
		\item[\ref{Item: tildeA is a star-algebra}]
		Condition \ref{Item: tildeA is a star-algebra} holds because $\dot{B}_\ast^\Gamma=\operatorname{Alg}(\dot{V})$, moreover, $\dot{V}$ is closed under $*$-conjugation.
		Notice that $\dot{V}$ is not an algebra because $Q_j^\Lambda(a_\Lambda\tilde{a}_\Lambda)\neq Q_j^\Lambda(a_\Lambda)Q_j^\Lambda(\tilde{a}_\Lambda)$, although this is true in the limit $j\to\infty$.
		
		\item[\ref{Item: tildeA fulfils the Rieffel condition}]
		For any $(\mathfrak{a}_j)_{j\in\overline{\mathbb{Z}_+/2}}\in\dot{V}$ we have
		\begin{align*}
			\|\mathfrak{a}_j\|_{B_j^\Gamma}
			=\|Q_j^\Lambda(a_\Lambda)\|_{B_j^\Lambda}
			\underset{j\to\infty}{\longrightarrow}
			\|a_\Lambda\|_{B_\infty^\Lambda}
			=\|\mathfrak{a}_\infty\|_{B_\infty^\Gamma}\,,
		\end{align*}
		where we used that $(Q_j^\Lambda(a_\Lambda))_{j\in\overline{\mathbb{Z}_+/2}}\in B_\ast^\Lambda$, therefore, $\overline{\mathbb{Z}_+/2}\ni j\mapsto\|Q_j^\Lambda(a_\Lambda)\|_{B_j^\Lambda}$ is continuous.
        Condition \ref{Item: tildeA fulfils the Rieffel condition} follows from the latter observation together with the fact that
        $\|Q_j^\Lambda(a_\Lambda)Q_j^\Lambda(a_\Lambda')-Q_j^\Lambda(a_\Lambda a_\Lambda')\|_{B_j^\Lambda}\underset{j \to\infty}{\longrightarrow}0$
        and $\dot{B}_\ast^\Gamma=\operatorname{Alg}(\dot{V})$, 
	\end{description}
\end{proof}

We now move to the definition of a SDQ associated with $B_\infty^\Gamma$ and $B_\ast^\Gamma$.
For later convenience we observe that, by direct inspection,
\begin{align}\label{Eq: Q, varphi interplay}
	Q_j^\Lambda\circ\iota^{\Lambda_0}_\Lambda=\iota^{\Lambda_0}_\Lambda\circ Q_j^{\Lambda_0}
	\qquad\forall \Lambda_0\subset\Lambda\Subset\Gamma\,.
\end{align}

\begin{theorem}\label{Thm: Berezin SDQ on Gamma}
	For $j\in\overline{\mathbb{Z}_+/2}$ let $Q_j^\Gamma\colon\dot{B}_\infty^\Gamma\to\dot{B}_j^\Gamma$ be the map defined by
	\begin{align}\label{Eq: SDQ for oo spin system - quantization map}
		Q_j^\Gamma(a_\Lambda):=
		\begin{dcases}
			Q_j^\Lambda(a_\Lambda)
			&j\in\mathbb{Z}_+/2
			\\
			a_\Lambda
			&j=\infty
		\end{dcases}
	\end{align}
	where $\Lambda\Subset\Gamma$, $a_\Lambda\in\dot{B}_\infty^\Lambda\subset\dot{B}_\infty^\Gamma$.
	
	Then the data $B_\infty^\Gamma$, $B_\ast^\Gamma$ and $\{Q_j^\Gamma\}_{j\in\overline{\mathbb{Z}_+/2}}$ define a SDQ.
\end{theorem}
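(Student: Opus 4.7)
The plan is to verify the three SDQ axioms for $(B_\infty^\Gamma, B_\ast^\Gamma, \{Q_j^\Gamma\}_{j \in \overline{\mathbb{Z}_+/2}})$ by reducing each condition to its already-established analogue on a finite region $\Lambda \Subset \Gamma$, where the finite-volume Berezin SDQ $Q_j^\Lambda$ of \cite{Murro_vandeVen_2022} together with Proposition \ref{Prop: oo-site continuous bundle of Cstar algebras} does essentially all the work. A preliminary point is that $Q_j^\Gamma$ is well-defined on $\dot{B}_\infty^\Gamma = \bigcup_\Lambda \iota^\Lambda \dot{B}_\infty^\Lambda$: since a given $a$ admits several representations $\iota^\Lambda a_\Lambda$, one must check that the image $\iota^\Lambda Q_j^\Lambda(a_\Lambda) \in B_j^\Gamma$ does not depend on the chosen $\Lambda$, which is exactly the content of \eqref{Eq: Q, varphi interplay} combined with the isometric nature of $\iota^\Lambda$.

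I would then verify axiom \ref{Item: quantization maps are Hermitian and define a continuous section}: the identity $Q_\infty^\Gamma(a)=a$ is immediate, Hermiticity $Q_j^\Gamma(a^*)=Q_j^\Gamma(a)^*$ descends from the corresponding property of $Q_j^\Lambda$, and the section $(Q_j^\Gamma(a_\Lambda))_{j\in\overline{\mathbb{Z}_+/2}}$ lies in $\dot{V} \subset \dot{B}_\ast^\Gamma \subset B_\ast^\Gamma$ by the very definition of $\dot{V}$. For strictness \ref{Item: quantization maps are strict} I would invoke the surjectivity $Q_j^\Lambda(\dot{B}_\infty^\Lambda) = B_j^\Lambda$, which follows by tensorizing the single-site surjectivity recorded in Remark \ref{Rmk: Berezin quantization useful remark}; hence $Q_j^\Gamma(\dot{B}_\infty^\Gamma) \supseteq \bigcup_\Lambda B_j^\Lambda = \dot{B}_j^\Gamma$, which is a dense $*$-subalgebra of $B_j^\Gamma$.

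The core step is the DGR condition \ref{Item: quantization maps fulfils the DGR condition}. Given $a \in \dot{B}_\infty^{\Lambda_1}$ and $b \in \dot{B}_\infty^{\Lambda_2}$, set $\Lambda := \Lambda_1 \cup \Lambda_2$. Equations \eqref{Eq: Poisson bracket on Cstar algebra of classical observables} and \eqref{Eq: Poisson structures - reduction property} localize the Poisson bracket, $\{a,b\}_{B_\infty^\Gamma} = \{a,b\}_{B_\infty^\Lambda} \in \dot{B}_\infty^\Lambda$, so \eqref{Eq: Q, varphi interplay} together with the suppressed inclusions yields $Q_j^\Gamma(\{a,b\}_{B_\infty^\Gamma}) = Q_j^\Lambda(\{a,b\}_{B_\infty^\Lambda})$ and $[Q_j^\Gamma(a), Q_j^\Gamma(b)] = [Q_j^\Lambda(a), Q_j^\Lambda(b)]$, both regarded inside $B_j^\Gamma$ via the isometric inclusion $B_j^\Lambda \hookrightarrow B_j^\Gamma$. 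Consequently the DGR error computed in $B_j^\Gamma$ coincides with the DGR error computed in $B_j^\Lambda$ and vanishes as $j \to \infty$ by the finite-volume Berezin result. The main obstacle is essentially bookkeeping the inclusions $\iota^\Lambda$ and $\iota^{\Lambda_0}_\Lambda$ and their compatibility with the quantization maps; no new analytic estimate beyond the finite-volume DGR condition is required.
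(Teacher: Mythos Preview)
Your proposal is correct and follows essentially the same route as the paper: reduce each of the three SDQ axioms to its finite-volume counterpart via the inclusions and Equation~\eqref{Eq: Q, varphi interplay}, invoking the finite-volume Berezin SDQ of \cite{Murro_vandeVen_2022} for the DGR condition and single-site surjectivity for strictness. Your explicit well-definedness check and your handling of the DGR condition for $a,b$ supported in possibly different regions (by passing to $\Lambda_1\cup\Lambda_2$) are slightly more careful than the paper's presentation, which tacitly assumes both points.
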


\begin{proof}
	We will prove properties \ref{Item: quantization maps are Hermitian and define a continuous section}-\ref{Item: quantization maps fulfils the DGR condition}-\ref{Item: quantization maps are strict}.
	\begin{description}
		\item[\ref{Item: quantization maps are Hermitian and define a continuous section}]
		By direct inspection $Q_j^\Gamma(a_\Lambda)^*=Q_j^\Gamma(a_\Lambda^*)$ for all $a_\Lambda\in B_\infty^\Lambda\subset\dot{B}_\infty^\Gamma$, moreover, $(Q_j^\Gamma(a_\Lambda))_{j\in\overline{\mathbb{Z}_+/2}}\in \dot{B}_\ast^\Gamma$ ---\textit{cf.} Equation \eqref{Eq: oo-site Cstar bundle dense subset}--- thus, it defines a continuous section of $B_\ast^\Gamma$.
		
		\item[\ref{Item: quantization maps fulfils the DGR condition}]
		For all $a_\Lambda,\tilde{a}_\Lambda\in\dot{B}_\infty^\Lambda$ we have
		\begin{align*}
			Q_j^\Gamma\big(
			\{ a_\Lambda,\tilde{a}_\Lambda\}_{B_\infty^\Gamma}\big)
			=Q_j^\Gamma\big(
			\{a_\Lambda,\tilde{a}_\Lambda\}_{B_\infty^\Lambda}
			\big)
			= Q_j^\Lambda\big(\{a_\Lambda,\tilde{a}_\Lambda\}_{B_\infty^\Lambda}\big)\,,
		\end{align*}
		therefore,
		\begin{multline*}
			\big\|Q_j^\Gamma\big(\{ a_\Lambda, \tilde{a}_\Lambda\}_{B_\infty^\Gamma}
			\big)
			-ih_j^{-1}\big[Q_j^\Gamma( a_\Lambda),Q_j^\Gamma( \tilde{a}_\Lambda)\big]\big\|_{B_j^\Gamma}
			\\
			=\big\|Q_j^\Lambda\big(\{a_\Lambda,\tilde{a}_\Lambda\}_{B_\infty^\Lambda}\big)
			-ih_j^{-1}[Q_j^\Lambda(a_\Lambda),
			Q_j^\Lambda\tilde{a}_\Lambda]\big\|_{B_j^\Lambda}
			\underset{j\to\infty}{\longrightarrow}0\,,
		\end{multline*}
		where in the last line we used property \ref{Item: quantization maps fulfils the DGR condition} for the quantization map $Q_j^\Lambda\colon\dot{B}_\infty^\Lambda\to B_j^\Lambda$.
		
		\item[\ref{Item: quantization maps are strict}]
		For $j=\infty$ property \ref{Item: quantization maps are strict} follows from the density of $\dot{B}_\infty^\Gamma$ in $B_\infty^\Gamma$.
		For $j\in\mathbb{Z}_+/2$ it suffices to observe that
		\begin{align*}
			Q_j^\Gamma(\dot{B}_\infty^\Gamma)
			=\bigcup_{\Lambda\Subset\Gamma}Q_j^\Gamma(\dot{B}_\infty^\Lambda)
			=\bigcup_{\Lambda\Subset\Gamma}Q_j^\Lambda(\dot{B}_\infty^\Lambda)
			=\bigcup_{\Lambda\Subset\Gamma}B_j^\Lambda
			=\dot{B}_j^\Gamma\,,
		\end{align*}
		where we used that $Q_j^\Lambda\colon\dot{B}_\infty^\Lambda\to B_j^\Lambda$ is surjective together with the definitions of $\dot{B}_\infty^\Gamma$ and $\dot{B}_j^\Gamma$.
		Since $\dot{B}_j^\Gamma$ is dense in $B_j^\Gamma$, the claim follows.
	\end{description}
\end{proof}

\section{The semiclassical limit of the quantum KMS condition}
\label{Sec: The semiclassical limit of the quantum KMS condition}

In this section we will study the notion of classical and quantum thermal equilibrium for the algebras $B_j^\Gamma$, $j\in\overline{\mathbb{Z}_+/2}$, of classical and quantum observables for the spin lattice systems over $\Gamma$ introduced in Section \ref{Subsec: classical and quantum lattice systems on Gamma}.
Thermal equilibrium is described by states $\omega_j^{\beta,\Gamma}\in S(B_j^\Gamma)$ fulfilling the KMS condition, \textit{cf.} Equations \eqref{Eq: KMS condition - quantum on abstract C* algebra}-\eqref{Eq: KMS condition - classical on abstract C* algebra}: These conditions cover the notion of both classical and quantum thermal equilibrium.

Our main interest concerns the connection between classical and quantum thermal equilibrium.
Specifically, while the physical justification of the quantum KMS condition has been the subject of many investigations, \textit{cf.} \cite{Haag_Hugenholtz_Winnik_1967,Pusz_Woronowicz_1978}, the classical KMS condition is usually justified with a formal semi-classical limit of the quantum KMS condition, \textit{cf.} \cite{Gallavotti_Verboven_1975}.
Our main goal is to prove that this derivation can be proved rigorously within the framework of the SDQ introduced in Section \ref{Sec: Berezin SDQ on a lattice system}.

In particular, using the quantization maps $Q_j^\Gamma\colon\dot{B}_\infty^\Gamma\to\dot{B}_j^\Gamma$ it is possible to analyse the semi-classical behaviour of sequences of quantum states $(\omega_j)_{j\in\mathbb{Z}_+/2}$, $\omega_j\in S(B_j^\Gamma)$, by studying the weak*-limit points of the sequence $(\omega_j\circ Q_j^\Gamma)_{j\in\mathbb{Z}_+/2}$ of classical states on $S(B_\infty^\Gamma)$ ---it is worth noticing that $Q_j^\Gamma$ preserves positivity because so does $Q_j$, \textit{cf.} Equation \eqref{Eq: Berezin quantization map - single site}.
At this stage the natural question is whether a sequence of quantum KMS states $(\omega_j)_{j\in\mathbb{Z}_+/2}$ leads to weak*-limit points $(\omega_j\circ Q_j^\Gamma)_{j\in\mathbb{Z}_+/2}$ which fulfil the classical KMS condition.
This is in fact what happens as we will prove in Theorem \ref{Thm: limit points of quantum KMS in oo volume are classical KMS in oo volume}.

In Section \ref{Subsec: classical and quantum KMS condition on BooGamma and BjGamma} we will recall the notion of classical and quantum KMS condition within the framework introduced in Section \ref{Subsec: classical and quantum lattice systems on Gamma}, \textit{cf.} \cite{Bratteli_Robinson_97,Drago_Van_de_Ven_2023}.
In passing, we will prove a characterization of the classical KMS condition, \textit{cf.} Lemma \ref{Lem: classical auto-correlation lower bound}, which is inspired by an analogous characterization of the quantum KMS condition known with the name of Roepstorff-Araki-Sewell auto-correlation lower bound, \textit{cf.} \cite[Thm. 5.3.15]{Bratteli_Robinson_97}.
Section \ref{Subsec: semi-classical of quantum KMS condition} is devoted to the proof of Theorem \ref{Thm: limit points of quantum KMS in oo volume are classical KMS in oo volume}.
The latter is based on Lemma \ref{Lem: classical auto-correlation lower bound} together with a result on the semiclassical limit of the quantum derivation associated with the quantum KMS condition, \textit{cf.} Lemma \ref{Lem: classical limit of local dynamics in oo-volume}.

\subsection{Classical and quantum KMS condition on $B_\infty^\Gamma$ and $B_j^\Gamma$}
\label{Subsec: classical and quantum KMS condition on BooGamma and BjGamma}

In this section we briefly recall the notion of classical and quantum KMS conditions, \textit{cf.} Equations \eqref{Eq: KMS condition - quantum on abstract C* algebra}-\eqref{Eq: KMS condition - classical on abstract C* algebra}, for the quasi-local algebras $B_j^\Gamma$, $B_\infty^\Gamma$.
Eventually we will move to the investigation of the semiclassical limit of quantum KMS states.

\bigskip

We will begin with the quantum KMS condition: Since the latter is very well-known, \textit{cf.} \cite{Bratteli_Robinson_97}, we will streamline its presentation by focusing on the main details which will be important for the forthcoming discussion.

The quantum KMS condition \eqref{Eq: KMS condition - quantum on abstract C* algebra} relies on the choice of a strongly one-parameter group on the $C^*$-algebra of interest: For the particular case of $B_j^\Lambda$ and $B_j^\Gamma$ suitable one-parameter groups of automorphisms $\tau^\Lambda$, $\tau^\Gamma$ are identified by considering a family $\boxed{\Phi_j}=\{\Phi_{j,\Lambda}\}_{\Lambda\Subset\Gamma}$ of self-adjoint elements $\Phi_{j,\Lambda}\in B_j^\Lambda\subset B_j^\Gamma$.
We will refer to $\Phi_{j,\Lambda}$ as the \textbf{potential} associated with $\Lambda\Subset\Gamma$: For fixed $\Lambda\Subset\Gamma$ the \textbf{quantum Hamiltonian} $\boxed{H_{j,\Lambda}}\in B_j^\Lambda$ associated to $\Phi_j$ is defined by $H_{j,\Lambda}:=\sum_{X\Subset\Lambda}\Phi_{j,X}$.
The latter induces a strongly one-parameter group $t\mapsto\boxed{\tau^\Lambda_t}$ on $B_j^\Lambda$ whose generator $\boxed{\delta_j^\Lambda}$ is given by $\delta_j^\Lambda:=i[H_{j,\Lambda},\;]$.
Within this setting it can be shown that there exists a unique $(\beta,\delta_j^\Lambda)$-KMS quantum state $\boxed{\omega_j^{\beta,\Lambda}}$ on $B_j^\Lambda$ called \textbf{quantum Gibbs state} and defined by
\begin{align}\label{Eq: KMS condition - quantum Gibbs state}
	\omega_j^{\beta,\Lambda}(A_\Lambda)
	:=\frac{\operatorname{Tr}\big(e^{-\beta H_{j,\Lambda}}A_\Lambda\big)}{\operatorname{Tr}(e^{-\beta H_{j,\Lambda}})}
	\qquad
	\forall A_\Lambda\in B_j^\Lambda\,.
\end{align}
Uniqueness of $(\beta,\delta_j^\Lambda)$-KMS quantum states is a manifestation of the relative simple nature of thermal equilibrium for systems of finite size: Needless to say, this does not hold any more for infinitely extended system.

Thermal equilibrium on $B_j^\Gamma$ is described by a suitable limit procedure $\Lambda\uparrow\Gamma$.
For that, further mild assumptions on $\Phi_j$ are required: In particular, we will assume that
\begin{align}\label{Eq: quantum potential - condition for existence of tau}
	\exists\lambda>0\colon
	\|\Phi\|_\lambda:=\sum_{m\geq 0}e^{\lambda m}
	\sup_{x\in\Gamma}\sum_{\substack{X\ni x\\|X|=m+1}}
	\|\Phi_{j,X}\|_{B_j^X}
	<\infty\,.
\end{align}
Within assumption \eqref{Eq: quantum potential - condition for existence of tau} it can be shown that $\delta_j^\Lambda$ converges strongly on $\dot{B}_j^\Gamma$ as $\Lambda\uparrow\Gamma$ to a $C^*$-derivation $\boxed{\delta_j^\Gamma}\colon\dot{B}_j^\Gamma\to B_j^\Gamma$ which generates a strongly continuous one-parameter group $t\mapsto\tau^\Gamma_t$ on $B_j^\Gamma$, \textit{cf.} \cite[Thm. 6.2.4]{Bratteli_Robinson_97}.
In particular, $\delta_j^\Gamma$ is explicitly given by
\begin{align*}
	\delta_j^\Gamma(A_\Lambda)
	=i\sum_{X\Subset\Gamma}[\Phi_X,A_\Lambda]
	=i\sum_{\substack{X\Subset\Gamma\\X\cap\Lambda\neq\varnothing}}[\Phi_X,A_\Lambda]
	\qquad
	\forall A_\Lambda\in B_j^\Lambda\subset\dot{B}_j^\Gamma\,.
\end{align*}
For later convenience it is worth to recall that for all $A_\Lambda\in B_j^\Gamma$, $\Lambda\Subset\Gamma$, we may compute
\begin{align}\label{Eq: analyticity of local observables}
	\tau^\Gamma_t(A)=\sum_{n\geq 0}\frac{t^n}{n!}(\delta_j^\Gamma)^n(A_\Lambda)\,,
\end{align}
where the series converges in $B_j^\Gamma$ for $|t|\leq\lambda/2\|\Phi_j\|_\lambda$, \textit{cf.} \cite[Thm. 6.2.4]{Bratteli_Robinson_97}.

Thus, assumption \eqref{Eq: quantum potential - condition for existence of tau} ensures the existence of a time evolution on $B_j^\Gamma$ and thermal equilibrium is then described by $(\beta,\delta_j^\Gamma)$-KMS quantum states on $B_j^\Gamma$.
Notably, the $(\beta,\delta_j^\Gamma)$-KMS condition does not select a unique state in general: Whenever uniqueness fails a \textbf{quantum phase transition} is said to occur.

\bigskip

Concerning the classical KMS condition \eqref{Eq: KMS condition - classical on abstract C* algebra} for the $C^*$-algebras $B_\infty^\Lambda$ and $B_\infty^\Gamma$, we will again consider a family $\varphi=\{\varphi_\Lambda\}_{\Lambda\Subset\Gamma}$ of self-adjoint potentials $\varphi_\Lambda\in\dot{B}_\infty^\Lambda$, \textit{cf.} \cite{Drago_vandeVen_2023}.
For the classical lattice system in a finite region $\Lambda\Subset\Gamma$ this suffices to identify a $*$-derivation $\boxed{\delta_\infty^\Lambda}\colon\dot{B}_\infty^\Lambda\to \dot{B}_\infty^\Lambda$ defined by $\delta_\infty^\Lambda:=\{\;,h_\Lambda\}$, where $\boxed{h_\Lambda}:=\sum_{X\subset\Lambda}\varphi_X$ is called \textbf{classical Hamiltonian}.
Similarly to the quantum case, the $(\beta,\delta_\infty^\Lambda)$-KMS condition select a unique state $\boxed{\omega_\infty^{\beta,\Lambda}}\in S(B_\infty^\Lambda)$ called \textbf{classical Gibbs state} and defined by
\begin{align}\label{Eq: KMS condition - classical Gibbs state}
	\omega_\infty^{\beta,\Lambda}(a_\Lambda)
	:=\frac{\int_{\mathbb{S}^2_\Lambda}a_\Lambda e^{-\beta h_\Lambda}\mathrm{d}\mu_0^\Lambda}
	{\int_{\mathbb{S}^2_\Lambda}e^{-\beta h_\Lambda}\mathrm{d}\mu_0^\Lambda}
	\qquad\forall a_\Lambda\in B_\infty^\Lambda\,.
\end{align}
The description of thermal equilibrium for $B_\infty^\Gamma$, \textit{i.e.} for a classical lattice system on the infinite region $\Gamma$, requires further assumptions on $\varphi$.
A sufficiently mild condition is provided by
\begin{align}\label{Eq: classical potential - condition for existence of delta on Gamma}
	\sup_{x\in\Gamma}\sum_{\substack{X\Subset\Gamma\\ X\ni x}}
	\|\varphi_X\|_{C^1(\mathbb{S}^2_X)}
	<\infty\,,
\end{align}
where $\|\;\|_{C^1(\mathbb{S}^2_\Lambda)}$ denote the $C^1$-norm on $C^1((\mathbb{S}^2)^{|\Lambda|})$.

For a family $\varphi=\{\varphi_\Lambda\}_{\Lambda\Subset\Gamma}$ of potentials fulfilling condition \eqref{Eq: classical potential - condition for existence of delta on Gamma} we may introduce a derivation $\boxed{\delta_\infty^\Gamma}\colon\dot{B}_\infty^\Gamma\to B_\infty^\Gamma$ defined by
\begin{align}\label{Eq: Poisson derivation on B0Gamma}
	\delta_\infty^\Gamma(a_\Lambda)
	:=\sum_{X\Subset\Gamma}
	\{a_\Lambda,\varphi_X\}_{B_\infty^\Gamma}
	=\sum_{\substack{X\Subset\Gamma\\X\cap\Lambda\neq\varnothing}}
	\{a_\Lambda,\varphi_X\}_{B_\infty^\Lambda}
	\qquad
	\forall a_\Lambda\in B_\infty^\Lambda\subset\dot{B}_\infty^\Lambda\,.
\end{align}
Notice that, on account of Equation \eqref{Eq: Poisson structures - reduction property}, the sum over $X\Subset\Gamma$ is restricted to $X\cap\Lambda\neq\varnothing$: This implies well-definiteness of $\delta_\infty^\Gamma$ because of condition \eqref{Eq: classical potential - condition for existence of delta on Gamma}.
Similarly to the quantum case, $\delta_\infty^\Gamma$ can be approximated by $\delta_\infty^\Lambda$, that is, $\delta_\infty^\Gamma (a_\Lambda)=\lim_{X\uparrow\Gamma}\delta_\infty^X(a_{\Lambda})$ for all $a_\Lambda\in B_\infty^\Lambda\subset\dot{B}_\infty^\Gamma$.
Thermal equilibrium on $B_\infty^\Gamma$ is described by considering $(\beta,\delta_\infty^\Gamma)$-KMS classical states.
Once again, the $(\beta,\delta_\infty^\Gamma)$-KMS classical condition does not identify a unique state on $B_\infty^\Gamma$ in general, leading to the notion of \textbf{classical phase transition}.

\begin{remark}\label{Rmk: KMS state on BooGamma seen as weak*-limit points of Gibbs states}
	\noindent
	\begin{enumerate}[(i)]
		\item
		We stress that conditions \eqref{Eq: quantum potential - condition for existence of tau}-\eqref{Eq: classical potential - condition for existence of delta on Gamma} ---see also \eqref{Eq: classical potential - condition for semiclassical limit}--- are minimal requirements for the discussion of this section.
		In the forthcoming Section \ref{Sec: Common absence of CPTs and of QPTs} we will to specialize further our assumptions, \textit{cf.} Theorems \ref{Thm: uniqueness result for classical KMS states}-\ref{Thm: uniqueness result for quantum KMS states}.
		In applications these conditions are usually met because the family of potentials $\Phi_j$, $\varphi$ turn out to be of \textbf{finite range}, namely there exists $m\in\mathbb{Z}_+$ and $d>0$ such that $\varphi_X=0$ (\textit{resp.} $\Phi_{j,X}=0$) if $|X|>m$ or $\operatorname{diam}(X)>d$.
		
		\item
		Remarkably, any weak*-limit point of the sequence $(\omega_j^{\beta,\Lambda})_{\Lambda\Subset\Gamma}$ of $(\beta,\delta_j^\Lambda)$-KMS quantum states leads to a $(\beta,\delta_j^\Gamma)$-KMS quantum state, \textit{cf.} \cite[Cor. 6.2.19]{Bratteli_Robinson_97}. The existence of these weak*-limit points follows from a standard Hahn-Banach and weak*-compactness argument.
		Similar considerations apply to the sequence $(\omega_\infty^{\beta,\Lambda})_{\Lambda\Subset\Gamma}$ of $(\beta,\delta_\infty^\Lambda)$-KMS classical states, ensuring the existence of $(\beta,\delta_\infty^\Gamma)$-KMS classical states.
	\end{enumerate}
\end{remark}

At this stage, we are in position to set our investigation of the semiclassical limit of quantum thermal states.
Specifically, we will consider a family $\varphi=\{\varphi_\Lambda\}_{\Lambda\Subset\Gamma}$ abiding by the condition
\begin{align}\label{Eq: classical potential - condition for semiclassical limit}
	\exists\lambda>0\colon
	\sum_{m\geq 0}e^{\lambda m}
	\sup_{x\in\Gamma}\sum_{\substack{X\ni x\\|X|=m+1}}
	\|\varphi_X\|_{C^1(\mathbb{S}^2_X)}
	<\infty\,.
\end{align}
The latter implies condition \eqref{Eq: classical potential - condition for existence of delta on Gamma}, furthermore, it ensures that the family $\Phi_j:=\{Q_j^\Gamma(\varphi_\Lambda)\}_{\Lambda\Subset\Gamma}$ of quantum potentials fulfils \eqref{Eq: quantum potential - condition for existence of tau}.
Thus, we may consider both $(\beta,\delta_\infty^\Gamma)$-KMS classical states as well as $(\beta,\delta_j^\Gamma)$-KMS quantum states for such families of classical and quantum potentials.

We now consider a sequence $(\omega_j^{\beta,\Gamma})_{j\in\mathbb{Z}_+/2}$ where for each $j\in\mathbb{Z}_+/2$ the state $\omega_j^{\beta,\Gamma}\in S(B_j^\Gamma)$ is a $(\beta,\delta_j^\Gamma)$-KMS quantum state.
Since $Q_j^\Gamma\colon\dot{B}_\infty^\Gamma\to B_j^\Gamma$ preserves positive elements, we find that $\omega_j^{\beta,\Gamma}\circ Q_j^\Gamma\in S(B_\infty^\Gamma)$ is well-defined ---here we implicitly extended $\omega_j^{\beta,\Gamma}\circ Q_j^\Gamma\colon\dot{B}_\infty^\Gamma\to\mathbb{C}$ using the continuity of $\omega_j^{\beta,\Gamma}$ and the density of $Q_j^\Gamma(\dot{B}_\infty^\Gamma)$, \textit{cf.} item \ref{Item: quantization maps are strict}.
By weak*-compactness of $S(B_\infty^\Gamma)$ the sequence $(\omega_j^{\beta,\Gamma}\circ Q_j^\Gamma)_{j\in\mathbb{Z}_+/2}$ has weak*-limit points: A natural question is whether these satisfy the $(\beta,\delta_\infty^\Gamma)$-KMS classical condition.

This problem has already been investigated in \cite{vandeVen_2024} for the case of a lattice system in a finite region.
Therein it can be shown that the sequence $(\omega_j^{\beta,\Lambda}\circ Q_j^\Lambda)_{j\in\mathbb{Z}_+/2}$ has a limit for $j\to\infty$, moreover, it holds
\begin{align}\label{Eq: limit quantum Gibbs states is classical Gibbs state}
	\lim_{j\to\infty}\omega_j^{\beta,\Lambda}\circ Q_j^\Lambda
	=\omega_\infty^{\beta,\Lambda}\,.
\end{align}
In other words the semiclassical limit of the sequence of quantum Gibbs state \eqref{Eq: KMS condition - quantum Gibbs state} is the classical Gibbs state \eqref{Eq: KMS condition - quantum Gibbs state}.
Theorem \ref{Thm: limit points of quantum KMS in oo volume are classical KMS in oo volume} generalizes this result to lattice systems on the infinite region $\Gamma$.

\subsection{Semi-classical limit of quantum KMS condition}
\label{Subsec: semi-classical of quantum KMS condition}

The goal of this section is to prove Theorem \ref{Thm: limit points of quantum KMS in oo volume are classical KMS in oo volume}, which shows that weak*-limit points of KMS quantum states are KMS classical states.

The proof requires two main ingredients.
The first one is a useful characterization of the $(\beta,\delta_\infty^\Gamma)$-KMS classical condition, \textit{cf.} Lemma \ref{Lem: classical auto-correlation lower bound}: The latter is inspired by an analogous characterization of $(\beta,\delta_j^\Gamma)$-KMS quantum states, \textit{cf.} Remark \ref{Rmk: quantum auto-correlation lower bound}.
The second piece of information is a control of the semiclassical limit of the derivation $\delta_j^\Gamma$.
Specifically, for any finite region $\Lambda\Subset\Gamma$, the DGR condition \eqref{Eq: Dirac-Groenewold-Rieffel condition} implies that $h_j^{-1}\delta_j^\Lambda \circ Q_j^\Lambda-Q_j^\Lambda\circ\delta_\infty^\Lambda\underset{j\to\infty}{\longrightarrow}0$ strongly on $\dot{B}_\infty^\Lambda$.
This is a key property which notably holds also for the lattice system on the entire $\Gamma$, \textit{cf.} Lemma \ref{Lem: classical limit of local dynamics in oo-volume}.

\bigskip

We begin with the characterization of the classical KMS condition.
The following result is essentially the classical version of the Roepstorff-Araki-Sewell auto-correlation lower bound, \textit{cf.} \cite[Thm. 5.3.15]{Bratteli_Robinson_97}.

\begin{lemma}\label{Lem: classical auto-correlation lower bound}
	Let $\mathfrak{A}$ be a commutative Poisson $C^*$-algebra, let $\delta\colon\dot{\mathfrak{A}}\to\mathfrak{A}$ be a $*$-derivation and consider $\beta\in[0,\infty)$.
	A state $\omega\in S(\mathfrak{A})$ is a $(\beta,\delta)$-KMS classical state if and only if
	\begin{align}\label{Eq: classical auto-correlation lower bound}
		-i\beta\omega(\mathfrak{a}^*\delta(\mathfrak{a}))
		\geq
		-i\omega(\{\mathfrak{a},\mathfrak{a}^*\})\,,
	\end{align}
	for all $\mathfrak{a}\in\dot{\mathfrak{A}}$.
\end{lemma}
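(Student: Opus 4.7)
The plan is to handle the forward direction by direct inspection and to tackle the converse by a polarization argument preceded by a reduction of the a priori quadratic inequality to an identity. For the forward direction, specializing the $(\beta,\delta)$-KMS identity $\omega(\{\mathfrak{a},\mathfrak{b}\}) = \beta\omega(\mathfrak{b}\delta(\mathfrak{a}))$ to $\mathfrak{b}=\mathfrak{a}^*$ gives $\omega(\{\mathfrak{a},\mathfrak{a}^*\}) = \beta\omega(\mathfrak{a}^*\delta(\mathfrak{a}))$, which after multiplication by $-i$ is precisely \eqref{Eq: classical auto-correlation lower bound}, saturated as an equality.

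For the converse direction I would introduce the bilinear form
\[ K(\mathfrak{a},\mathfrak{b}) := \omega(\{\mathfrak{a},\mathfrak{b}\}) - \beta\omega(\mathfrak{b}\delta(\mathfrak{a}))\, , \]
in terms of which the defect $F(\mathfrak{a}) := -i\beta\omega(\mathfrak{a}^*\delta(\mathfrak{a})) + i\omega(\{\mathfrak{a},\mathfrak{a}^*\})$ coincides with $iK(\mathfrak{a},\mathfrak{a}^*)$, and the hypothesis reads $F(\mathfrak{a}) \geq 0$ for every $\mathfrak{a} \in \dot{\mathfrak{A}}$. Combining that $\omega$ is a state, that $\delta$ is a $*$-derivation, that the Poisson bracket satisfies $\{\mathfrak{a},\mathfrak{b}\}^* = \{\mathfrak{a}^*,\mathfrak{b}^*\}$ and is antisymmetric, and that $\mathfrak{A}$ is commutative (so $\delta(\mathfrak{a}^*)\mathfrak{a} = \mathfrak{a}\delta(\mathfrak{a}^*)$), a direct computation yields the identity $\overline{F(\mathfrak{a})} = -F(\mathfrak{a}^*)$. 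The reality implicit in $F(\mathfrak{a}) \geq 0$ then turns this into $F(\mathfrak{a}) + F(\mathfrak{a}^*) = 0$, and since both summands are non-negative one concludes $F(\mathfrak{a}) = 0$ for every $\mathfrak{a} \in \dot{\mathfrak{A}}$.

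The last step is a standard polarization. Substituting $\mathfrak{a} + t\mathfrak{b}$ with $t \in \mathbb{C}$ and expanding $K$ bilinearly gives
\[ 0 = F(\mathfrak{a}+t\mathfrak{b}) = i\bar{t}\, K(\mathfrak{a},\mathfrak{b}^*) + it\, K(\mathfrak{b},\mathfrak{a}^*) \]
for every $t \in \mathbb{C}$. Evaluating at $t = 1$ and $t = i$ produces an invertible linear system whose only solution is $K(\mathfrak{a},\mathfrak{b}^*) = K(\mathfrak{b},\mathfrak{a}^*) = 0$; since $\mathfrak{b}^*$ ranges over $\dot{\mathfrak{A}}$ as $\mathfrak{b}$ does, this is the full $(\beta,\delta)$-KMS classical identity. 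The main obstacle I anticipate is the conjugation identity $\overline{F(\mathfrak{a})} = -F(\mathfrak{a}^*)$: although short to write, it packages all three algebraic ingredients (commutativity, $*$-compatibility of $\{\,,\,\}$, and the $*$-derivation property of $\delta$) and, together with the implicit realness requirement in the inequality, is what collapses a nominally quadratic constraint to the restricted identity $F \equiv 0$ needed before the polarization step, which is then purely formal.
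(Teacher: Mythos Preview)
Your proof is correct and takes a slightly different (and somewhat cleaner) route than the paper in the converse direction. The paper first establishes the intermediate invariance property $\omega\circ\delta=0$ by restricting to self-adjoint elements: for $\mathfrak{a}=\mathfrak{a}^*$ the bracket term vanishes, forcing $-i\omega(\mathfrak{a}\delta(\mathfrak{a}))\in\mathbb{R}_{\geq 0}$, hence $\omega(\delta(\mathfrak{a}^2))=\omega(\mathfrak{a}^*\delta(\mathfrak{a}))+\overline{\omega(\mathfrak{a}^*\delta(\mathfrak{a}))}=0$, and then one appeals to the fact that positive elements span $\dot{\mathfrak{A}}$. This invariance is then used to rewrite $\omega(\mathfrak{a}\delta(\mathfrak{a}^*))=-\omega(\mathfrak{a}^*\delta(\mathfrak{a}))$, which turns the inequality for $\mathfrak{a}^*$ into the reverse inequality for $\mathfrak{a}$, forcing equality; the polarization step is essentially the same as yours. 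Your conjugation identity $\overline{F(\mathfrak{a})}=-F(\mathfrak{a}^*)$ packages the same cancellation more directly, using only the $*$-compatibility of $\omega$, $\delta$ and $\{\,,\,\}$ together with commutativity, and thereby bypasses the detour through $\omega\circ\delta=0$. The paper's route has the minor advantage of isolating $\delta$-invariance as an explicit byproduct (mirroring the quantum argument in \cite[Lem.~5.3.16]{Bratteli_Robinson_97}); your route is shorter and avoids invoking that squares of self-adjoint elements generate $\dot{\mathfrak{A}}$.
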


We will call inequality \eqref{Eq: classical auto-correlation lower bound} the \textbf{classical auto-correlation lower bound}, \textit{cf.} Remark \ref{Rmk: quantum auto-correlation lower bound}.

\begin{proof}[Proof of Lemma \ref{Lem: classical auto-correlation lower bound}]
	If $\omega\in S(\mathfrak{A})$ is a $(\beta,\delta)$-KMS classical state then it also fulfils the classical auto-correlation lower bound \eqref{Eq: classical auto-correlation lower bound}: Indeed, by direct inspection
	\begin{align*}
		-i\omega(\{\mathfrak{a},\mathfrak{a}^*\})
		=-i\beta\omega(\mathfrak{a}^*\delta(\mathfrak{a}))\,.
	\end{align*}
	Conversely, let $\omega\in S(\mathfrak{A})$ be such that \eqref{Eq: classical auto-correlation lower bound} is fulfilled: We will prove that $\omega$ fulfils the $(\beta,\delta)$-KMS classical condition \eqref{Eq: KMS condition - classical on abstract C* algebra}.
	
	To begin with we observe that the classical auto-correlation lower bound implies in particular that $-i\omega(\mathfrak{a}^*\delta(\mathfrak{a}))\geq 0$ for all $\mathfrak{a}=\mathfrak{a}^*\in\dot{\mathfrak{A}}$: In particular $-i\omega(\mathfrak{a}^*\delta(\mathfrak{a}))\in\mathbb{R}$ for all $\mathfrak{a}=\mathfrak{a}^*\in\dot{\mathfrak{A}}$.
	This implies that $\omega(\delta(\mathfrak{a}))=0$ for all $\mathfrak{a}\in\dot{\mathfrak{A}}$: The proof is a classical counterpart of \cite[Lem. 5.3.16]{Bratteli_Robinson_97} and will be reviewed for the sake of clarity.
	In particular for $\mathfrak{a}=\mathfrak{a}^*\in\dot{\mathfrak{A}}$ we find
	\begin{align*}
		\omega(\delta(\mathfrak{a}^2))
		=\omega(\mathfrak{a}^*\delta(\mathfrak{a}))
		+\overline{\omega(\mathfrak{a}^*\delta(\mathfrak{a}))}
		=0\,.
	\end{align*}
	Thus $\omega(\delta(\mathfrak{a}))=0$ on all positive elements $\mathfrak{a}\in\dot{\mathfrak{A}}$: Since finite linear combinations of the latter elements generate $\dot{\mathfrak{A}}$ we conclude that $\omega\circ\delta=0$.
	
	Thus, $\omega(\delta(\mathfrak{a}))=0$ for all $\mathfrak{a}\in\dot{\mathfrak{A}}$.
	Evaluating condition \eqref{Eq: classical auto-correlation lower bound} for $\mathfrak{a}^*\in\dot{\mathfrak{A}}$ we find
	\begin{align*}
		-i\beta\omega(\mathfrak{a}\delta(\mathfrak{a}^*))
		\geq
		-i\omega(\{\mathfrak{a}^*,\mathfrak{a}\})
		=i\omega(\{\mathfrak{a},\mathfrak{a}^*\})
		\geq
		i\beta\omega(\mathfrak{a}^*\delta(\mathfrak{a}))
		=-i\beta\omega(\mathfrak{a}\delta(\mathfrak{a}^*))\,.
	\end{align*}
	where we used that $\omega(\delta(\mathfrak{a}\mathfrak{a}^*))=0$ together with the fact that $\mathfrak{A}$ is commutative.
	Thus, all inequalities must be equalities and we find
	\begin{align}\label{Eq: classical auto-correlation lower bound - equality}
		\omega(\{\mathfrak{a},\mathfrak{a}^*\})
		=\beta\omega(\mathfrak{a}^*\delta(\mathfrak{a}))\,.
	\end{align}
	Let now $\mathfrak{a},\mathfrak{b}\in\dot{\mathfrak{A}}$: Evaluation of \eqref{Eq: classical auto-correlation lower bound - equality} for $\mathfrak{a}+\mathfrak{b}$ leads to
	\begin{multline*}
		\cancel{\omega(\{\mathfrak{a},\mathfrak{a}^*\})}
		+\omega(\{\mathfrak{b},\mathfrak{a}^*\})
		+\omega(\{\mathfrak{a},\mathfrak{b}^*\})
		+\cancel{\omega(\{\mathfrak{b},\mathfrak{b}^*\})}
		\\
		=\omega(\{\mathfrak{a}+\mathfrak{b},\mathfrak{a}^*+\mathfrak{b}^*\})
		=\beta\omega((\mathfrak{a}+\mathfrak{b})^*\delta(\mathfrak{a}+\mathfrak{b}))
		\\
		=\cancel{\beta\omega((\mathfrak{a}^*\delta(\mathfrak{a}))}
		+\beta\omega(\mathfrak{b}^*\delta(\mathfrak{a}))
		+\beta\omega(\mathfrak{a}^*\delta(\mathfrak{b}))
		+\cancel{\beta\omega(\mathfrak{b}^*\delta(\mathfrak{b}))}\,.
	\end{multline*}
	Equating the terms linear in $\mathfrak{a}$ we find the $(\beta,\delta)$-KMS classical condition.
\end{proof}

\begin{remark}\label{Rmk: quantum auto-correlation lower bound}
	The classical auto-correlation lower bound \eqref{Eq: classical auto-correlation lower bound} is a classical analogue of the quantum auto-correlation lower bound, \textit{cf.} \cite[Thm. 5.3.15]{Bratteli_Robinson_97}.
	To state the latter let $\mathfrak{A}$ be a non-commutative $C^*$-algebra and let $\tau$ be a strongly continuous one-parameter group of $*$-automorphisms on $\mathfrak{A}$ with infinitesimal generator $\delta$.
	Then $\omega\in S(\mathfrak{A})$ is a $(\beta,\delta)$-KMS quantum state if and only if
	\begin{align}\label{Eq: quantum auto-correlation lower bound}
		-i\beta\omega\big(\mathfrak{a}^*\delta(\mathfrak{a})\big)
		\geq\omega(\mathfrak{a}^*\mathfrak{a})\log\bigg(
		\frac{\omega(\mathfrak{a}\mathfrak{a}^*)}{\omega(\mathfrak{a}^*\mathfrak{a})}
		\bigg)\,,
	\end{align}
	for all $\mathfrak{a}$ in the domain of $\delta$.
	In the latter inequality the function $u,v\mapsto u\log(u/v)$ is defined by
	\begin{align*}
		u\log(u/v):=
		\begin{dcases}
			u\log(u/v)
			& uv>0
			\\
			0
			& u=0\,,\,v>0
			\\
			+\infty
			& u>0\,,v=0
		\end{dcases}
	\end{align*}
	Notice that the quantum auto-correlation lower bound \eqref{Eq: quantum auto-correlation lower bound} trivialises to $\delta$-invariance ---\textit{i.e.} $\omega\circ\delta=0$--- if $\mathfrak{A}$ is commutative.
	As we will see, \textit{cf.} the proof of Theorem \ref{Thm: limit points of quantum KMS in oo volume are classical KMS in oo volume}, the quantum auto-correlation lower bound \eqref{Eq: quantum auto-correlation lower bound} reduces to the classical auto-correlation lower bond \eqref{Eq: classical auto-correlation lower bound} only for suitably scaled $*$-derivations.
\end{remark}

We now move to the discussion of the semiclassical limit of $\delta_j^\Gamma$.
To this avail we observe that the DGR condition \eqref{Eq: Dirac-Groenewold-Rieffel condition} implies
\begin{align}\label{Eq: classical limit of local dynamics}
	\lim_{j\to\infty}\Big\|h_j^{-1}\delta_j^\Lambda(Q_j^\Lambda(a_\Lambda))- Q_j^\Lambda(\delta_\infty^\Lambda(a_\Lambda))\Big\|_{B_j^\Lambda}
	=0
	\qquad
	\forall a_\Lambda\in\dot{B}_\infty^\Lambda\,.
\end{align}
The following lemma proves that the same property holds in the thermodynamical limit.

\begin{lemma}\label{Lem: classical limit of local dynamics in oo-volume}
	For all $a_\Lambda\in\dot{B}_\infty^\Lambda\subset\dot{B}_\infty^\Gamma$ it holds
	\begin{align}\label{Eq: classical limit of local dynamics in oo-volume}
		\lim_{j\to\infty}\Big\|h_j^{-1}\delta_j^\Gamma (Q_j^\Gamma(a_\Lambda))
		-Q_j^\Gamma(\delta_\infty^\Gamma(a_\Lambda))\Big\|_{B_j^\Gamma}
		=0\,.
	\end{align}
\end{lemma}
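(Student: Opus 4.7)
The plan is to write the difference of interest as an explicit infinite sum over finite subsets $X\Subset\Gamma$ meeting $\Lambda$ and to split this sum into a finite bulk and a convergent tail. Explicitly, by the very definitions of $\delta_j^\Gamma$, $\delta_\infty^\Gamma$ and the identification $Q_j^\Gamma(\varphi_X)=\Phi_{j,X}$,
\begin{align*}
	h_j^{-1}\delta_j^\Gamma(Q_j^\Gamma(a_\Lambda)) - Q_j^\Gamma(\delta_\infty^\Gamma(a_\Lambda))
	=\sum_{\substack{X\Subset\Gamma\\X\cap\Lambda\neq\varnothing}}\!\!\Big(ih_j^{-1}[Q_j^\Gamma(\varphi_X),Q_j^\Gamma(a_\Lambda)]-Q_j^\Gamma(\{a_\Lambda,\varphi_X\}_{B_\infty^\Gamma})\Big).
\end{align*}
Fix $\varepsilon>0$. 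Thanks to the summability assumption \eqref{Eq: classical potential - condition for semiclassical limit}, I would choose a finite collection $\mathcal{F}_\varepsilon$ of such subsets $X$ for which the tail satisfies $\sum_{X\cap\Lambda\neq\varnothing,\,X\notin\mathcal{F}_\varepsilon}\|\varphi_X\|_{C^1(\mathbb{S}^2_X)}<\varepsilon$.

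On the bulk, the sum over $\mathcal{F}_\varepsilon$ is finite, and for each $X\in\mathcal{F}_\varepsilon$ the local-to-global compatibility \eqref{Eq: Q, varphi interplay} allows one to replace $Q_j^\Gamma$ by $Q_j^{\Lambda\cup X}$; the single-region DGR condition \eqref{Eq: Dirac-Groenewold-Rieffel condition} then gives term-by-term convergence to $0$, so the bulk contribution is $<\varepsilon$ for $j$ large enough. For the tail, what is required is a uniform-in-$j$ estimate
\begin{align*}
	\big\|ih_j^{-1}[Q_j^\Gamma(\varphi_X),Q_j^\Gamma(a_\Lambda)]\big\|_{B_j^\Gamma}
	+\big\|Q_j^\Gamma(\{a_\Lambda,\varphi_X\}_{B_\infty^\Gamma})\big\|_{B_j^\Gamma}
	\leq C(a_\Lambda)\,\|\varphi_X\|_{C^1(\mathbb{S}^2_X)},
\end{align*}
combined with which the choice of $\mathcal{F}_\varepsilon$ forces the tail contribution to be $\leq C(a_\Lambda)\varepsilon$. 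The Poisson-bracket summand is elementary (using $\|Q_j^\Gamma\|\leq 1$ together with the Leibniz rule, noting that $\{a_\Lambda,\varphi_X\}_{B_\infty^\Gamma}$ is supported on $X\cap\Lambda$ by \eqref{Eq: Poisson structures - reduction property}); the commutator summand is the substantive point.

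The main obstacle is precisely this uniform commutator estimate, since the DGR condition \eqref{Eq: Dirac-Groenewold-Rieffel condition} is only a qualitative pointwise statement and a priori does not rule out that the implicit constants grow with $j$. I would establish it by leveraging the finer structure of the Berezin quantization recorded in Remark \ref{Rmk: Berezin quantization useful remark}: the explicit spherical-harmonic formula \eqref{Eq: Berezin SDQ - quantization of spherical harmonics} combined with the $SU(2)$-equivariance \eqref{Eq: Berezin SDQ - left-action representation intertwining property} yield, on a single site, a quantitative Berezin-type bound $\|h_j^{-1}[Q_j(f),Q_j(g)]\|_{B_j}\leq C\|f\|_{C^1}\|g\|_{C^1}$ uniformly in $j$. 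Passing to the tensor product $Q_j^{\Lambda\cup X}$ via a standard Leibniz argument, and using that the commutator is supported on the overlap $X\cap\Lambda$, yields the desired bound and concludes the argument.
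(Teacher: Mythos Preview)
Your overall architecture---reduce to a sum over $X$ meeting $\Lambda$, split into a finite bulk handled termwise by the DGR condition, and a tail controlled by a $j$-uniform bound $C(a_\Lambda)\|\varphi_X\|_{C^1}$---is exactly the skeleton the paper uses (phrased there as dominated convergence rather than bulk/tail, which is equivalent). The substantive content lies entirely in the uniform commutator estimate, and here your route and the paper's diverge.

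The paper does \emph{not} attempt any quantitative Berezin calculus. Instead, for fixed $a_\Lambda$ it packages the map $b_\Lambda\mapsto ih_j^{-1}[Q_j^\Lambda(a_\Lambda),Q_j^\Lambda(b_\Lambda)]-Q_j^\Lambda(\{a_\Lambda,b_\Lambda\})$ as a family of bounded operators $T_j^\Lambda\colon C^1(\mathbb{S}^2_\Lambda)\to B_j^\Lambda$, observes that the DGR condition gives pointwise-in-$b_\Lambda$ boundedness of $\sup_j\|T_j^\Lambda(b_\Lambda)\|$, and invokes Banach--Steinhaus to get $\sup_j\|T_j^\Lambda\|<\infty$. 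The extension from $C^1(\mathbb{S}^2_\Lambda)$ to general $\varphi_X$ (which is \emph{not} factorized and lives on a larger region $X$) is then done via the injective tensor product $C^1(\mathbb{S}^2_\Lambda)\otimes_\varepsilon B_\infty^{\Lambda^c}$ together with the bound $\|T_j^\Lambda\otimes_\varepsilon Q_j^{\Lambda^c}\|=\|T_j^\Lambda\|$. This is the entire mechanism; no spherical harmonics, no explicit Clebsch--Gordan algebra.

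Your proposed route---a direct single-site bound $\|h_j^{-1}[Q_j(f),Q_j(g)]\|\leq C\|f\|_{C^1}\|g\|_{C^1}$ from the explicit formula \eqref{Eq: Berezin SDQ - quantization of spherical harmonics}, then a ``standard Leibniz argument'' to reach $\mathbb{S}^2_{\Lambda\cup X}$---is plausible but under-specified on two points. First, a harmonic expansion of $\varphi_X$ with coefficients controlled only by the $C^1$-norm does not obviously give summable bounds; the explicit estimates in the paper (Section~\ref{Sec: Common absence of CPTs and of QPTs}) need $C^{2s}$ with $s>7/4$, so your direct computation would likely force a stronger regularity hypothesis than \eqref{Eq: classical potential - condition for semiclassical limit} actually assumes. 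Second, the Leibniz step is straightforward only for factorized tensors; passing to a general $\varphi_X\in C^1(\mathbb{S}^2_X)$ with a bound in terms of $\|\varphi_X\|_{C^1(\mathbb{S}^2_X)}$ (rather than a projective-type norm) is precisely the content of the injective tensor product argument the paper supplies, and is not a routine consequence of Leibniz. The Banach--Steinhaus trick is what buys you the $C^1$ bound without any of this work.
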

\begin{proof}
	Let $a_\Lambda\in\dot{B}_\infty^\Lambda$.
	According to Definition \eqref{Eq: SDQ for oo spin system - quantization map} we have
	\begin{align*}
		\delta_j^\Gamma\big(Q_j^\Gamma(a_\Lambda)\big)
		=\delta_j^\Gamma\big(Q_j^\Lambda(a_\Lambda)\big)
		=\frac{1}{i}\sum_{X\cap\Lambda\neq\varnothing}
		\big[Q_j^\Lambda(a_\Lambda),Q_j^X(\varphi_X)\big]\,.
	\end{align*}
	Similarly, the continuity of $Q_j^\Gamma\colon\dot{B}_\infty^\Gamma\to\dot{B}_j^\Gamma$ implies
	\begin{align*}
		Q_j^\Gamma[\delta_\infty^\Gamma(a_\Lambda)]
		=Q_j^\Gamma\Big(
		\sum_{X\cap\Lambda\neq\varnothing}\{a_\Lambda,\varphi_X\}_{B_\infty^\Lambda}
		\Big)
		=\sum_{X\cap\Lambda\neq\varnothing}
		Q_j^{\Lambda\cup X}\big(
		\{a_\Lambda,\varphi_X\}_{B_\infty^\Lambda}\big)\,.
	\end{align*}
	Overall we have
	\begin{multline*}
		\Big\|h_j^{-1}\delta_j^\Gamma\big(Q_j^\Gamma(a_\Lambda)\big)
		-Q_j^\Gamma\big(\delta_\infty^\Gamma(a_\Lambda)\big)
		\Big\|_{B_j^\Gamma}
		\\
		\leq\sum_{X\cap\Lambda\neq\varnothing}\Big\|
		\frac{1}{ih_j}[Q_j^\Lambda (a_\Lambda),Q_j^X(\varphi_X)]
		-Q_j^{\Lambda\cup X}\big(\{ a_\Lambda,\varphi_X\}_{B_\infty^\Lambda}\big)
		\Big\|_{B_j^{\Lambda\cup X}}\,.
	\end{multline*}
	For each $j\in\mathbb{Z}_+/2$ the series
	\begin{align*}
		\sum_{X\cap\Lambda\neq\varnothing}\Big\|
		\frac{1}{ih_j}[Q_j^\Lambda (a_\Lambda),Q_j^X(\varphi_X)]
		-Q_j^{\Lambda\cup X}\big(\{ a_\Lambda,\varphi_X\}_{B_\infty^\Lambda}\big)
		\Big\|_{B_j^{\Lambda\cup X}}\,,
	\end{align*}
	converges on account of condition \eqref{Eq: classical potential - condition for semiclassical limit}.
        We will now prove that it vanishes in the limit $j\to\infty$.
        Notice that each term of the series vanishes as $j\to\infty$ on account of the DGR condition \eqref{Eq: Dirac-Groenewold-Rieffel condition}.
        We will now prove that each term can be bounded as
        \begin{align*}
            \Big\|\frac{1}{ih_j}[Q_j^\Lambda (a_\Lambda),Q_j^X(\varphi_X)]
		-Q_j^{\Lambda\cup X}\big(\{ a_\Lambda,\varphi_X\}_{B_\infty^\Lambda}\big)
		\Big\|_{B_j^{\Lambda\cup X}}
            \leq c_\Lambda\|\varphi_X\|_{C^1(\mathbb{S}^2_X)}\,,
        \end{align*}
        for a $j$-independent constant $c_\Lambda>0$.
        This will allows to apply dominated convergence, concluding the proof.

        To this avail we consider the Banach space
        \begin{align*}
            C^0(\mathbb{S}^2_{\Lambda^c},C^1(\mathbb{S}^2_\Lambda))
            =C^1(\mathbb{S}^2_\Lambda)
            \otimes_\varepsilon
            B_\infty^{\Lambda^c}\,,
            \qquad
            \|f\|_{C^1(\mathbb{S}^2_\Lambda)
            \otimes_\varepsilon
            B_\infty^{\Lambda^c}}
            :=\sup_{\sigma_{\Lambda^c}\in\mathbb{S}^2_{\Lambda^c}}\|f(\sigma_{\Lambda^c})\|_{C^1(\mathbb{S}^2_\Lambda)}\,.
        \end{align*}
        where $\otimes_\varepsilon$ denotes the injective tensor product of Banach spaces, \textit{cf.} \cite[\S 3.2]{Ryan_2002}.
        Let $T_j\colon\dot{B}_\infty^\Gamma\to\dot{B}_j^\Gamma$ be the linear operator defined by
        \begin{align*}
            T_j(b_X)
            :=\frac{1}{ih_j}[Q_j^\Lambda (a_\Lambda),Q_j^X(b_X)]
		-Q_j^{\Lambda\cup X}\big(\{ a_\Lambda,b_X\}_{B_\infty^\Lambda}\big)\,,
        \end{align*}
        for all $b_X\in\dot{B}_\infty^X\subset\dot{B}_\infty^\Gamma$, $X\Subset\Gamma$.
        By direct inspection we find
        \begin{align*}
            \|T_j(b_X)\|_{B_j^\Gamma}
            \leq c_\Lambda\|b_X\|_{C^1(\mathbb{S}^2_\Lambda)\otimes_\varepsilon B_\infty^{\Lambda^c}}\,,
        \end{align*}
        which implies that $T_j$ has a unique extension $T_j\colon C^1(\mathbb{S}^2_\Lambda)\otimes_\varepsilon B_\infty^{\Lambda^c}\to B_j^\Gamma$.
        Notice that, the DGR condition \eqref{Eq: Dirac-Groenewold-Rieffel condition} entails
        $\|T_j(b_X)\|_{B_j^\Gamma}\underset{j\to\infty}{\longrightarrow}0$ for all $b_X\in\dot{B}_\infty^X$, $X\Subset\Gamma$.
        Moreover,
        \begin{align*}
            \|T_j(b_X)\|_{B_j^\Gamma}
            \leq\|T_j\|_{C^1(\mathbb{S}^2_\Lambda)\otimes_\varepsilon B_\infty^{\Lambda^c}\to B_j^\Gamma}
            \|b_X\|_{C^1(\mathbb{S}^2_\Lambda)\otimes_\varepsilon B_\infty^{\Lambda^c}}
            \leq\|T_j\|_{C^1(\mathbb{S}^2_\Lambda)\otimes_\varepsilon B_\infty^{\Lambda^c}\to B_j^\Gamma}
            \|b_X\|_{C^1(\mathbb{S}^2_X)}\,,
        \end{align*}
        where $\|T_j\|_{C^1(\mathbb{S}^2_\Lambda)\otimes_\varepsilon B_\infty^{\Lambda^c}\to B_j^\Gamma}$ denotes the operator norm of $T_j$.
        Thus, if we were able to prove that $\sup_j \|T_j\|_{C^1(\mathbb{S}^2_\Lambda)\otimes_\varepsilon B_\infty^{\Lambda^c}\to B_j^\Gamma}<\infty$ then condition \eqref{Eq: classical potential - condition for semiclassical limit} would entail that
        \begin{align*}
            \sum_{X\cap\Lambda\neq\varnothing}\Big\|
		\frac{1}{ih_j}[Q_j^\Lambda (a_\Lambda),Q_j^X(\varphi_X)]
		-Q_j^{\Lambda\cup X}\big(\{ a_\Lambda,\varphi_X\}_{B_\infty^\Lambda}\big)
		\Big\|_{B_j^{\Lambda\cup X}}
            =\sum_{X\cap\Lambda\neq\varnothing}\|T_j(\varphi_X)\|_{B_j^\Lambda}\,,
        \end{align*}
        converges, moreover, it vanishes as $j\to\infty$ by dominated convergence, concluding the proof.

        Thus, we prove that $\sup_j \|T_j\|_{C^1(\mathbb{S}^2_\Lambda)\otimes_\varepsilon B_\infty^{\Lambda^c}\to B_j^\Gamma}<\infty$.
        For that let us consider the linear map
        \begin{align*}
            T_j^\Lambda\colon C^1(\mathbb{S}^2_\Lambda)\to B_j^\Lambda
            \qquad
            T_j^\Lambda(b_\Lambda)
            :=\frac{1}{ih_j}[Q_j^\Lambda (a_\Lambda),Q_j^\Lambda(b_\Lambda)]
		  -Q_j^\Lambda\big(\{ a_\Lambda,b_\Lambda\}_{B_\infty^\Lambda}\big)\,.
        \end{align*}
        We then "lift" $T_j^\Lambda$ to
        \begin{align*}
            \hat{T}_j^\Lambda\colon C^1(\mathbb{S}^2_\Lambda)
            \to\prod_{p\in\mathbb{Z}_+/2}B_p^\Lambda\,,
            \qquad
            [\hat{T}_j^\Lambda(b_\Lambda)]_p:=
            \begin{dcases}
                T_j^\Lambda(b_\Lambda)
                & p=j
                \\
                0
                & p\neq j
            \end{dcases}\,,
        \end{align*}
 	where $\prod_{p\in\mathbb{Z}_+/2}B_p^\Lambda$ denotes the full $C^*$-direct product, \textit{cf.} \cite{Blackadar_2006}.
        By direct inspection $\hat{T}_j^\Lambda$ is linear and bounded, moreover, the DGR condition \eqref{Eq: Dirac-Groenewold-Rieffel condition} implies that
        \begin{align*}
    	\sup_{j\in\mathbb{Z}_+/2}
    	\big\|\hat{T}_j^\Lambda(b_\Lambda)\big\|_{\prod_{p\in\mathbb{Z}_+/2}B_p^\Lambda}
            =\sup_{j\in\mathbb{Z}_+/2}\sup_{p\in\mathbb{Z}_+/2}
            \|[\hat{T}_j^\Lambda(b_\Lambda)]_p\|_{B_p^\Lambda}
            =\sup_{j\in\mathbb{Z}_+/2}\|T_j^\Lambda(b_\Lambda)\|_{B_j^\Lambda}
            <\infty\,,
        \end{align*}
        for all $b_\Lambda\in C^1(\mathbb{S}^2_\Lambda)$.
        By Banach-Steinhaus Theorem it follows that
        \begin{align*}
    	\sup_{j\in\mathbb{Z}_+/2}\big\|\hat{T}_j^\Lambda\big\|_{C^1(\mathbb{S}^2_\Lambda)\to\prod_{p\in\mathbb{Z}_+/2}B_p^\Lambda}
       	<\infty\,.
        \end{align*}
        We then consider the inductive tensor product of $\hat{T}_j^\Lambda\colon C^1(\mathbb{S}^2_\Lambda)\to\prod_{p\in\mathbb{Z}_+/2}B_p^\Lambda$ with $Q_j^{\Lambda^c}\colon B_\infty^{\Lambda^c}\to B_j^{\Lambda^c}$, \textit{cf.} \cite[Prop. 3.2]{Ryan_2002}.
        The latter is the unique bounded linear map
        \begin{align*}
            \hat{T}_j^\Lambda\otimes_\varepsilon Q_j^{\Lambda^c}\colon
            C^1(\mathbb{S}^2_\Lambda)
            \otimes_\varepsilon B_\infty^{\Lambda^c}
            \to \prod_{p\in\mathbb{Z}_+/2}B_p^\Lambda
            \otimes
            B_j^{\Lambda^c}\,,
        \end{align*}
        which extends the algebraic tensor product $\hat{T}_j^\Lambda\otimes Q_j^{\Lambda^c}$, that is,
        \begin{align*}
            (\hat{T}_j^\Lambda\otimes_\varepsilon Q_j^{\Lambda^c})(b_\Lambda\otimes b_{\Lambda^c})
            =\hat{T}_j^\Lambda(b_\Lambda)\otimes Q_j^{\Lambda^c}(b_{\Lambda^c})\,,
        \end{align*}
        for all $b_\Lambda\in C^1(\mathbb{S}^2_\Lambda)$ and $b_{\Lambda^c}\in B_\infty^{\Lambda^c}$.
        Notice that $\prod_{p\in\mathbb{Z}_+/2}B_p^\Lambda\otimes B_j^{\Lambda^c}$ is a $C^*$-algebra once completed with its unique $C^*$-cross norm.
        The uniqueness of such norm is due to the fact that $B_j^{\Lambda^c}$ is the $C^*$-inductive limit of finite dimensional, hence nuclear, $C^*$-algebras, therefore, it is nuclear as well, \textit{cf.} \cite[\S II.9.4.5]{Blackadar_2006}.
        We recall that
        \begin{align*}
            \|\hat{T}_j^\Lambda\otimes_\varepsilon Q_j^{\Lambda^c}\|_{C^1(\mathbb{S}^2_\Lambda)
            \otimes_\varepsilon B_\infty^{\Lambda^c}
            \to \prod_{p\in\mathbb{Z}_+/2}B_p^\Lambda
            \otimes
            B_j^{\Lambda^c}}
            &=
            \|\hat{T}_j^\Lambda\|_{C^1(\mathbb{S}^2_\Lambda)\to \prod_{p\in\mathbb{Z}_+/2}B_p^\Lambda}
            \|Q_j^{\Lambda^c}\|_{B_\infty^{\Lambda^c}\to B_j^{\Lambda^c}}
            \\
            &=\|\hat{T}_j^\Lambda\|_{C^1(\mathbb{S}^2_\Lambda)\to \prod_{p\in\mathbb{Z}_+/2}B_p^\Lambda}\,,
        \end{align*}
        therefore, $\sup_j \|\hat{T}_j^\Lambda\otimes_\varepsilon Q_j^{\Lambda^c}\|_{C^1(\mathbb{S}^2_\Lambda)
        \otimes_\varepsilon B_\infty^{\Lambda^c}
        \to \prod_{p\in\mathbb{Z}_+/2}B_p^\Lambda
        \otimes
        B_j^{\Lambda^c}}<\infty$.
        Finally, let
        \begin{align*}
            \hat{T}_j\colon C^1(\mathbb{S}^2_\Lambda)
            \otimes_\varepsilon B_\infty^{\Lambda^c}
            \to\prod_{p\in\mathbb{Z}_+/2}B_p^\Lambda
            \otimes B_j^{\Lambda^c}
            \qquad
            [\hat{T}_j(f)]_p:=
            \begin{cases}
                T_j(f)
                & p=j
                \\
                0
                & p\neq j
            \end{cases}\,.
        \end{align*}
        Then $\hat{T}_j$ is linear and continuous, moreover, by direct inspection
        \begin{align*}
            \hat{T}_j(b_\Lambda\otimes b_{\Lambda^c})
            =\hat{T}_j^\Lambda(b_\Lambda)
            \otimes Q_j^{\Lambda^c}(b_{\Lambda^c})\,.
        \end{align*}
        It follows that $\hat{T}_j=\hat{T}_j^\Lambda\otimes_\varepsilon Q_j^{\Lambda^c}$ and therefore
        \begin{align*}
            \|T_j\|_{C^1(\mathbb{S}^2_\Lambda)
            \otimes_\varepsilon B_\infty^{\Lambda^c}
            \to B_j^\Gamma}
            &=\|\hat{T}_j\|_{C^1(\mathbb{S}^2_\Lambda)
            \otimes_\varepsilon B_\infty^{\Lambda^c}
            \to\prod_{p\in\mathbb{Z}_+/2}B_p^\Lambda
            \otimes B_j^{\Lambda^c}}
            \\
            &=\|\hat{T}_j^\Lambda\otimes_\varepsilon Q_j^{\Lambda^c}\|_{C^1(\mathbb{S}^2_\Lambda)
            \otimes_\varepsilon B_\infty^{\Lambda^c}
            \to \prod_{p\in\mathbb{Z}_+/2}B_p^\Lambda
            \otimes
            B_j^{\Lambda^c}}
            \\
            &=\|\hat{T}_j^\Lambda\|_{C^1(\mathbb{S}^2_\Lambda)\to \prod_{p\in\mathbb{Z}_+/2}B_p^\Lambda}\,,
        \end{align*}
        showing that $\sup_j\|T_j\|_{C^1(\mathbb{S}^2_\Lambda)\otimes_\varepsilon B_\infty^{\Lambda^c}}<\infty$.

\end{proof}

We are in position to prove Theorem \ref{Thm: limit points of quantum KMS in oo volume are classical KMS in oo volume}, which guarantees that, for any sequence $(\omega_j^{\beta,\Gamma})_{j\in\mathbb{Z}_+/2}$ of $(\beta,\delta_j^\Gamma)$-KMS quantum states on $B_j^\Gamma$, any weak*-limit point of the sequence $(\omega_j^{\beta,\Gamma}\circ Q_j^\Gamma)_{j\in\mathbb{Z}_+/2}\subset S(B_\infty^\Gamma)$ is a $(\beta,\delta_\infty^\Gamma)$-KMS classical states.
At its core, the proof is based on the observation that the quantum $(\beta,\delta_j^\Gamma)$-KMS condition coincides with the quantum $(\beta/h,h\delta_j)$-KMS condition, $h=2j+1$, together with the application of Lemma \ref{Lem: classical limit of local dynamics in oo-volume} and of the classical auto-correlation lower bound \eqref{Eq: classical auto-correlation lower bound}.

\begin{theorem}
	\label{Thm: limit points of quantum KMS in oo volume are classical KMS in oo volume}
	For all $j\in\mathbb{Z}_+/2$ let $\omega_j^{\beta,\Gamma}\in S(B_j^\Gamma)$ be a $(\beta,\delta_j^\Gamma)$-KMS quantum state on $B_j^\Gamma$.
	Then, any weak*-limit point $\omega_\infty^{\beta,\Gamma}\in S(B_\infty^\Gamma)$ of the sequence $(\omega_j^{\beta,\Gamma}\circ Q_j^\Gamma)_{j\in\mathbb{Z}_+/2}$ is a $(\beta,\delta_\infty^\Gamma)$-KMS classical state.
\end{theorem}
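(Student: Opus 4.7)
The plan is to derive the classical auto-correlation lower bound \eqref{Eq: classical auto-correlation lower bound} for $\omega_\infty^{\beta,\Gamma}$ as the $j\to\infty$ limit of the quantum auto-correlation lower bound \eqref{Eq: quantum auto-correlation lower bound} satisfied by each $\omega_j^{\beta,\Gamma}$; Lemma \ref{Lem: classical auto-correlation lower bound} then identifies $\omega_\infty^{\beta,\Gamma}$ as a $(\beta,\delta_\infty^\Gamma)$-KMS classical state.

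The preliminary observation is that a $(\beta,\delta_j^\Gamma)$-KMS quantum state is equivalently a $(\beta h_j,\,h_j^{-1}\delta_j^\Gamma)$-KMS quantum state, since the KMS condition is invariant under the rescaling $(\beta,\delta)\mapsto(\beta/\lambda,\lambda\delta)$. This reformulation is decisive because the rescaled generator $h_j^{-1}\delta_j^\Gamma$ admits a well-defined classical limit by Lemma \ref{Lem: classical limit of local dynamics in oo-volume}, while the inverse temperature $\beta h_j$ is asymptotically small. Applied to $\omega_j^{\beta,\Gamma}$ with analytic element $A_j:=Q_j^\Gamma(a)$, $a\in\dot{B}_\infty^\Gamma$, the quantum auto-correlation lower bound \eqref{Eq: quantum auto-correlation lower bound} reads
\begin{align*}
-i\beta h_j\, \omega_j^{\beta,\Gamma}\bigl(A_j^*\cdot h_j^{-1}\delta_j^\Gamma(A_j)\bigr)
\;\geq\; \omega_j^{\beta,\Gamma}(A_j^*A_j)\,\log\!\frac{\omega_j^{\beta,\Gamma}(A_jA_j^*)}{\omega_j^{\beta,\Gamma}(A_j^*A_j)}\,,
\end{align*}
and the goal is to divide by $h_j$ and pass to the limit along the subsequence of $j$ for which $\omega_j^{\beta,\Gamma}\circ Q_j^\Gamma\to\omega_\infty^{\beta,\Gamma}$ in the weak*-sense.

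For the left-hand side, Lemma \ref{Lem: classical limit of local dynamics in oo-volume} gives $h_j^{-1}\delta_j^\Gamma(A_j) = Q_j^\Gamma(\delta_\infty^\Gamma(a)) + o_{j\to\infty}(1)$ in $B_j^\Gamma$-norm. Since $B_\ast^\Gamma$ is a continuous bundle of $C^*$-algebras (Proposition \ref{Prop: oo-site continuous bundle of Cstar algebras}) and $(Q_j^\Gamma(\cdot))_j$ is a continuous section of $B_\ast^\Gamma$, the approximate multiplicativity $\|Q_j^\Gamma(a^*)Q_j^\Gamma(b)-Q_j^\Gamma(a^*b)\|_{B_j^\Gamma}\to 0$ follows from the fact that both $(Q_j^\Gamma(a^*)Q_j^\Gamma(b))_j$ and $(Q_j^\Gamma(a^*b))_j$ are continuous sections of $B_\ast^\Gamma$ with the same fiber $a^*b$ at $j=\infty$. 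Combining these two facts with $\|\omega_j^{\beta,\Gamma}\|=1$, the left-hand side divided by $h_j$ converges to $-i\beta\,\omega_\infty^{\beta,\Gamma}(a^*\delta_\infty^\Gamma(a))$. For the right-hand side, set $u_j:=\omega_j^{\beta,\Gamma}(A_jA_j^*)$ and $v_j:=\omega_j^{\beta,\Gamma}(A_j^*A_j)$. The DGR condition \eqref{Eq: Dirac-Groenewold-Rieffel condition} yields $h_j^{-1}[A_j,A_j^*] = -iQ_j^\Gamma(\{a,a^*\}) + o_{j\to\infty}(1)$, hence $(u_j-v_j)/h_j\to -i\,\omega_\infty^{\beta,\Gamma}(\{a,a^*\})$. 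Provided $v_\infty:=\omega_\infty^{\beta,\Gamma}(a^*a)>0$, the Taylor expansion $v_j\log(u_j/v_j) = (u_j-v_j) + O((u_j-v_j)^2/v_j)$ shows that the right-hand side divided by $h_j$ also converges to $-i\,\omega_\infty^{\beta,\Gamma}(\{a,a^*\})$, so passing to the limit produces
\begin{align*}
-i\beta\,\omega_\infty^{\beta,\Gamma}\bigl(a^*\delta_\infty^\Gamma(a)\bigr)\;\geq\; -i\,\omega_\infty^{\beta,\Gamma}(\{a,a^*\})\,,
\end{align*}
which, by Lemma \ref{Lem: classical auto-correlation lower bound}, completes the proof.

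The main technical obstacle will be handling the degenerate case $v_\infty=0$, in which the Taylor remainder $(u_j-v_j)^2/(v_j h_j)$ is not \emph{a priori} negligible. This is addressed by a two-step argument: first, for self-adjoint $a$ one has $A_j^*=A_j$, hence $u_j=v_j$ and the right-hand side vanishes identically, yielding $-i\,\omega_\infty^{\beta,\Gamma}(a\,\delta_\infty^\Gamma(a))\geq 0$; as in the proof of Lemma \ref{Lem: classical auto-correlation lower bound}, this forces $\omega_\infty^{\beta,\Gamma}\circ\delta_\infty^\Gamma=0$. Second, for general $a$ with $v_\infty=0$ the translation $a\mapsto a+c$ with $c\in\mathbb{C}$ chosen so that $\omega_\infty^{\beta,\Gamma}((a+c)^*(a+c))>0$, together with the just-obtained $\delta_\infty^\Gamma$-invariance and the fact that $\delta_\infty^\Gamma$ and the Poisson bracket annihilate constants, reduces the degenerate case to the non-degenerate one.
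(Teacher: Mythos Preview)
Your proof follows exactly the same route as the paper's: rescale to the $(\beta h_j,\,h_j^{-1}\delta_j^\Gamma)$-KMS condition, apply the quantum auto-correlation lower bound \eqref{Eq: quantum auto-correlation lower bound} to $A_j=Q_j^\Gamma(a)$, use Lemma \ref{Lem: classical limit of local dynamics in oo-volume} for the left-hand side and the DGR condition \eqref{Eq: Dirac-Groenewold-Rieffel condition} to expand $h_j^{-1}v_j\log(u_j/v_j)$ on the right, and conclude via Lemma \ref{Lem: classical auto-correlation lower bound}. The paper performs this computation verbatim, writing $h_j^{-1}v_j\log(1+(u_j-v_j)/v_j)\to -i\,\omega_\infty^{\beta,\Gamma}(\{a,a^*\})$ without further comment.

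You are actually more careful than the paper on one point: the paper's expansion of the logarithm tacitly assumes $v_\infty=\omega_\infty^{\beta,\Gamma}(a^*a)>0$, since otherwise $(u_j-v_j)/v_j$ need not tend to zero and the Taylor step is not justified. Your two-step fix---first extract $\omega_\infty^{\beta,\Gamma}\circ\delta_\infty^\Gamma=0$ from self-adjoint $a$ (where $A_j^*=A_j$ forces $u_j=v_j$ and the right-hand side vanishes identically), then translate $a\mapsto a+c$ to reduce the degenerate case to the non-degenerate one---is correct and closes this gap. For the record, when $v_\infty=0$ Cauchy--Schwarz gives $\omega_\infty^{\beta,\Gamma}(a)=0$, so any $c\neq 0$ yields $\omega_\infty^{\beta,\Gamma}((a+c)^*(a+c))=|c|^2>0$; the extra cross-term $-i\beta\bar c\,\omega_\infty^{\beta,\Gamma}(\delta_\infty^\Gamma(a))$ vanishes by the invariance you already established, and $\{a+c,(a+c)^*\}=\{a,a^*\}$.
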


\begin{proof}
	Up to moving to a subsequence in $j$ we may assume
	\begin{align*}
		\omega_\infty^{\beta,\Gamma}(a_\Lambda)
		=\lim_{j\to\infty}
		\omega_j^{\beta,\Gamma}(Q_j^\Lambda(a_\Lambda))
		\qquad\forall a_\Lambda\in B_\infty^\Lambda\subset B_\infty^\Gamma\,.
	\end{align*}
	We will prove that $\omega_\infty^{\beta,\Gamma}$ fulfils the classical $(\beta,\delta_\infty^\Gamma)$-KMS conditions by proving condition \eqref{Eq: classical auto-correlation lower bound}, \textit{cf.} Lemma \ref{Lem: classical auto-correlation lower bound}.
	To this avail let $a_\Lambda\in\dot{B}_\infty^\Lambda$ and consider
	\begin{align*}
		-i\beta\omega_\infty^{\beta,\Gamma}\big(a_\Lambda^*\delta_\infty^\Gamma (a_\Lambda)\big)
		&=\lim_{j\to\infty}(-i\beta)\omega_j^{\beta,\Gamma}\big(Q_j^\Gamma(a_\Lambda^*\delta_\infty^\Gamma(a_\Lambda))\big)
		\\
		&=\lim_{j\to\infty}(-i\beta)\omega_j^{\beta,\Gamma}\big(
		Q_j^\Lambda(a_\Lambda^*)
		Q_j^\Gamma(\delta_\infty^\Gamma(a_\Lambda))\big)
		\\
		&=\lim_{j\to\infty}-i\beta h_j^{-1}\omega_j^{\beta,\Gamma}\big(
		Q_j^\Lambda(a_\Lambda^*)
		\delta_j^\Gamma (Q_j^\Lambda (a_\Lambda))\big)
		&\textrm{Lem. } \ref{Lem: classical limit of local dynamics in oo-volume}
		\,.
	\end{align*}
	At this stage we may apply the quantum auto-correlation lower bound \eqref{Eq: quantum auto-correlation lower bound} for $\omega_j^{\beta,\Gamma}$, \textit{cf.} Remark \ref{Rmk: quantum auto-correlation lower bound}:
	\begin{align*}
		-i\beta\omega_j^{\beta,\Gamma}\big(
		 Q_j^\Lambda(a_\Lambda^*)
		\delta_j^\Gamma ( Q_j^\Lambda(a_\Lambda))\big)
		\geq
		\omega_j^{\beta,\Gamma}\big(
		 Q_j^\Lambda(a_\Lambda^*)
		 Q_j^\Lambda(a_\Lambda)
		\big)
		\log\bigg(
		\frac{\omega_j^{\beta,\Gamma}\big(
			 Q_j^\Lambda(a_\Lambda^*)
			 Q_j^\Lambda(a_\Lambda)
			\big)}
		{\omega_j^{\beta,\Gamma}\big(
			 Q_j^\Lambda(a_\Lambda)
			 Q_j^\Lambda(a_\Lambda^*)
			\big)}
		\bigg)\,.
	\end{align*}
	Thus, the DGR condition \eqref{Eq: Dirac-Groenewold-Rieffel condition} leads to
	\begin{multline*}
		-i\beta\omega_\infty^{\beta,\Gamma}\big(
		 a_\Lambda^*\delta_\infty(a_\Lambda)
		\big)
		\geq-\lim_{j\to\infty}
		\omega_j^{\beta,\Gamma}\big(
		 Q_j^\Lambda(a_\Lambda^*)
		 Q_j^\Lambda(a_\Lambda)
		\big)
		h_j^{-1}\log\bigg(1+
		\frac{\omega_j^{\beta,\Gamma}\big(
			[ Q_j^\Lambda(a_\Lambda),
			 Q_j^\Lambda(a_\Lambda^*)]
			\big)}
		{\omega_j^{\beta,\Gamma}\big(
			 Q_j^\Lambda(a_\Lambda^*)
			 Q_j^\Lambda(a_\Lambda)
			\big)}
		\bigg)
		\\
		=-\lim_{j\to\infty}
		\omega_j^{\beta,\Gamma}\big(
		 Q_j^\Lambda(a_\Lambda^*)
		 Q_j^\Lambda(a_\Lambda)
		\big)
		h_j^{-1}\log\bigg(1
		+\frac{i}{2j+1}\frac{\omega_j^{\beta,\Gamma}\big(
		Q_j^\Lambda(\{a_\Lambda,a_\Lambda^*\}_{B_\infty^\Lambda})
		\big)}
		{\omega_j^{\beta,\Gamma}\big(
			 Q_j^\Lambda(a_\Lambda^*)
			 Q_j^\Lambda(a_\Lambda)
			\big)}
		\bigg)
		\\
		=-i\lim_{j\to\infty}
		\omega_j^{\beta,\Gamma}\big(
		Q_j^\Lambda(\{a_\Lambda,a_\Lambda^*\}_{B_\infty^\Lambda})
		\big)
		=-i\omega_\infty^{\beta,\Gamma}\Big(\{ a_\Lambda, a_\Lambda^*\}_{B_\infty^\Gamma}\Big)\,.
	\end{multline*}
\end{proof}

\begin{remark}\label{Rmk: classical limit of quantum KMS in absence of phase transition}
	Proposition \ref{Thm: limit points of quantum KMS in oo volume are classical KMS in oo volume} has the following important consequence.
	Let $\beta\in[0,\infty)$ be such that there exists a unique $(\beta,\delta_\infty^\Gamma)$-KMS classical state on $B_\infty^\Gamma$, \textit{i.e.}, there are no classical phase transitions at $\beta$.
	In particular, there exists a unique limit point $\omega_\infty^{\beta,\Gamma}\in S(B_j^\Gamma)$ of $(\omega_\infty^{\beta,\Lambda})_{\Lambda\Subset\Gamma}$, \textit{cf.} Remark \ref{Rmk: KMS state on BooGamma seen as weak*-limit points of Gibbs states}.
	Let $(\omega_j^{\beta,\Gamma})_{j\in\mathbb{Z}_+/2}$ be any sequence such that $\omega_j^{\beta,\Gamma}\in S(B_j^\Gamma)$ is a $(\beta,\delta_j^\Gamma)$-KMS quantum state on $S(B_j^\Gamma)$ for all $j\in\mathbb{Z}_+/2$.
	Then, any limit point of the sequence $(\omega_j^{\beta,\Gamma}\circ Q_j^\Gamma)_{j\in\mathbb{Z}_+/2}$ is a $(\beta,\delta_\infty^\Gamma)$-KMS classical state by Proposition \ref{Thm: limit points of quantum KMS in oo volume are classical KMS in oo volume}.
	However, by assumption there is only one of such states, thus, the whole sequence $(\omega_j^{\beta,\Gamma}\circ Q_j^\Gamma)_{j\in\mathbb{Z}_+/2}$ converges to the unique classical $(\beta,\delta_\infty^\Gamma)$-KMS classical states $\omega_\infty^{\beta,\Gamma}$.
	In other words the absence of classical phase transitions implies that any sequence of $(\beta,\delta_j^\Gamma)$-KMS quantum states has a semiclassical limit.
	
	In the particular case where there is a unique $(\beta,\delta_j^\Gamma)$-KMS quantum state $\omega_j^{\beta,\Gamma}$ on $B_j^\Gamma$ ---\textit{i.e.}, there are no quantum phase transitions for all $j\in\mathbb{Z}_+/2$ at $\beta$--- the previous observation boils down to the equality
	\begin{align*}
		\lim_{j\to\infty}\Big(\lim_{\Lambda\uparrow\Gamma}
		\omega_j^{\beta,\Lambda}\Big)\circ Q_j^\Gamma
		=\omega_\infty^{\beta,\Gamma}
		=\lim_{\Lambda\uparrow\Gamma}\lim_{j\to\infty}
		\omega_j^{\beta,\Lambda}\circ Q_j^\Gamma\,,
	\end{align*}
	where in the last equality we used the result of \cite{vandeVen_2024} together with Remark \ref{Rmk: KMS state on BooGamma seen as weak*-limit points of Gibbs states}.
\end{remark}

\section{Common absence of CPTs and of QPTs at high temperatures}
\label{Sec: Common absence of CPTs and of QPTs}

In this section we will explore the relation between classical and quantum phase transitions within the setting of the Berezin quantization on $\Gamma$, \textit{cf.} Section \ref{Sec: Berezin SDQ on a lattice system}.
Our goal is to prove that, under mild conditions, above a common threshold critical temperature there is absence of both classical and quantum phase transitions, that is, uniqueness of both $(\beta,\delta_\infty^\Gamma)$-KMS classical states on $B_\infty^\Gamma$ and $(\beta,\delta_j^\Gamma)$-KMS quantum states on $B_j^\Gamma$ for all $j\in\mathbb{Z}_+/2$ holds for $\beta\leq\beta_c$ for a given $\beta_c\in(0,+\infty)$.

To this avail we will consider \cite[Prop. 6.2.45]{Bratteli_Robinson_97}, which provides a mild sufficient condition on a family of quantum potential $\Phi_j=\{\Phi_{j,\Lambda}\}_{\Lambda\Subset\Gamma}$ which ensures uniqueness of the corresponding $(\beta,\delta_j^\Gamma)$-KMS quantum state on $B_j^\Gamma$ for all $\beta\leq\beta_j$, where $\beta_j$ is called \textbf{quantum critical inverse temperature}.
This result does not suffices to our purposes because:
(a) there is no clear classical counterpart for the assumption on the family $\Phi_j$;
(b) the estimate for the critical inverse temperature $\beta_j$ provided in \cite[Prop. 6.2.45]{Bratteli_Robinson_97} is not uniform in $j\in\mathbb{Z}_+/2$, spoiling a comparison with the classical critical inverse temperature ---\textit{cf.} Remark \ref{Rmk: NO CPT implies NO QPT; comparison with BR result}.

To encompass these issues we revised the proof of \cite[Prop. 6.2.45]{Bratteli_Robinson_97} in order to produce a new result which is uniform in $j\in\mathbb{Z}_+/2$.
In particular, in Section \ref{Subsec: uniqueness result for classical KMS state} we prove a classical analogous of \cite[Prop. 6.2.45]{Bratteli_Robinson_97}, \textit{cf.} Theorem \ref{Thm: uniqueness result for classical KMS states}.
The latter result provides a sufficient condition, \textit{cf.} \eqref{Eq: uniqueness result for classical KMS states - assumption on potential}, on the family of potential $\varphi=\{\varphi_\Lambda\}_{\Lambda\Subset\Gamma}$ which ensures uniqueness for the corresponding $(\beta,\delta_\infty^\Gamma)$-KMS classical state on $B_\infty^\Gamma$ provided $\beta\leq\beta_\infty$, $\beta_\infty$ being the \textbf{classical critical inverse temperature}.
Condition \eqref{Eq: uniqueness result for classical KMS states - assumption on potential} is essentially a classical version of the one required in \cite{Bratteli_Robinson_97}, \textit{cf.} \eqref{Eq: uniqueness result for quantum KMS states - BR assumption on potential}, and essentially boils down to an estimate on the sup-norm of the derivatives of the potentials $\varphi_\Lambda$, $\Lambda\Subset\Gamma$.

Next, we provide a uniqueness result, \textit{cf.} Theorem \ref{Thm: uniqueness result for quantum KMS states}, for $(\beta,\delta_j^\Gamma)$-KMS quantum states on $B_j^\Gamma$ associated with the family of quantum potentials $\Phi_{j,\Lambda}=Q_j^\Gamma(\varphi_\Lambda)$ obtained by Berezin quantization of the classical potential.
The novelty of this result relies on the bound on the critical inverse temperature, which is now uniform in $j\in\mathbb{Z}_+/2$, allowing for a better comparison with the classical critical inverse temperature.
Finally, it is shown that the assumptions of Theorem \ref{Thm: uniqueness result for quantum KMS states} implies those of Theorem \ref{Thm: uniqueness result for classical KMS states}, \textit{cf.} Remark \ref{Rmk: NO CPT implies NO QPT; comparison with BR result}.
For this class of models, this establishes a sufficient criterion for the absence of both classical and quantum phase transitions.

\subsection{Uniqueness result for classical KMS states}
\label{Subsec: uniqueness result for classical KMS state}

The goal of this section is to prove a uniqueness result for $(\beta,\delta_\infty^\Gamma)$-KMS classical states, \textit{cf.} Theorem \ref{Thm: uniqueness result for classical KMS states}.
The latter identifies a sufficient condition on the classical interaction potentials $\varphi=\{\varphi_\Lambda\}_{\Lambda\Subset\Gamma}$ which ensures uniqueness of $(\beta,\delta_\infty^\Gamma)$-KMS classical states on $B_\infty^\Gamma$ for sufficiently low $\beta$, that is, at sufficiently high temperature.
This result is inspired from \cite[Prop. 6.2.45]{Bratteli_Robinson_97} and  adapted to the classical setting.
In the forthcoming sections we will present an analogous uniqueness result in the quantum setting, \textit{cf.} Theorem \ref{Thm: uniqueness result for quantum KMS states}.

\bigskip

\begin{theorem}\label{Thm: uniqueness result for classical KMS states}
	Let $\varphi=\{\varphi_\Lambda\}_{\Lambda\Subset\Gamma}$ with $\varphi\in C^{2s}(\mathbb{S}^2_\Lambda)$, $s>7/4$, be such that
	\begin{align}\label{Eq: uniqueness result for classical KMS states - assumption on potential}
		\|\varphi\|_{0,s}
		:=\sum_{m\geq 0}
		(C_\Delta^sK_s)^m\sup_{x\in\Gamma}\sum_{\substack{|\Lambda|=m+1\\ x\in \Lambda}}
		\|\varphi_\Lambda\|_{C^{2s}(\mathbb{S}^2_\Lambda)}
		<+\infty\,,
	\end{align}
	where $\|\;\|_{C^{2s}(\mathbb{S}^2_\Lambda)}$ is the $C^{2s}$-norm on $C^{2s}((\mathbb{S}^2)^{|\Lambda|})$ while $K_s>1$ and $C_\Delta\geq 1$ are defined by
	\begin{align}\label{Eq: uniqueness result for classical KMS states - useful constants}
		K_s:=\sum_{\ell\in\mathbb{Z}_+}\frac{(2\ell+1)^{5/2}}{[1+\ell(\ell+1)]^s}\,,
		\qquad
		C_\Delta:=\|1-\Delta_{\mathbb{S}^2}\|_{C^2(\mathbb{S}^2)\to C^0(\mathbb{S}^2)}\,,
	\end{align}
	the latter being the operator norm of $1-\Delta_{\mathbb{S}^2}\colon C^2(\mathbb{S}^2)\to C^0(\mathbb{S}^2)$.
	Then there exists a unique $(\beta,\delta_\infty^\Gamma)$-KMS classical state on $B_\infty^\Gamma$ for all $\beta\in[0,\beta_{0,s})$ where
	\begin{align}\label{Eq: uniqueness result for classical KMS states - estimate on critical temperature}
		\beta_{0,s}
		:=\frac{\log 2}{2C_\Delta^sK_s\|\varphi\|_{0,s}}
		\,.
	\end{align}
\end{theorem}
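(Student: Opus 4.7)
The plan is to adapt the Kirkwood--Salzburg approach used in \cite[Prop. 6.2.45]{Bratteli_Robinson_97} to the present classical setting on $\mathbb{S}^2$-spin systems. Any state $\omega \in S(B_\infty^\Gamma)$ is uniquely determined by its \emph{generalised moments} $\omega(Y_{\vec\ell, \vec m})$ on product spherical harmonics $Y_{\vec\ell, \vec m} := \prod_{x \in \Lambda} Y_{\ell_x, m_x}(\sigma_x)$, $\Lambda \Subset \Gamma$, which form a (topological) basis of $\dot{B}_\infty^\Gamma$. I would reformulate the classical KMS condition \eqref{Eq: KMS condition - classical on abstract C* algebra} as a single linear fixed-point equation $\omega = T_\beta \omega$ on a suitable weighted $\ell^1$-Banach space of moment sequences; uniqueness of the KMS state would then follow from the contraction mapping principle applied to $T_\beta$.

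The Kirkwood--Salzburg recursion is derived site by site. Fix a monomial $Y_{\vec\ell, \vec m}$ and a site $x_0 \in \Lambda$ with $\ell_{x_0} \geq 1$; the key identity is the Laplacian eigenvalue equation $Y_{\ell_{x_0}, m_{x_0}}(x_0) = -[\ell_{x_0}(\ell_{x_0}+1)]^{-1} \Delta_{\mathbb{S}^2_{x_0}} Y_{\ell_{x_0}, m_{x_0}}(x_0)$, which rewrites any $Y_{\ell, m}$ with $\ell \geq 1$ as a sum of nested Poisson brackets $\{x^k(x_0), \{x^k(x_0), \,\cdot\,\}\}$ with the coordinate functions on $\mathbb{S}^2_{x_0}$. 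One application of \eqref{Eq: KMS condition - classical on abstract C* algebra} (integration by parts) transfers one of the two brackets onto $h_\Gamma = \sum_X \varphi_X$; for $Y_{\vec\ell, \vec m}/Y_{\ell_{x_0}, m_{x_0}}(x_0)$ independent of $\sigma_{x_0}$ the remaining Poisson bracket annihilates the test factor, so the naive tautologies (which would follow from $\{x^k, x^k\} = 0$) are absent, and one obtains a genuine recursion in which $\omega(Y_{\vec\ell, \vec m})$ is expressed as $\beta$ times a sum, over interaction regions $X \ni x_0$, of moments on new monomials obtained by expanding $\{x^k(x_0), Y_{\ell_{x_0}, m_{x_0}}(x_0)\} \cdot \{x^k(x_0), \varphi_X\}$ in the spherical harmonic basis via Clebsch--Gordan \eqref{Eq: spherical harmonics - expansion of pointwise product}. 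This is the sought linear equation $\omega = T_\beta \omega$.

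The Banach space is chosen so that the $C^{2s}$-bookkeeping of the potentials is compatible with the coefficients appearing in $T_\beta$: a Sobolev-type weight $\prod_{x \in \Lambda} [1 + \ell_x(\ell_x + 1)]^{-s}$ --- reflecting iterated use of the bounded map $1 - \Delta_{\mathbb{S}^2}\colon C^2(\mathbb{S}^2) \to C^0(\mathbb{S}^2)$ of norm $C_\Delta$ --- together with a geometric weight in $|\Lambda|$ matching the factor $(C_\Delta^s K_s)^m$ in the hypothesis \eqref{Eq: uniqueness result for classical KMS states - assumption on potential}. Convergence of $K_s$ for $s > 7/4$, which originates from pointwise bounds on sums of spherical harmonics (Cauchy--Schwarz plus the addition formula, giving the power $(2\ell+1)^{5/2}$), guarantees that the sums over Clebsch--Gordan indices produced in each application of $T_\beta$ are summable.

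A careful estimate then gives $\|T_\beta\|$ bounded by $e^{2\beta \, C_\Delta^s K_s \|\varphi\|_{0,s}} - 1$, where the exponential arises from summing the higher-order contributions in $\beta$ that come from allowing several potentials $\varphi_X$ to act at the same site (a Duhamel-type resummation of the implicit Boltzmann factor). The contraction condition $\|T_\beta\| < 1$ then reads $2\beta \, C_\Delta^s K_s \|\varphi\|_{0,s} < \log 2$, i.e. $\beta < \beta_{0,s}$, yielding the claimed threshold \eqref{Eq: uniqueness result for classical KMS states - estimate on critical temperature}. The main obstacles I foresee are (a) organising the Kirkwood--Salzburg recursion so that the tautological cancellation from $\{x^k, x^k\} = 0$ does not collapse the equation (solved via the Laplacian inversion together with the vanishing-divergence trick in the test factor), and (b) calibrating the weights on the moment Banach space so that the combinatorial and analytic estimates combine exactly into the constants $C_\Delta^s$, $K_s$ and $\|\varphi\|_{0,s}$, producing the sharp critical threshold $\beta_{0,s}$.
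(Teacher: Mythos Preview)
Your high-level architecture---encode the state by its spherical-harmonic moments, derive a linear fixed-point equation from the KMS condition, and prove a contraction for small $\beta$---matches the paper's. But the mechanism you propose for generating the recursion is genuinely different from what the paper does, and the step where you claim to recover the exponential bound (hence the $\log 2$) is a gap.

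The paper does \emph{not} use Laplacian inversion and a single infinitesimal application of the KMS condition. Instead it first proves an \emph{integrated} rotation identity (its Lemma~\ref{Lem: classical KMS condition and rotations}): for any $R_x\in SU(2)$ acting at a single site,
\[
\omega(a_\Lambda)=\omega\Big(e^{\beta\sum_{X\ni x}(I-\hat R_x)\varphi_X}\,\hat R_x a_\Lambda\Big).
\]
This is obtained by integrating the infinitesimal KMS relation along the one-parameter group $t\mapsto \hat R_x(t)$; the exponential in $\beta$ is built in at this stage. The recursion is then produced by \emph{averaging over the Haar measure of $SU(2)$}: since $\int_{SU(2)}\hat R_x Y_{\ell_x,m_x}\,dR_x=0$ for $\ell_x\geq 1$ (Schur), one gets
\[
\underline\omega(\ell_\Lambda,m_\Lambda)=\omega\Big(\int_{SU(2)}\big(I-e^{\beta\sum_{X\ni x}(I-\hat R_x)\varphi_X}\big)Y_{\ell_\Lambda,m_\Lambda}\,dR_x\Big),
\]
and expanding the exponential gives the sum $\sum_{n\geq 1}\beta^n/n!$ that leads directly to $\|L_\infty^\beta\|\leq e^{2\beta C_\Delta^sK_s\|\varphi\|_{0,s}}-1$. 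The Banach space is the plain sup-norm space $\underline{\mathsf X}$, not a weighted $\ell^1$ space.

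Your route---write $Y_{\ell,m}=-[\ell(\ell+1)]^{-1}\sum_k\{x^k,\{x^k,Y_{\ell,m}\}\}$ and apply KMS once---produces an operator that is \emph{linear} in $\beta$: one application of \eqref{Eq: KMS condition - classical on abstract C* algebra} yields exactly one factor of $\beta$. That would give a threshold of the form $\beta<1/\|T_1\|$, not the specific value $\beta_{0,s}=\log 2/(2C_\Delta^sK_s\|\varphi\|_{0,s})$. Your claim that a ``Duhamel-type resummation of the implicit Boltzmann factor'' upgrades the linear bound to $e^{2\beta(\cdots)}-1$ is precisely where the real work lies, and you have not explained the mechanism: which quantity is being iterated, why the iterates sum to an exponential, and why the resulting constants are exactly $2C_\Delta^sK_s\|\varphi\|_{0,s}$. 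In the paper this is not a resummation at all---it is a consequence of the finite-rotation identity, whose proof (integrating the infinitesimal KMS relation along a flow, controlling the remainder as $Y\uparrow\Gamma$) is nontrivial and is the main analytic input. If you want the stated $\beta_{0,s}$ with its $\log 2$, you will need either that integrated identity or an honest argument replacing it; the infinitesimal recursion alone does not deliver it.
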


The proof of Theorem \ref{Thm: uniqueness result for classical KMS states} is rather long and will take the whole section.
For the sake of clarity we start by collecting some observations in a series of Remarks \ref{Rmk: spherical harmonics - set-up, identities and useful bounds}-\ref{Rmk: Banach space for state - main definitions}-\ref{Rmk: uniqueness result for classical KMS states - beta=0 case}.
Together with Lemma \ref{Lem: classical KMS condition and rotations} these will develop all technical tools need for the proof of Theorem \ref{Thm: uniqueness result for classical KMS states}.

\begin{remark}\label{Rmk: spherical harmonics - set-up, identities and useful bounds}
	\noindent
	For later convenience we will recollect in this remark a few useful observations concerning the Fourier-Laplace expansion on $\mathbb{S}^2$ and $\mathbb{S}^2_\Lambda$, $\Lambda\Subset\Gamma$.
	
	With reference to Example \ref{Ex: Berezin SDQ - quantization of spherical harmonics} we denote by $Y_{\ell,m}\in\dot{B}_\infty$ the spherical harmonic with parameter $\ell\in\mathbb{Z}_+$, $m\in[-\ell,\ell]\cap\mathbb{Z}$.
	With the choice of normalization of Equation \eqref{Eq: spherical harmonics - convention} we have $\|Y_{\ell,m}\|_{B_\infty}\leq 1$, moreover, the set $\{Y_{\ell,m}\}_{\ell,m}$ is a complete orthogonal system for $L^2(\mathbb{S}^2,\mu_0)$ made by orthogonal but not $L^2$-normalized vectors.
	In particular for any $a\in L^2(\mathbb{S}^2,\mu_0)$ we have the following Fourier-Laplace expansion:
	\begin{align}\label{Eq: Fourier-Laplace expansion - single site}
		a=\sum_{\ell\in\mathbb{Z}_+}\sum_{m=-\ell}^{\ell}
		\hat{a}(\ell,m)Y_{\ell,m}\,,
		\qquad
		\hat{a}(\ell,m)
		:=(2\ell+1)\langle Y_{\ell,m}|a\rangle_{L^2(\mathbb{S}^2,\mu_0)}\,.
	\end{align}
	It is worth recalling that the series in Equation \eqref{Eq: Fourier-Laplace expansion - single site} converges in the $L^2$-norm, moreover, the series converges uniformly with all derivatives if $a\in\dot{B}_\infty$.
	Furthermore, for $\ell,s\in\mathbb{Z}_+$
	\begin{align*}
		|\hat{a}(\ell,m)|
		&=(2\ell+1)[1+\ell(\ell+1)]^{-s}|\langle Y_{\ell,m}|(1-\Delta_{\mathbb{S}^2})^s a\rangle_{L^2(\mathbb{S}^2,\mu_0)}|
		\\
		&\leq C_\Delta^s(2\ell+1)[1+\ell(\ell+1)]^{-s}
		\|a\|_{C^{2s}(\mathbb{S}^2)}\|Y_{\ell,m}\|_{L^1(\mathbb{S}^2,\mu_0)}
		\\
		&\leq C_\Delta^s(2\ell+1)^{1/2}[1+\ell(\ell+1)]^{-s}
		\|a\|_{C^{2s}(\mathbb{S}^2)}\,,
	\end{align*}
	where $C_\Delta$ has been defined in Equation \eqref{Eq: uniqueness result for classical KMS states - useful constants} while we used the eigenvalues property $-\Delta_{\mathbb{S}^2}Y_{\ell,m}=\ell(\ell+1)Y_{\ell,m}$ together with integration by parts and Cauchy-Schwarz inequality.
	The bound above ensures the uniform convergence of the series in Equation \eqref{Eq: Fourier-Laplace expansion - single site} provided that $1/2-2s+1<-1$, the additional $+1$ factor arising from the summation over $m\in[-\ell,\ell]$.
	
	The previous considerations generalize to the case of $B_\infty^\Lambda$, $\Lambda\Subset\Gamma$.
	In particular, if $\ell_\Lambda\in\mathbb{Z}_+^\Lambda$ and $m_\Lambda\in\mathbb{Z}^\Lambda$, with $m_x\in[-\ell_x,\ell_x]\cap\mathbb{Z}$ for all $x\in\Lambda$, we set
	\begin{align}
		Y_{\ell_\Lambda,m_\Lambda}
		:=\bigotimes_{x\in\Lambda}Y_{\ell_x,m_x}
		\in\dot{B}_\infty^\Lambda\,.
	\end{align}
	The Fourier-Laplace expansion of $a_\Lambda\in\dot{B}_\infty^\Lambda$ is given by
	\begin{align}\label{Eq: Fourier-Laplace expansion - finite region}
		a_\Lambda=\sum_{\ell_\Lambda\in\mathbb{Z}_+^\Lambda}\sum_{m_\Lambda}
		\hat{a}(\ell_\Lambda,m_\Lambda)Y_{\ell_\Lambda,m_\Lambda}\,,
		\qquad
		\hat{a}(\ell_\Lambda,m_\Lambda)
		:=\Big(\prod_{x\in\Lambda}(2\ell_x+1)\Big)\langle Y_{\ell_\Lambda,m_\Lambda}|a_\Lambda\rangle_{L^2(\mathbb{S}^2_\Lambda,\mu_0^\Lambda)}\,,
	\end{align}
	where we may estimate
	\begin{align*}
		|\hat{a}(\ell_\Lambda,m_\Lambda)|
		\leq C_\Delta^{s|\Lambda|}
		\prod_{x\in\Lambda}
		\bigg(\frac{(2\ell_x+1)^{1/2}}{[1+\ell_x(\ell_x+1)]^s}\bigg)
		\|a_\Lambda\|_{C^{2s}(\mathbb{S}^2_\Lambda)}\,.
	\end{align*}

\end{remark}

\begin{remark}\label{Rmk: Banach space for state - main definitions}
	Let $\omega_\infty^\Gamma\in S(B_\infty^\Gamma)$ be an arbitrary state on $B_\infty^\Gamma$.
	Then $\omega_\infty^\Gamma$ is uniquely determined by its values on $\dot{B}_\infty^\Lambda$ for all $\Lambda\Subset\Gamma$.
	On account of Remark \ref{Rmk: spherical harmonics - set-up, identities and useful bounds} ---\textit{cf.} Equation \eqref{Eq: Fourier-Laplace expansion - single site}--- it suffices to determine the values
	\begin{align}\label{Eq: classical state functions}
		\underline{\omega}_\infty^\Gamma(\ell_\Lambda,m_\Lambda)
		:=\omega_\infty^\Gamma( Y_{\ell_\Lambda,m_\Lambda})\,,
		\qquad
		\underline{\omega}_\infty^\Gamma(\ell_\varnothing,m_\varnothing)
		:=\omega_\infty^\Gamma(1)=1\,,
	\end{align}
	where $\ell_\Lambda\in\mathbb{N}^\Lambda$ while $m_\Lambda\in\mathbb{Z}^\Lambda$ are such that $m_x\in[-\ell_x,\ell_x]\cap\mathbb{Z}$ for all $x\in\Lambda$.
	Notice that we avoided to consider the case $\ell_x=0$ for some $x\in\Lambda$ because of the following compatibility condition: If $\ell_x=0$ then
	\begin{align*}
		\underline{\omega}_\infty^\Gamma(\ell_\Lambda,m_\Lambda)
		:=\omega_\infty^\Gamma( Y_{\ell_\Lambda,m_\Lambda})
		=\omega_\infty^\Gamma(Y_{\ell_{\Lambda\setminus\{x\}},m_{\Lambda\setminus\{x\}}})
		=\underline{\omega}_\infty^\Gamma(\ell_{\Lambda\setminus\{x\}},m_{\Lambda\setminus\{x\}})\,.
	\end{align*}

	At this stage we may regard $\underline{\omega}_\infty^\Gamma$ as an element of a suitable Banach space $\underline{\mathsf{X}}$.
	The latter is defined by
	\begin{multline}\label{Eq: Banach space for classical state functions}
		\underline{\mathsf{X}}:=
		\{\underline{f}=(f_\Lambda)_{\Lambda\Subset\Gamma}\,|\,
		f_\Lambda\colon\{
		(\ell_\Lambda,m_\Lambda)\in\mathbb{N}^\Lambda\times\mathbb{Z}^\Lambda\,|\,m_x\in[-\ell_x,\ell_x]\cap\mathbb{Z}\;\forall x\in\Lambda
		\}\to\mathbb{C}\colon
		\\
		\|\underline{f}\|_{\underline{\mathsf{X}}}:=\sup_{\Lambda\Subset\Gamma}\sup_{\ell_\Lambda,m_\Lambda}|f_\Lambda(\ell_\Lambda,m_\Lambda)|<\infty
		\}\,,
	\end{multline}
	where $f_\varnothing\in\mathbb{C}$.
	It is worth observing that $\underline{\omega}_\infty^\Gamma\in\underline{\mathsf{X}}$ because of our choice for the normalization of $Y_{\ell_\Lambda,m_\Lambda}$, \textit{cf.} Example \ref{Ex: Berezin SDQ - quantization of spherical harmonics}, which ensures that $\|\underline{\omega}_\infty^\Gamma\|_{\underline{\mathsf{X}}}\leq 1$.
	
	Summing up, a state $\omega_\infty^\Gamma\in S(B_\infty^\Gamma)$ is completely determined by the corresponding element $\underline{\omega}_\infty^\Gamma\in\underline{\mathsf{X}}$.
\end{remark}

The following Lemma is crucial and shows the relation between the classical $(\beta,\delta_\infty^\Gamma)$-KMS condition and the action of $SU(2)$ (in fact, of $SO(3)$) on $B_\infty^\Gamma$.
In particular for all $x\in\Gamma$ let $R_x\in SU(2)$.
Recalling Section \ref{Sec: Berezin SDQ on a lattice system} let $\boxed{\hat{R}_x}\in\mathcal{B}(L^2(\mathbb{S}^2,\mu_0))$ be the corresponding unitary operator on $L^2(\mathbb{S}^2,\mu_0)$ defined by $\hat{R}_xa_x:=a_x\circ R_x^{-1}$.
If $R_x=\exp[D_x]$, $D_x\in\mathfrak{su}(2)$, we will denote by $\hat{D}_x$ the vector field on $\mathbb{S}^2_x$ corresponding to the infinitesimal generator of $\hat{R}_x(t)$, where $R_x(t):=\exp[tD_x]$ ---for definiteness $\hat{R}_x(t)=\exp[-it\hat{D}_x]$.
The action of $\hat{R}_x$ on $B_\infty^x$ can be lifted to $B_\infty^\Gamma$ by setting, for all $a_\Lambda\in B_\infty^\Lambda$ and $\Lambda\Subset\Gamma$, $\hat{R}_xa_\Lambda(\sigma_\Lambda):=a_\Lambda(\sigma_\Lambda')$ with $\sigma_\Lambda=(\sigma_y)_{y\in\Lambda}$, $\sigma_\Lambda'=(\sigma_y')_{y\in\Lambda}$ being $\sigma_y'=\sigma_y$ for $y\neq x$ while $\sigma_x':=R_x^{-1}\sigma_x$.

\begin{lemma}\label{Lem: classical KMS condition and rotations}
	Let $\omega_\infty^{\beta,\Gamma}\in S(B_\infty^\Gamma)$ be a $(\beta,\delta_\infty^\Gamma)$-KMS classical states.
	Then, for any $x\in\Gamma$, $R_x\in SU(2)$ and $a_\Lambda\in\dot{B}_\infty^\Lambda$, $\Lambda\Subset\Gamma$, it holds
	\begin{align}\label{Eq: classical KMS condition and rotations}
		\omega_\infty^{\beta,\Gamma}( a_\Lambda)
		=\omega_\infty^{\beta,\Gamma}\Big(e^{\beta\sum_{X\ni x}(I-\hat{R}_x)\varphi_X}
		\hat{R}_x a_\Lambda
		\Big)
		\,.
	\end{align}
\end{lemma}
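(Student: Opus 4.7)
The strategy is to pass from $R_x$ to the one-parameter subgroup it generates and to reduce \eqref{Eq: classical KMS condition and rotations} to the vanishing of a single derivative. Since $SU(2)$ is compact and connected I may write $R_x=\exp[D_x]$ with $D_x\in\mathfrak{su}(2)$; set $R_x(t):=\exp[tD_x]$ and $\hat{R}_x(t)=e^{-it\hat{D}_x}$, and consider
\begin{align*}
f(t):=\omega_\infty^{\beta,\Gamma}\bigl(E(t)\,\hat{R}_x(t)a_\Lambda\bigr),\qquad E(t):=\exp\Bigl(\beta\sum_{X\ni x}(I-\hat{R}_x(t))\varphi_X\Bigr).
\end{align*}
Then $f(0)=\omega_\infty^{\beta,\Gamma}(a_\Lambda)$ is the left-hand side of \eqref{Eq: classical KMS condition and rotations} while $f(1)$ is the right-hand side, so it suffices to prove $f'(t)\equiv 0$.

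The next step is to encode the action of $\hat{R}_x(t)$ via a Poisson-algebraic generator. Because the rotation action on $\mathbb{S}^2$ is Hamiltonian, there exists a smooth function $H_x\in C^\infty(\mathbb{S}^2_x)\subset\dot{B}_\infty^\Gamma$ ---the moment-map component along $D_x$--- characterised by $-i\hat{D}_x g=\{H_x,g\}_{B_\infty^\Gamma}$ for $g\in\dot{B}_\infty^\Gamma$; in particular $\tfrac{d}{dt}\hat{R}_x(t)g=\{H_x,\hat{R}_x(t)g\}_{B_\infty^\Gamma}$. Combining this with the product rule and the Poisson-algebra identity $\{H_x,e^b\}=e^b\{H_x,b\}$ yields, after a direct computation,
\begin{align*}
f'(t)=\omega_\infty^{\beta,\Gamma}\bigl(E(t)\{H_x,\hat{R}_x(t)a_\Lambda\}\bigr)-\beta\,\omega_\infty^{\beta,\Gamma}\bigl(E(t)\{H_x,K(t)\}\,\hat{R}_x(t)a_\Lambda\bigr),
\end{align*}
where $K(t):=\sum_{X\ni x}\hat{R}_x(t)\varphi_X$.

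I would then apply the classical $(\beta,\delta_\infty^\Gamma)$-KMS condition \eqref{Eq: KMS condition - classical on abstract C* algebra} with $\mathfrak{a}=H_x$ and $\mathfrak{b}=E(t)\hat{R}_x(t)a_\Lambda$: since $H_x$ depends only on the $x$-th coordinate, property \eqref{Eq: Poisson structures - reduction property} reduces $\delta_\infty^\Gamma H_x$ to $\sum_{Y\ni x}\{H_x,\varphi_Y\}_{B_\infty^\Gamma}$, and the Leibniz rule together with the exponential identity gives $\{H_x,E(t)\}=\beta E(t)[\delta_\infty^\Gamma H_x-\{H_x,K(t)\}]$. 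Inserting these expansions into the KMS identity and cancelling the contributions proportional to $\delta_\infty^\Gamma H_x$ that appear on both sides (they coincide by commutativity of $B_\infty^\Gamma$) produces exactly $\omega_\infty^{\beta,\Gamma}(E(t)\{H_x,\hat{R}_x(t)a_\Lambda\})=\beta\,\omega_\infty^{\beta,\Gamma}(E(t)\{H_x,K(t)\}\hat{R}_x(t)a_\Lambda)$, whence $f'(t)=0$ and the lemma follows by evaluating at $t=1$.

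The main obstacle is domain-theoretic: $E(t)\in B_\infty^\Gamma$ typically fails to lie in $\dot{B}_\infty^\Gamma$ once the sum $\sum_{X\ni x}$ is genuinely infinite, so $\mathfrak{b}=E(t)\hat{R}_x(t)a_\Lambda$ lies outside the domain in which \eqref{Eq: KMS condition - classical on abstract C* algebra} is stated. I would handle this by introducing the finite-region truncations $E_n(t):=\exp(\beta\sum_{X\ni x,\,X\subset\Lambda_n}(I-\hat{R}_x(t))\varphi_X)\in\dot{B}_\infty^\Gamma$ along an exhaustion $\Lambda_n\uparrow\Gamma$, running the argument verbatim with $E_n(t)$ in place of $E(t)$, and then passing to $n\to\infty$. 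Uniform convergence of the resulting sums of Poisson brackets is guaranteed by the decay assumption on $\varphi$ implicit in the well-definiteness of $\delta_\infty^\Gamma$ ---and in particular by the stronger hypothesis \eqref{Eq: uniqueness result for classical KMS states - assumption on potential} relevant in this subsection--- allowing both sides of the KMS identity to pass to the limit and yielding $f'(t)\equiv 0$.
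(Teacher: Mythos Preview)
Your proof is correct and follows essentially the paper's approach: derive the infinitesimal identity from the KMS condition, integrate along the one-parameter subgroup using finite-region truncations, and pass to the limit via the summability assumption on $\varphi$. The only difference is that you use a global moment-map Hamiltonian $H_x$ (available because the $SO(3)$-action on $\mathbb{S}^2$ is Hamiltonian) where the paper expresses $\hat{D}_x$ through a closed $1$-form and a partition of unity; your version is slightly more direct for this model, while the paper's phrasing would extend to symplectic flows that are not globally Hamiltonian.
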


\begin{remark}
	\noindent
	\begin{enumerate}[(i)]
		\item
		Lemma \ref{Lem: classical KMS condition and rotations} is inspired by \cite[Prop. 5]{Drago_Waldmann_2024}, where Equation \eqref{Eq: classical KMS condition and rotations} has been proved in the context of finite dimensional Poisson manifolds.
		Equation \ref{Eq: classical KMS condition and rotations} can be understood as an adaptation to the present setting.
		
		\item
		Notice that the element $\sum_{X\ni x}(I-\hat{R}_x)\varphi_X\in B_\infty^\Gamma$ can be interpreted as the difference $h_{\Gamma}-\hat{R}_xh_{\Gamma}$ between the (ill-defined) classical Hamiltonian $h_{\Gamma}=\sum_{X\Subset\Gamma}\varphi_X$ and its $R_x$-rotated version: Loosely speaking, the locality of the involved rotation ensures that the above difference is finite.
	\end{enumerate}
\end{remark}

\begin{proof}[Proof of Lemma \ref{Lem: classical KMS condition and rotations}]
	Let $\hat{D}_x$ be the infinitesimal generator of $\hat{R}_x$.
	Since $\mathbb{S}^2_x$ is a symplectic manifold and $\hat{D}_x$ is a Poisson vector field on $\mathbb{S}^2_x$, there exists a closed 1-form $\alpha_x\in\Omega^1(\mathbb{S}^2_x)$ such that $\hat{D}_xa_x=\pi_x(\mathrm{d}a_x,\alpha_x)$ for all $a_x\in \dot{B}_\infty^x$, $\pi_x$ being the Poisson bitensor associated with the symplectic 2-form on $\mathbb{S}^2_x$.
	Let $\{U_x\}$ be an open cover of $\mathbb{S}^2_x$ such that $\alpha_x=\mathrm{d}g_{U_x}$, $g_{U_x}\in C^\infty(U_x)$, and let $\{\chi_{U_x}\}$ be a partition of unity associated with $\{U_x\}$.
	It follows that, for all $a_x\in\dot{B}_\infty^x$,
	\begin{align*}
		\hat{D}_xa_x
		=\sum_{U_x}\hat{D}_x(\chi_{U_x}a_x)
		=\sum_{U_x}\{\chi_{U_x}a_x,g_{U_x}\}_x\,.
	\end{align*}
	The latter identity and the $(\beta,\delta_\infty^\Gamma)$-KMS condition leads to
	\begin{align}\label{Eq: classical KMS condition and rotation - infinitesimal version}
		\omega_\infty^{\beta,\Gamma}(\hat{D}_xa_\Lambda)
		=-\beta\omega_\infty^{\beta,\Gamma}\Big(
		 a_\Lambda\sum_{U_x}\chi_{U_x}
		\sum_{X\ni x}\{g_{U_x},\varphi_X\}
		\Big)
		=\beta\omega_\infty^{\beta,\Gamma}\Big[\Big(\sum_{X\ni x}\hat{D}_x\varphi_X\Big) a_\Lambda\Big]\,.
	\end{align}
	Integration of the latter identity leads to Equation \eqref{Eq: classical KMS condition and rotations}.
	In more details we first observe that
	\begin{align*}
		\sum_{X\ni x}(I-\hat{R}_x)\varphi_X
		=\lim_{Y\uparrow\Gamma}(I-\hat{R}_x)h_Y\,,
		\qquad
		h_Y=\sum_{X\subset Y}\varphi_X
		\in\dot{B}_\infty^Y\,,
	\end{align*}
	where the limit converges in the $B_\infty^\Gamma$-norm.
	We then consider the function
	\begin{align*}
		\omega_Y(t):=\omega_\infty^{\beta,\Gamma}\bigg[
		e^{\beta\sum_{\substack{X\subset Y\\X\ni x}}(I-\hat{R}_x(t))\varphi_X
		}\hat{R}_x(t) a_\Lambda
		\bigg]\,,
	\end{align*}
	where $\hat{R}_x(t)=\exp[-it\hat{D}_x]$: Notice that
	\begin{align*}
		\omega_\infty^{\beta,\Gamma}\Big(e^{\beta\sum_{X\ni x}(I-\hat{R}_x)\varphi_X}
		 \hat{R}_xa_\Lambda
		\Big)
		=\lim_{Y\uparrow\Gamma}\omega_Y(1)\,.
	\end{align*}
	By dominated convergence and on account of Equation \eqref{Eq: classical KMS condition and rotation - infinitesimal version} we also have
	\begin{align*}
		-i\dot{\omega}_Y(t)
		&=\omega_\infty^{\beta,\Gamma}\bigg[
		e^{\beta\sum_{\substack{X\subset Y\\X\ni x}}(I-\hat{R}_x(t))\varphi_X
		}\Big(
		 [\hat{D}_x\hat{R}_x(t)a_\Lambda]
		-\beta(\hat{R}_x(t)a_\Lambda)\sum_{\substack{X\subset Y\\X\ni x}}
		\hat{D}_x\hat{R}_x(t)\varphi_X
		\Big)
		\bigg]
		\\
		&=-\beta\omega_\infty^{\beta,\Gamma}\bigg[
		e^{\beta\sum_{\substack{X\subset Y\\X\ni x}}(I-\hat{R}_x(t))\varphi_X
		}
		 [\hat{R}_x(t)a_\Lambda]\sum_{\substack{X\subset Y\\X\ni x}}\hat{D}_x\varphi_X\bigg]
		\\
		&+\omega_\infty^{\beta,\Gamma}\bigg[\hat{D}_x\bigg(
		e^{\beta\sum_{\substack{X\subset Y\\X\ni x}}(I-\hat{R}_x(t))\varphi_X}
		 [\hat{R}_x(t)a_\Lambda]
		\bigg)
		\bigg]
		\\
		&=-\beta\omega_\infty^{\beta,\Gamma}\bigg[
		e^{\beta\sum_{\substack{X\subset Y\\X\ni x}}(I-\hat{R}_x(t))\varphi_X
		}
		 [\hat{R}_x(t)a_\Lambda]\sum_{\substack{X\subset Y\\X\ni x}}\hat{D}_x\varphi_X
		\bigg]
		\\
		&+\beta\omega_\infty^{\beta,\Gamma}\bigg[
		e^{\beta\sum_{\substack{X\subset Y\\X\ni x}}(I-\hat{R}_x(t))\varphi_X}
		 [\hat{R}_x(t)a_\Lambda]
		\sum_{X\ni x} \hat{D}_x\varphi_X
		\bigg]
		\\
		&=\beta\omega_\infty^{\beta,\Gamma}\bigg[
		e^{\beta\sum_{\substack{X\subset Y\\X\ni x}}(I-\hat{R}_x(t))\varphi_X}
		 [\hat{R}_x(t)a_\Lambda]
		\sum_{\substack{X\cap Y^c\neq\varnothing\\X\ni x}} \hat{D}_x\varphi_X
		\bigg]\,.
	\end{align*}
	By direct inspection we also find
	\begin{align*}
		\lim_{Y\uparrow\Gamma}\sup_{t\in[0,1]}|\dot{\omega}_Y(t)|=0\,,
	\end{align*}
	thus, we have
	\begin{align*}
		\omega_\infty^{\beta,\Gamma}\Big(e^{\beta\sum_{X\ni x}(I-\hat{R}_x)\varphi_X}
		 \hat{R}_xa_\Lambda
		\Big)
		=\lim_{Y\uparrow\Gamma}\omega_Y(1)
		=\lim_{Y\uparrow\Gamma}\omega_Y(0)
		+\lim_{Y\uparrow\Gamma}\int_0^1\dot{\omega}_Y(t)\mathrm{d}t
		=\omega_\infty^{\beta,\Gamma}( a_\Lambda)\,.
	\end{align*}
\end{proof}

\begin{remark}\label{Rmk: uniqueness result for classical KMS states - beta=0 case}
	Lemma \ref{Lem: classical KMS condition and rotations} suffices to prove Theorem \ref{Thm: uniqueness result for classical KMS states} for the particular case $\beta=0$, where the assumption \eqref{Eq: uniqueness result for classical KMS states - assumption on potential} on the potential $\{\varphi_\Lambda\}_{\Lambda\Subset\Gamma}$ is not needed.
	In this case the unique $(0,\delta_\infty^\Gamma)$-KMS classical states $\omega_\infty^{0,\Gamma}$ coincides with the normalized Poisson trace
	\begin{align}\label{Eq: unique Poisson trace classical state}
		\omega_\infty^{0,\Gamma}( a_\Lambda)
		=\int_{\mathbb{S}^2_\Lambda}a_\Lambda(\sigma_\Lambda)\mathrm{d}\mu_0^\Lambda(\sigma_\Lambda)\,.
	\end{align}
	Although this can be shown with several methods, it is instructive to prove it with the help of Lemma \ref{Lem: classical KMS condition and rotations}: This leads to a first intuition on the strategy we will employ in the proof of Theorem \ref{Thm: uniqueness result for classical KMS states}.
	
	Let $\omega_\infty^{0,\Gamma}\in S(B_\infty^\Gamma)$ be a $(0,\delta_\infty^\Gamma)$-KMS classical states: We wish to prove that the associated element $\underline{\omega}_\infty^{0,\Gamma}\in\underline{\mathsf{X}}$ is necessarily of the form
	\begin{align}\label{Eq: unique Poisson trace classical state - Banach space function}
		\underline{\omega}_\infty^{0,\Gamma}(\ell_\Lambda,m_\Lambda)
		=\begin{dcases}
			1&\Lambda=\varnothing
			\\
			0&\Lambda\neq\varnothing
		\end{dcases}
		\,.
	\end{align}
	This implies that $\omega_\infty^{0,\Gamma}$ abides by Equation \eqref{Eq: unique Poisson trace classical state} on account of the Fourier-Laplace expansion \eqref{Eq: Fourier-Laplace expansion - finite region}.
	To prove Equation \eqref{Eq: unique Poisson trace classical state - Banach space function} it suffices to observe that Lemma \ref{Lem: classical KMS condition and rotations} implies $\omega_\infty^{0,\Gamma}\circ\hat{R}_x=\omega_\infty^{0,\Gamma}$.
	
	Let $\Lambda\Subset\Gamma$, $\ell_\Lambda\in\mathbb{N}^\Lambda$, $m_\Lambda\in\mathbb{Z}^\Lambda$, $m_x\in[-\ell_x,\ell_x]\cap\mathbb{Z}$ for all $x\in\Lambda$.
	For a fixed $x\in\Lambda$ we compute
	\begin{align*}
		\underline{\omega}_\infty^{0,\Gamma}(\ell_\Lambda,m_\Lambda)
		=\omega_\infty^{0,\Gamma}( Y_{\ell_\Lambda,m_\Lambda})
		=\int_{SU(2)}\omega_\infty^{0,\Gamma}(\hat{R}_x Y_{\ell_\Lambda,m_\Lambda})\mathrm{d}R_x
		=\omega_\infty^{0,\Gamma}\bigg(\int_{SU(2)}\hat{R}_x Y_{\ell_\Lambda,m_\Lambda}\mathrm{d}R_x\bigg)\,.
	\end{align*}
	where $\mathrm{d}R_x$ denotes the normalized Haar measure on $SU(2)$.
	At this stage we observe that the irreducibility of the left-representation of $SU(2)$ on the space generated by $\{Y_{\ell_x,m_x}\}_{m_x}$ entails
	\begin{align}\label{Eq: spherical harmonics - consequence of Shur Lemma}
		\int_{SU(2)}\hat{R}_xY_{\ell_x,m_x}\mathrm{d}R_x
		=\delta^0_{\ell_x}\delta^0_{m_x}\,.
	\end{align}
	This implies
	\begin{align*}
		\int_{SU(2)}\hat{R}_x Y_{\ell_\Lambda,m_\Lambda}\mathrm{d}R_x
		= \int_{SU(2)}\hat{R}_xY_{\ell_x,m_x}\mathrm{d}R_x
		\otimes Y_{\ell_{\Lambda\setminus\{x\}},m_{\Lambda\setminus\{x\}}}
		=\delta_{\ell_x}^0\delta_{m_x}^0 Y_{\ell_{\Lambda\setminus\{x\}},m_{\Lambda\setminus\{x\}}}
		=0\,,
	\end{align*}
	where in the second equality we used that $\ell_x\in\mathbb{N}$.
\end{remark}

\begin{proof}[Proof of Theorem \ref{Thm: uniqueness result for classical KMS states}]
	Let $\omega_\infty^{\beta,\Gamma}\in S(B_\infty^\Gamma)$ be a $(\beta,\delta_\infty^\Gamma)$-KMS classical states.
	We will consider the associated element $\underline{\omega}_\infty^{\beta,\Gamma}\in\underline{\mathsf{X}}$ as described in Remark \ref{Rmk: Banach space for state - main definitions}: Our goal is to prove that $\underline{\omega}_\infty^{\beta,\Gamma}$ is the solution to a linear equation in $\underline{\mathsf{X}}$ which is unique under assumption \eqref{Eq: uniqueness result for classical KMS states - assumption on potential}.
	
	To begin with, we choose an arbitrary but fixed bijection $\Gamma\simeq\mathbb{Z}$ and induce an ordering on $\Gamma$ based on such map.
	Thus, for any $\Lambda\Subset\Gamma$ we may set $x:=\min_{y\in\Lambda}y$ where the minimum is taken with respect to the chosen ordering.
        The choice of the ordering is only made to select a distinguished point $x\in\Lambda$ for any $\Lambda\Subset\Gamma$.
    
	Let now $\ell_\Lambda\in\mathbb{N}^\Lambda$ and $m_\Lambda\in\mathbb{Z}^\Lambda$ be such that $m_y\in[-\ell_y,\ell_y]\cap\mathbb{Z}$ for all $y\in\Lambda$.
	Proceeding as in Remark \ref{Rmk: uniqueness result for classical KMS states - beta=0 case} we compute
	\begin{align*}
		\underline{\omega}_\infty^{\beta,\Gamma}(\ell_\Lambda,m_\Lambda)
		&=\omega_\infty^{\beta,\Gamma}( Y_{\ell_\Lambda,m_\Lambda})
		\\
		&=\omega_\infty^{\beta,\Gamma}\bigg(\int_{SU(2)}(I-\hat{R}_x) Y_{\ell_\Lambda,m_\Lambda}\mathrm{d}R_x\bigg)
		\\
		&=\omega_\infty^{\beta,\Gamma}\bigg(\int_{SU(2)}
		\Big(I-e^{\beta\sum_{X\ni x}(I-\hat{R}_x)\varphi_X}\Big)
		 Y_{\ell_\Lambda,m_\Lambda}\mathrm{d}R_x\bigg)\,,
	\end{align*}
	where in the second equality we used that $R_x\mapsto\hat{R}_x$ is a unitary irreducible representation when restricted on the vector space generated by $\{Y_{\ell_x,m_x}\}_{m_x}$ while $\ell_x\in\mathbb{N}$, \textit{cf.} Remark \ref{Rmk: uniqueness result for classical KMS states - beta=0 case}.
	The third equality is nothing but Equation \eqref{Eq: classical KMS condition and rotations}.
	The exponential in the last term can be expanded in a series converging in $B_\infty^\Gamma$, thus,
	\begin{multline}\label{Eq: classical uniqueness - expansion of the state}
		\underline{\omega}_\infty^{\beta,\Gamma}(\ell_\Lambda,m_\Lambda)
		=-\sum_{n\geq 1}\frac{\beta^n}{n!}
		\sum_{\substack{X_1,\ldots,X_n\\x\in X_1\cap\ldots\cap X_n}}
		\omega_\infty^{\beta,\Gamma}\bigg(
		\int_{SU(2)}
		\prod_{k=1}^n(I-\hat{R}_x)\varphi_{X_k}
		 Y_{\ell_\Lambda,m_\Lambda}
		\mathrm{d}R_x
		\bigg)
		\\
		=-\sum_{n\geq 1}\frac{\beta^n}{n!}
		\sum_{\substack{X_1,\ldots,X_n\\x\in X_1\cap\ldots\cap X_n}}
		\sum_{\substack{\ell_{X_1},\ldots,\ell_{X_n}\\m_{X_1},\ldots,m_{X_n}}}
		\int_{SU(2)}
		\prod_{k=1}^n
		C_{X_k,R_x}(\ell_{X_k},m_{X_k})
		\mathrm{d}R_x
		\\
		\omega_\infty^{\beta,\Gamma}\Big(
		Y_{\ell_{X_1},m_{X_1}}
		\cdots Y_{\ell_{X_n},m_{X_n}}
		 Y_{\ell_\Lambda,m_\Lambda}
		\Big)
		\,,
	\end{multline}
	where we used the Fourier-Laplace expansion discussed in Remark \ref{Rmk: spherical harmonics - set-up, identities and useful bounds} and set
	\begin{multline*}
		C_{X_k,R_x}(\ell_{X_k},m_{X_k})
		:=\Big(\prod_{y\in X_k}(2\ell_{X_k,y}+1)\Big)
		\big\langle Y_{\ell_{X_k},m_{X_k}}\big|(I-\hat{R}_x)\varphi_{X_k}
		\big\rangle_{L^2(\mathbb{S}^2_{X_k},\mu_0^{X_k})}
		\\
		=\bigg(\prod_{y\in X_k}\frac{(2\ell_{X_k,y}+1)}{[1+\ell_{X_k,y}(\ell_{X_k,y}+1)]^s}\bigg)
		\big\langle Y_{\ell_{X_k},m_{X_k}}\big|(I-\hat{R}_x)(I-\Delta_{\mathbb{S}^2,X_k})^s\varphi_{X_k}
		\big\rangle_{L^2(\mathbb{S}^2_{X_k},\mu_0^{X_k})}\,,
	\end{multline*}
	where $\Delta_{\mathbb{S}^2,X_k}:=\bigotimes_{y\in X_k}\Delta_{\mathbb{S}^2,y}$ denotes the tensor product of the Laplacians $\Delta_{\mathbb{S}^2,y}$ acting on $\mathbb{S}^2_y$.
	
	At this stage it is important to carefully analyse the product of the spherical harmonics appearing in Equation \eqref{Eq: classical uniqueness - expansion of the state}:
	\begin{multline*}
		Y_{\ell_{X_1},m_{X_1}}
		\cdots Y_{\ell_{X_n},m_{X_n}}
		 Y_{\ell_\Lambda,m_\Lambda}
		\\
		=\Big(\prod_{y\in\Lambda\cap S_n^c} Y_{\ell_{\Lambda,y},m_{\Lambda,y}}\Big)
		\prod_{y\in S_n}
		Y_{\tilde{\ell}_{\Lambda,y},\tilde{m}_{\Lambda,y}}
		Y_{\tilde{\ell}_{X_1,y},\tilde{m}_{X_1,y}}
		\cdots
		Y_{\tilde{\ell}_{X_n,y},\tilde{m}_{X_n,y}}\,,
	\end{multline*}
	where we set $\boxed{S_n}:=X_1\cup\ldots\cup X_n$ while $\tilde{\ell}_{X_k}\in\mathbb{Z}_+^{S_n}$ denotes the extension of $\ell_{X_k}$ obtained by setting $\tilde{\ell}_{X_k,y}=0$ for $y\notin X_k$ ---$\tilde{m}_{X_k}$ is defined similarly.
	In particular, by an iterated use of Equation \eqref{Eq: spherical harmonics - expansion of pointwise product} we find, for all $y\in S_n$,
	\begin{align*}
		&Y_{\tilde{\ell}_{\Lambda,y},\tilde{m}_{\Lambda,y}}
		\prod_{k=1}^n
		Y_{\tilde{\ell}_{X_k,y},\tilde{m}_{X_k,y}}
		\\
		&=\sum_{s_{y,1}=|\tilde{\ell}_{\Lambda,y}-\tilde{\ell}_{X_1,y}|}^{\tilde{\ell}_{\Lambda,y}+\tilde{\ell}_{X_1,y}}
		c_{s_{y,1}}
		Y_{s_{y,1},\tilde{m}_{\Lambda,y}+\tilde{m}_{X_1,y}}
		\prod_{k=2}^n
		Y_{\tilde{\ell}_{X_k,y},\tilde{m}_{X_k,y}}
		 \\
		&=\sum_{\substack{s_{y,1},\ldots,s_{y,n}\\|s_{y,k-1}-\tilde{\ell}_{X_k,y}|\leq s_{y,k}\leq|s_{y,k-1}+\tilde{\ell}_{X_k,y}|}}
		\Big(
		\prod_{k=1}^n c_{s_{y,k}}
		\Big)
		Y_{s_{y,n},\tilde{m}_{\Lambda,y}+\tilde{m}_{X_1,y}+\ldots+\tilde{m}_{X_n,y}}\,,
	\end{align*}
	where $c_{s_{y,k}}$, $k\in\{1,\ldots,n\}$ are defined in Equation \eqref{Eq: spherical harmonics - expansion of pointwise product} ---we omitted the $m$-dependence since it will not play any role.
	The particular values of $c_{s_{y,k}}$ are not important, however, we crucially observe that $|c_{s_{y,k}}|\leq 1$.
	For later convenience we also observe that
	\begin{align*}
		N(\ell_{X_1},\ldots,\ell_{X_n})
		&:=\prod_{y\in S_n}
		\sum_{s_{y,1}=|\tilde{\ell}_{\Lambda,y}-\tilde{\ell}_{X_1,y}|}^{\tilde{\ell}_{\Lambda,y}+\tilde{\ell}_{X_1,y}}
		\ldots
		\sum_{s_{y,n}=|s_{n-1,y}-\tilde{\ell}_{X_n,y}|}^{s_{n-1,y}+\tilde{\ell}_{X_n,y}}
		\\
		&\leq\prod_{y\in S_n}\prod_{k=1}^n(2\tilde{\ell}_{X_k,y}+1)
		=\prod_{k=1}^n\prod_{y\in X_k}(2\ell_{X_k,y}+1)\,,
	\end{align*}
	because $\sum_{s=|a-b|}^{a+b}=2\min\{a,b\}+1$.
	This implies that, once considering the product over $y\in S_n$ and expanding the resulting sum, the product of spherical harmonics considered above can be written as a sum of at most $N(\ell_{X_1},\ldots,\ell_{X_n})$ terms of the form $Y_{\ell_{S_n}^k,m_{S_n}^k}$ where $\ell_{S_n}^k,m_{S_n}^k$, $k=1,\ldots,N(\ell_{X_1},\ldots,\ell_{X_n})$, are built out of $\ell_{\Lambda\cap S_n},\ell_{X_1},\ldots\ell_{X_n}$.
	Explicitly we have
	\begin{align*}
		&Y_{\ell_{X_1},m_{X_1}}
		\cdots Y_{\ell_{X_n},m_{X_n}}
		Y_{\ell_\Lambda,m_\Lambda}
		\\
		&=\Big(\prod_{y\in\Lambda\cap S_n^c} Y_{\ell_{\Lambda,y},m_{\Lambda,y}}\Big)
		\prod_{y\in S_n}
		\sum_{s_{y,1},\ldots,s_{y,n}}
		\Big(
		\prod_{k=1}^n c_{s_{y,k}}
		\Big)
		Y_{s_{y,n},\tilde{m}_{\Lambda,y}+\tilde{m}_{X_1,y}+\ldots+\tilde{m}_{X_n,y}}
		\\
		&=\Big(\prod_{y\in\Lambda\cap S_n^c} Y_{\ell_{\Lambda,y},m_{\Lambda,y}}\Big)
		\sum_{k=1}^{N(\ell_{X_1},\ldots,\ell_{X_n})}
		C'(\ell_{S_n}^k,m_{S_n}^k)
		Y_{\ell_{S_n}^k,m_{S_n}^k}\,,
	\end{align*}
	where the explicit expression of the coefficients $C'(\ell_{S_n}^k,m_{S_n}^k)$ will not matter in the forthcoming discussion, however, it will be important to observe that $|C'(\ell_{S_n}^k,m_{S_n}^k)|\leq 1$.
	Summing up, any $(\beta,\delta_\infty^\Gamma)$-KMS $\omega_\infty^{\beta,\Gamma}\in S(B_\infty^\Gamma)$ fulfils
	\begin{multline*}
		\underline{\omega}_\infty^{\beta,\Gamma}(\ell_\Lambda,m_\Lambda)
		=-\sum_{n\geq 1}\frac{\beta^n}{n!}
		\sum_{\substack{X_1,\ldots,X_n\\x\in X_1\cap\ldots\cap X_n}}
		\sum_{\substack{\ell_{X_1},\ldots,\ell_{X_n}\\m_{X_1},\ldots,m_{X_n}}}
		\int_{SU(2)}
		\prod_{k=1}^n
		C_{X_k,R_x}(\ell_{X_k},m_{X_k})\mathrm{d}R_x
		\\
		\sum_{k=1}^{N(\ell_{X_1},\ldots,\ell_{X_n})}
		C'(\ell_{S_n}^k,m_{S_n}^k)
		\omega_\infty^{\beta,\Gamma}\Big(
		Y_{\ell_{S_n}^k,m_{S_n}^k}
		\prod_{y\in\Lambda\cap S_n^c} Y_{\ell_{\Lambda,y},m_{\Lambda,y}}
		\Big)\,.
	\end{multline*}
	Let $\boxed{X(n,\Lambda)}:=\Lambda\cup S_n$ and set $\ell_{X(n,\Lambda)}^k:=\ell_{S_n}^k\ell_{\Lambda\cap S_n^c}$ (\textit{i.e.} $\ell_{X(n,\Lambda),y}^k=\ell_{\Lambda,y}$ if $y\in\Lambda\cap S_n^c$ and $\ell_{X(n,\Lambda),y}^k=\ell_{S_n,y}^k$ if $y\in S_n$) and similarly $m_{X(n,\Lambda)}=m_{S_n}m_{\Lambda\cap S_n^c}$.
	Then the above equality can be written as a linear equation in $\underline{\mathsf{X}}$, in particular
	\begin{align}\label{Eq: Banach space for state - linear equation for classical uniqueness}
		(I-L_\infty^\beta)\underline{\omega}_\infty^{\beta,\Gamma}=\underline{\delta}\,.
	\end{align}
	Here $\underline{\delta}\in\underline{\mathsf{X}}$ and $L_\infty^\beta\in\mathcal{B}(\underline{\mathsf{X}})$ are defined by
	\begin{align}\label{Eq: Banach space for state - non-homogeneous term}
		\underline{\delta}_\Lambda(\ell_\Lambda,m_\Lambda)
		:=\begin{dcases}
			1&\Lambda=\varnothing
			\\
			0&\Lambda\neq\varnothing
		\end{dcases}\,,
	\end{align}
	while for all $\underline{f}\in\underline{\mathsf{X}}$ we set $(L_\infty^\beta\underline{f})_\varnothing=0$ and
	\begin{multline}\label{Eq: Banach space for state - linear operator for classical uniqueness}
		(L_\infty^\beta\underline{f})_\Lambda(\ell_\Lambda,m_\Lambda)
		:=-\sum_{n\geq 1}\frac{\beta^n}{n!}
		\sum_{\substack{X_1,\ldots,X_n\\x\in X_1\cap\ldots\cap X_n}}
		\sum_{\substack{\ell_{X_1},\ldots,\ell_{X_n}\\m_{X_1},\ldots,m_{X_n}}}
		\int_{SU(2)}
		\prod_{k=1}^n
		C_{X_k,R_x}(\ell_{X_k},m_{X_k})\mathrm{d}R_x
		\\
		\sum_{k=1}^{N(\ell_{X_1},\ldots,\ell_{X_n})}
		C'(\ell_{S_n}^k,m_{S_n}^k)
		f_{X(n,\Lambda)}(\ell_{X(n,\Lambda)}^k,m_{X(n,\Lambda)}^k)\,,
	\end{multline}
	for all non-empty $\Lambda\Subset\Gamma$ ---we recall that we set $x:=\min_{y\in\Lambda}y$.
	
	At this stage we may bound $\|L_\infty^\beta\|_{\mathcal{B}(\underline{\mathsf{X}})}$ in such a way that $\|L_\infty^\beta\|_{\mathcal{B}(\underline{\mathsf{X}})}<1$ if $\beta$ is small enough.
	This will ensure that \eqref{Eq: Banach space for state - linear equation for classical uniqueness} has a unique solution $\underline{\omega}_\infty^{\beta,\Gamma}\in\underline{\mathsf{X}}$, therefore, its associated state $\omega_\infty^{\beta,\Gamma}$ will be the unique $(\beta,\delta_\infty^\Gamma)$-KMS classical states on $B_\infty^\Gamma$.
	(In passing, the forthcoming estimates will also prove that $L_\infty^\beta$ is bounded on $\underline{\mathsf{X}}$.)
	To this avail we observe that,
	\begin{multline*}
		\sup_{\ell_\Lambda,m_\Lambda}|(L_\infty^\beta\underline{f})_\Lambda(\ell_\Lambda,m_\Lambda)|
		\\
		\leq\|\underline{f}\|_{\underline{\mathsf{X}}}
		\sum_{n\geq 1}\frac{\beta^n}{n!}
		\sum_{\substack{X_1,\ldots,X_n\\x\in X_1\cap\ldots\cap X_n}}
		\sum_{\substack{\ell_{X_1},\ldots,\ell_{X_n}\\m_{X_1},\ldots,m_{X_n}}}
		\int_{SU(2)}
		\prod_{k=1}^n
		|C_{X_k,R_x}(\ell_{X_k},m_{X_k})|\prod_{y\in X_k}(2\ell_{X_k,y}+1)\mathrm{d}R_x\,,
	\end{multline*}
	where we used the bound on $|C'(\ell_{S_n}^k,m_{S_n}^k)|$ and on $N(\ell_{X_1},\ldots,\ell_{X_n})$.
	Moreover, proceeding as in Remark \ref{Rmk: spherical harmonics - set-up, identities and useful bounds} we have
	\ndnote{$2$-question here}
	\begin{align*}
		&\sum_{\substack{\ell_{X_1},\ldots,\ell_{X_n}\\m_{X_1},\ldots,m_{X_n}}}
		\int_{SU(2)}
		\prod_{k=1}^n
		|C_{X_k,R_x}(\ell_{X_k},m_{X_k})|\prod_{y\in X_k}(2\ell_{X_k,y}+1)\mathrm{d}R_x
		\\
		&\leq\sum_{\ell_{X_1},\ldots,\ell_{X_n}}
		2^n\prod_{k=1}^n
		\bigg(\prod_{y\in X_k}
		\frac{(\ell_{X_k,y}+1)^{5/2}}{[1+\ell_{X_k,y}(\ell_{X_k,y}+1)]^s}
		\bigg)
		(C_\Delta^s)^{|X_k|}\|\varphi_{X_k}\|_{C^{2s}(\mathbb{S}^2_{X_k})}
		\\
		&=2^n\prod_{k=1}^n
		(C_\Delta^sK_s)^{|X_k|}\|\varphi_{X_k}\|_{C^{2s}(\mathbb{S}^2_{X_k})}
		\,,
	\end{align*}
	where $K_s$ has been defined in Equation \eqref{Eq: uniqueness result for classical KMS states - useful constants}.
	It follows that
	\begin{align*}
		\sup_{\ell_\Lambda,m_\Lambda}|(L_\infty^\beta\underline{f})_\Lambda(\ell_\Lambda,m_\Lambda)|
		&\leq\|\underline{f}\|_{\underline{\mathsf{X}}}
		\sum_{n\geq 1}\frac{(2\beta)^n}{n!}
		\sum_{\substack{X_1,\ldots,X_n\\x\in X_1\cap\ldots\cap X_n}}
		\prod_{k=1}^n
		(C_\Delta^sK_s)^{|X_k|}\|\varphi_{X_k}\|_{C^{2s}(\mathbb{S}^2_{X_k})}
		\\
		&=\|\underline{f}\|_{\underline{\mathsf{X}}}
		\sum_{n\geq 1}\frac{1}{n!}
		\bigg(
		2\beta\sum_{\substack{X\Subset\Gamma\\x\in X}}
		(C_\Delta^sK_s)^{|X|}
		\|\varphi_X\|_{C^{2s}(\mathbb{S}^2_X)}\bigg)^n\,.
	\end{align*}
	Finally, we have
	\begin{align*}
		\sum_{\substack{X\Subset\Gamma\\x\in X}}
		(C_\Delta^sK_s)^{|X|}
		\|\varphi_X\|_{C^{2s}(\mathbb{S}^2_X)}
		&=\sum_{m\geq 0}(C_\Delta^sK_s)^{m+1}
		\sum_{\substack{|X|=m+1\\x\in X}}
		\|\varphi_X\|_{C^{2s}(\mathbb{S}^2_X)}
		\\
		&\leq\sum_{m\geq 0}(C_\Delta^sK_s)^{m+1}
		\sup_{x\in\Gamma}\sum_{\substack{|X|=m+1\\x\in X}}
		\|\varphi_X\|_{C^{2s}(\mathbb{S}^2_X)}
		=C_\Delta^sK_s\|\varphi\|_{0,s}\,,
	\end{align*}
	where $\|\varphi\|_{0,s}$ has been defined in Equation \eqref{Eq: uniqueness result for classical KMS states - assumption on potential}.
	Thus, $L_\infty^\beta\in\mathcal{B}(\underline{\mathsf{X}})$ with
	\begin{align*}
		\|L_\infty^\beta\|_{\mathcal{B}(\underline{\mathsf{X}})}
		\leq \exp\Big[2C_\Delta^sK_s\beta\|\varphi\|_{0,s}\Big]-1\,,
	\end{align*}
	which implies $\|L_\infty^\beta\|_{\mathcal{B}(\underline{\mathsf{X}})}<1$ provided $\beta<\beta_{0,s}$, where $\beta_{0,s}$ has been defined in Equation \eqref{Eq: uniqueness result for classical KMS states - estimate on critical temperature}.
\end{proof}

\subsection{Uniqueness result for quantum KMS state}
\label{Subsec: uniqueness result for quantum KMS state}

The goal of this section is to prove a uniqueness result for $(\beta,\delta_j^\Gamma)$-KMS quantum states on $B_j^\Gamma$, \textit{cf.} Theorem \ref{Thm: uniqueness result for quantum KMS states}
The latter applies under hypothesis very similar to those of Theorem \ref{Thm: uniqueness result for classical KMS states}.
In fact, Theorems \ref{Thm: uniqueness result for classical KMS states}, \ref{Thm: uniqueness result for quantum KMS states} will imply that, under suitably mild assumptions, for high enough temperatures there is absence of both classical and quantum phase transitions, \textit{cf.} Remark \ref{Rmk: NO CPT implies NO QPT; comparison with BR result}.
The proof of Theorem \ref{Thm: uniqueness result for quantum KMS states} is inspired by \cite[Prop. 6.2.45]{Bratteli_Robinson_97}, see also \cite{Frohlich_Ueltschi_2015}, although it requires a different argument to ensure a uniform bound on $j\in\mathbb{Z}_+/2$.

\begin{theorem}\label{Thm: uniqueness result for quantum KMS states}
	Let $\varphi:=\{\varphi_\Lambda\}_{\Lambda\Subset\Gamma}$ with $\varphi\in C^{2s}(\mathbb{S}^2_\Lambda)$, $s>7/4$.
	Let assume that there exists $\varepsilon>0$ such that
	\begin{align}\label{Eq: uniqueness result for quantum KMS states - assumption on potential}
		\|\varphi\|_{\varepsilon,s}
		:=\sum_{m\geq 0}(e^{\varepsilon}K_sC_\Delta^s)^m
		\sup_{y\in\Gamma}
		\sum_{\substack{|X|=m+1\\X\ni y}}
		\|\varphi_X\|_{C^{2s}(\mathbb{S}^2_X)}
		<+\infty\,,
	\end{align}
	where $C_s>1$ and $C_\Delta$ have been defined in Equation \eqref{Eq: uniqueness result for classical KMS states - useful constants}.
	Then there exists a unique $(\beta,\delta_j^\Gamma)$-KMS quantum states on $B_j^\Gamma$ for all $\beta\in[0,\beta_{\varepsilon,s})$ where
	\begin{align}\label{Eq: uniqueness result for quantum KMS states - estimate on critical temperature}
		\beta_{\varepsilon,s}
		:=\frac{\varepsilon}{1+e^\varepsilon}\frac{1}{2K_sC_\Delta^s\|\varphi\|_{\varepsilon,s}}\,.
	\end{align}
\end{theorem}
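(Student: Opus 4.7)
The plan is to mirror the architecture of the proof of Theorem \ref{Thm: uniqueness result for classical KMS states}, adapting every ingredient to the non-commutative setting. Any state $\omega_j^{\beta,\Gamma}\in S(B_j^\Gamma)$ is uniquely determined by the coefficients $\underline{\omega}_j^{\beta,\Gamma}(\ell_\Lambda,m_\Lambda):=\omega_j^{\beta,\Gamma}(Q_j^\Gamma(Y_{\ell_\Lambda,m_\Lambda}))$ with $\ell_{\Lambda,y}\in\{0,\ldots,2j\}$ and $m_{\Lambda,y}\in[-\ell_{\Lambda,y},\ell_{\Lambda,y}]\cap\mathbb{Z}$ for $y\in\Lambda\Subset\Gamma$, because $\{Q_j(Y_{\ell,m})\}$ is a complete orthogonal system in $B_j$ by Remark \ref{Rmk: Berezin quantization useful remark}. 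This realises $\underline{\omega}_j^{\beta,\Gamma}$ in a Banach space $\underline{\mathsf{X}}_j$ strictly analogous to $\underline{\mathsf{X}}$. The goal is then to derive a quantum Kirkwood--Salzburg equation $(I-L_j^\beta)\underline{\omega}_j^{\beta,\Gamma}=\underline{\delta}$ and to show that $\|L_j^\beta\|_{\mathcal{B}(\underline{\mathsf{X}}_j)}<1$ \emph{uniformly in $j$} for $\beta<\beta_{\varepsilon,s}$, from which existence and uniqueness follow by geometric series inversion.

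The key step is a quantum counterpart of Lemma \ref{Lem: classical KMS condition and rotations}. For $R_x\in SU(2)$, the map $\tilde{D}^{(j)}(R_x)=\operatorname{Ad}_{D^{(j)}(R_x)}$ is an inner $*$-automorphism of $B_j^\Gamma$ localised at $x$, so $\omega_j^{\beta,\Gamma}\circ\tilde{D}^{(j)}(R_x)$ is KMS for the rotated dynamics generated by $\tilde{D}^{(j)}(R_x)\circ\delta_j^\Gamma\circ\tilde{D}^{(j)}(R_x)^{-1}$, which differs from $\delta_j^\Gamma$ by $\operatorname{ad}(P_{j,R_x})$ with $P_{j,R_x}:=\sum_{X\ni x}(\tilde{D}^{(j)}(R_x)-I)(\Phi_{j,X})$, precisely the non-commutative avatar of the classical cocycle $\sum_{X\ni x}(I-\hat{R}_x)\varphi_X$. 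Applying Araki's perturbation theory for KMS states (\textit{cf.}\ \cite[\S 5.4.1]{Bratteli_Robinson_97}) I would obtain a quantum rotation identity of the form
\begin{align*}
\omega_j^{\beta,\Gamma}(A_\Lambda)=\omega_j^{\beta,\Gamma}\bigl(\mathcal{E}_{j,R_x}^{\beta}\,\tilde{D}^{(j)}(R_x)(A_\Lambda)\bigr)\,,
\end{align*}
where $\mathcal{E}_{j,R_x}^{\beta}$ is a time-ordered Dyson exponential in the local perturbation $P_{j,R_x}$ that reduces, in the classical limit, to $e^{\beta\sum_{X\ni x}(I-\hat{R}_x)\varphi_X}$.

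Next, fixing $x:=\min\Lambda$, I would expand $\mathcal{E}_{j,R_x}^{\beta}$ into its Dyson series, insert the Fourier--Laplace expansion of each factor $\Phi_{j,X_k}=Q_j^{X_k}(\varphi_{X_k})$ in the basis $\{Q_j(Y_{\ell_{X_k},m_{X_k}})\}$, and average over $R_x\in SU(2)$ against the normalised Haar measure. The intertwining property \eqref{Eq: Berezin SDQ - left-action representation intertwining property} converts integrals of $\tilde{D}^{(j)}(R_x)$ into integrals of $\hat{R}_x$, and the vanishing identity \eqref{Eq: spherical harmonics - consequence of Shur Lemma} kills the leading ($n=0$) Dyson term whenever $\Lambda\neq\varnothing$. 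Combined with the Clebsch--Gordan expansion of products of spherical harmonics already used in the classical proof, this produces the desired equation $(I-L_j^\beta)\underline{\omega}_j^{\beta,\Gamma}=\underline{\delta}$, where $L_j^\beta$ is a sum over cluster configurations $X_1,\ldots,X_n\ni x$, Fourier--Laplace labels, and a Dyson simplex $0\le t_1\le\cdots\le t_n\le\beta$ of volume $\beta^n/n!$.

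The main obstacle, and the reason for the strengthened assumption \eqref{Eq: uniqueness result for quantum KMS states - assumption on potential}, is to control $\|L_j^\beta\|_{\mathcal{B}(\underline{\mathsf{X}}_j)}$ \emph{uniformly in $j$}. Each commutator factor in the Dyson series satisfies $\|[\Phi_{j,X_k},\,\cdot\,]\|\le 2\|Q_j(\varphi_{X_k})\|\le 2\|\varphi_{X_k}\|_{B_\infty^{X_k}}$ by contractivity of the Berezin quantisation map, so every individual cluster contribution is dominated by exactly the same $C^{2s}$-estimate appearing in the classical proof, with no $(2j+1)$-dependence. The Dyson time-integration contributes $\beta^n/n!$, while the non-commutative restructuring of the time-ordered exponential, together with the sup-norm cost $\sup_{t\in[0,\beta]}\|e^{-tP_{j,R_x}}\|\le e^{\beta\|P_{j,R_x}\|}$, produces an additional combinatorial weight of order $(1+e^\varepsilon)^n/\varepsilon^n$ per order $n$; the factor $e^\varepsilon$ is then absorbed into the sum over cluster sizes by the strengthened norm $\|\varphi\|_{\varepsilon,s}$, exactly as in the classical estimate but with $C_\Delta^s K_s$ replaced by $e^\varepsilon C_\Delta^s K_s$. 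Summing as in the last step of the proof of Theorem \ref{Thm: uniqueness result for classical KMS states} gives $\|L_j^\beta\|_{\mathcal{B}(\underline{\mathsf{X}}_j)}\le\exp\bigl[2(1+e^\varepsilon)\beta K_s C_\Delta^s\|\varphi\|_{\varepsilon,s}/\varepsilon\bigr]-1$ uniformly in $j\in\mathbb{Z}_+/2$, which is strictly less than $1$ precisely when $\beta<\beta_{\varepsilon,s}$ as defined in \eqref{Eq: uniqueness result for quantum KMS states - estimate on critical temperature}, concluding the proof.
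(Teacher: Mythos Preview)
Your overall architecture is right --- encode the state by its values on quantised spherical harmonics, use a rotation identity at a distinguished site $x$ to derive a Kirkwood--Salzburg equation, and bound the kernel uniformly in $j$ --- but two concrete steps diverge from what actually works and from the paper's argument.

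First, the paper does not invoke Araki perturbation theory. It applies the KMS condition \emph{directly} in the form $\omega_j^{\beta,\Gamma}(D^{(j)}(R_x)\,\mathcal{Y}\,D^{(j)}(R_x)^*)=\omega_j^{\beta,\Gamma}(\mathcal{Y}\,D^{(j)}(R_x)^*\,\tau^\Gamma_{i\beta}(D^{(j)}(R_x)))$ with the unitary $D^{(j)}(R_x)\in B_j^{\{x\}}$, and then expands $\tau^\Gamma_{i\beta}(D^{(j)}(R_x))$ via the analytic series \eqref{Eq: analyticity of local observables} in iterated commutators. The resulting $W_n=D^{(j)}(R_x)^*\operatorname{ad}_{\Phi_{j,X_n}}\!\cdots\operatorname{ad}_{\Phi_{j,X_1}}D^{(j)}(R_x)$ is unwound by a sign recursion into $2^n$ ordered products of $Q_j^{X_k}(\hat R_x\varphi_{X_k})$ and $-Q_j^{X_k}(\varphi_{X_k})$, each expanded in the normalised basis $\mathcal{Y}_{j|\ell,m}:=c_{j,\ell}^{-1/2}Q_j(Y_{\ell,m})$ (the normalisation is needed so that both $\|\mathcal{Y}_{j|\ell,m}\|_{B_j}\le1$ and the product-expansion coefficients in \eqref{Eq: Berezin SDQ - product HS orthogonal quantization of spherical harmonics} are bounded by $1$). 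Your Dyson cocycle $\mathcal{E}^\beta_{j,R_x}$ and the estimate $\|e^{-tP}\|\le e^{\beta\|P\|}$ play no role here.

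Second --- and this is the substantive gap --- the cluster constraint in the quantum expansion is \emph{not} $X_1,\ldots,X_n\ni x$ as you claim. Because $(\delta_j^\Gamma)^n$ acts by iterated commutators, the support spreads: the constraint is $X_q\cap S_{q-1}\neq\varnothing$ with $S_0=\{x\}$ and $S_q=S_{q-1}\cup X_q$. This is precisely why the strengthened norm \eqref{Eq: uniqueness result for quantum KMS states - assumption on potential} is needed and why the final bound is not of the classical $\exp[\,\cdot\,]-1$ form. The paper controls this tree-like sum via $\prod_{k=1}^n(1+m_1+\cdots+m_{k-1})\le n!\,\varepsilon^{-n}e^{\varepsilon(1+m_1+\cdots+m_n)}$, which absorbs the $e^{\varepsilon m}$ into $\|\varphi\|_{\varepsilon,s}$ and yields $\|L_j^\beta\|_{\mathcal B(\underline{\mathsf X})}\le e^\varepsilon z/(1-z)$ with $z=2\varepsilon^{-1}\beta K_sC_\Delta^s\|\varphi\|_{\varepsilon,s}$; this is $<1$ exactly when $\beta<\beta_{\varepsilon,s}$. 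Your claimed bound $\exp[2(1+e^\varepsilon)\beta K_sC_\Delta^s\|\varphi\|_{\varepsilon,s}/\varepsilon]-1$, even granted, is $<1$ only for $\beta<\beta_{\varepsilon,s}\log 2$, so it does not recover the stated threshold.
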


The proof of Theorem \ref{Thm: uniqueness result for quantum KMS states} is similar in spirit to the one of Theorem \ref{Thm: uniqueness result for classical KMS states}.
As such, it requires a few technical observations which we will recollect in the following remark.

\begin{remark}\label{Rmk: spherical harmonics - Hilbert-Schmidt orthornormal basis}
	\noindent
	\begin{enumerate}[(i)]
		\item
		Recalling Remark \ref{Rmk: Berezin quantization useful remark}, the set $\{Q_j(Y_{\ell,m})\,|\,\ell\in\mathbb{Z}_+\,,\,m\in[-\ell,\ell]\cap\mathbb{Z}\}$ is an orthogonal basis of $B_j=M_{2j+1}(\mathbb{C})$ with respect to the Hilbert-Schmidt scalar product \eqref{Eq: Hilbert-Schmidt scalar product} ---notice that $Q_j(Y_{\ell,m})=0$ if $\ell>2j$, \textit{cf.} Equation \eqref{Eq: Berezin SDQ - quantization of spherical harmonics}.
		In what follows we will normalize $Q_j(Y_{\ell,m})$ by setting
		\begin{align}\label{Eq: Berezin SDQ - HS normalized quantization of spherical harmonics}
			\mathcal{Y}_{j|\ell,m}:=\frac{1}{\sqrt{c_{j,\ell}}}Q_j(Y_{\ell,m})\,,
		\end{align}
		where $c_{j,\ell}>0$ has been defined in Equation \eqref{Eq: Berezin SDQ - check of spherical harmonics}: With this choice we find
		\begin{align}\label{Eq: Berezin SDQ - HS normalized quantization of spherical harmonics - properties}
			\|\mathcal{Y}_{j|\ell,m}\|_{\textsc{hs}}
			=\frac{1}{\sqrt{2\ell+1}}\,,
			\qquad
			\|\mathcal{Y}_{j|\ell,m}\|_{B_j}
			\leq 1\,.
		\end{align}
		Indeed by direct inspection we have
		\begin{align*}
			\|\mathcal{Y}_{j|\ell,m}\|_{\textsc{hs}}^2
			=\frac{1}{c_{j,\ell}}\langle Q_j(Y_{\ell,m})|Q_j(Y_{\ell,m})\rangle_{\textsc{hs}}
			=\frac{1}{c_{j,\ell}}\langle Y_{\ell,m}|\check{Y}_{\ell,m}\rangle_{L^2(\mathbb{S}^2,\mu_0)}
			=\frac{1}{2\ell+1}\,,
		\end{align*}
		where we used Equations \eqref{Eq: Berezin SDQ - scalar product intertwining property} and \eqref{Eq: spherical harmonics - convention}.
		Moreover, for all $\psi\in\mathbb{C}^{2j+1}$,
		\begin{align*}
			\|\mathcal{Y}_{j|\ell,m}\psi\|^2
			&=\frac{1}{c_{j,\ell}}
			\sum_{m'=-j}^j
			(\textsc{cg}_{\ell,m;j,m'}^{j,m+m'})^2
			(\textsc{cg}_{\ell,0;j,j}^{j,j})^2
			|\langle j,m'|\psi\rangle|^2
			\\
			&=\sum_{m'=-j}^j
			(\textsc{cg}_{\ell,m;j,m'}^{j,m+m'})^2
			|\langle j,m'|\psi\rangle|^2
			\leq\|\psi\|^2\,,
		\end{align*}
		where we used the explicit expression obtained in Example \ref{Ex: Berezin SDQ - quantization of spherical harmonics}.
        It is worth to mention that further properties of the matrices $\mathcal{Y}_{j|\ell,m}$ have been studied in \cite{Hoppe_1989}.
        
		For any finite region $\Lambda\Subset\Gamma$ we define $\boxed{\mathcal{Y}_{j|\ell_\Lambda,m_\Lambda}}\in B_j^\Lambda$ by setting
		\begin{align*}
			\mathcal{Y}_{j|\ell_\Lambda,m_\Lambda}
			:=\bigotimes_{x\in\Lambda}\mathcal{Y}_{j|\ell_x,m_x}
			\in B_j^\Lambda\subset B_j^\Gamma\,,
		\end{align*}
		where $\ell_\Lambda\in\mathbb{Z}_+^\Lambda$ and $m_\Lambda\in\mathbb{Z}^\Lambda$ are such that $m_x\in[-\ell_x,\ell_x]$ for all $x\in\Lambda$.
		Then $\{\mathcal{Y}_{j|\ell_\Lambda,m_\Lambda}\,|\,\ell_\Lambda\in\mathbb{Z}_+^\Lambda\,,\,m_\Lambda\in\mathbb{Z}^\Lambda\,,\,m_x\in[-\ell_x,\ell_x]\;\forall x\in\Lambda\}$ is a orthogonal basis of $B_j^\Lambda$ with respect to the Hilbert-Schmidt scalar product \eqref{Eq: Hilbert-Schmidt scalar product}.
		In particular, if $A_\Lambda\in B_j^\Lambda$ then
		\begin{align}\label{Eq: Hilbert-Schimdt expansion - finite region}
			A_\Lambda=\sum_{\ell_\Lambda,m_\Lambda}
			\Big(\prod_{x\in\Lambda}(2\ell_x+1)\Big)
			\langle\mathcal{Y}_{j|\ell_\Lambda,m_\Lambda}|A_\Lambda\rangle_{\textsc{hs}}
			\mathcal{Y}_{j|\ell_\Lambda,m_\Lambda}\,,
		\end{align}
		see Equation \eqref{Eq: Fourier-Laplace expansion - finite region} for comparison.

		\item
		A crucial step in the proof of Theorem \ref{Thm: uniqueness result for classical KMS states} is the use of Equation \eqref{Eq: spherical harmonics - consequence of Shur Lemma}, \textit{cf.} Remark \ref{Rmk: uniqueness result for classical KMS states - beta=0 case}.
		Thanks to Equation \eqref{Eq: Berezin SDQ - left-action representation intertwining property} an analogous property holds in the quantum setting.
		Specifically, by proceeding as in Remark \ref{Rmk: uniqueness result for classical KMS states - beta=0 case} we find
		\begin{align}\label{Eq: spherical harmonics - consequence of Shur Lemma for HS orthonormal basis}
			\int_{SU(2)}\tilde{D}^{(j)}(R)\mathcal{Y}_{j|\ell,m}\mathrm{d}R
			=\frac{1}{\sqrt{c_{j,\ell}}}Q_j\bigg(
			\int_{SU(2)}\hat{R}Y_{\ell,m}\mathrm{d}R
			\bigg)
			=0\,.
		\end{align}
	
		\item
		Similarly to the classical case, \textit{cf.} Remark \ref{Rmk: Banach space for state - main definitions}, any state $\omega_j^\Gamma\in S(B_j^\Gamma)$ is uniquely determined by its associated element $\underline{\omega}_j^\Gamma\in\underline{\mathsf{X}}$ defined by
		\begin{align*}
			\underline{\omega}_j^\Gamma(\ell_\Lambda,m_\Lambda)
			:=\begin{cases}
				1&\Lambda=\varnothing
				\\
				\omega_j^\Gamma(\mathcal{Y}_{j|\ell_\Lambda,m_\Lambda})
				&\Lambda\neq\varnothing
			\end{cases}\,.
		\end{align*}
		Notice that $\underline{\omega}_j$ is an element of $\textsf{X}$, \textit{cf.} Remark \ref{Rmk: Banach space for state - main definitions}, because of the bound
		\begin{align*}
			|\underline{\omega}_j(\ell_\Lambda,m_\Lambda)|
			=|\omega_j(\mathcal{Y}_{j|\ell_\Lambda,m_\Lambda})|
			\leq\|\mathcal{Y}_{j|\ell_\Lambda,m_\Lambda}\|_{B_j^\Lambda}
			\leq 1\,,
		\end{align*}
		where we used the second inequality in \eqref{Eq: Berezin SDQ - HS normalized quantization of spherical harmonics - properties}.
		
		\item
		For later convenience we also discuss a quantum version of Equation \eqref{Eq: spherical harmonics - expansion of pointwise product}.
		Indeed, once again thanks to Equation \eqref{Eq: Berezin SDQ - scalar product intertwining property}, an analogous identity holds for the $\mathcal{Y}$'s.
		This can be either argued by observing that the $\mathcal{Y}_{j|\ell,m}$ are spherical tensors of order $\ell$ with respect to the representation $\tilde{D}^{(j)}$, \textit{cf.} \cite[\S 3.11]{Napolitano_Sakuraki_2021}, or by direct inspection.
		In more details, let $\ell_1,\ell_2\in\mathbb{Z}_+$ and $m_1,m_2\in\mathbb{Z}$ with $m_k\in[-\ell_k,\ell_k]$, $k\in\{1,2\}$.
		Equation \eqref{Eq: Hilbert-Schimdt expansion - finite region} leads to
		\begin{align*}
			\mathcal{Y}_{j|\ell_1,m_1}\mathcal{Y}_{j|\ell_2,m_2}
			=\sum_\ell(2\ell+1)
			\langle\mathcal{Y}_{j|\ell,m_1+m_2}|\mathcal{Y}_{j|\ell_1,m_1}\mathcal{Y}_{j|\ell_2,m_2}\rangle_{\textsc{hs}}
			\mathcal{Y}_{j|\ell,m_1+m_2}\,,
		\end{align*}
		where the restriction to $m=m_1+m_2$ is obtained by acting on both side of the equality with $\tilde{D}^{(j)}(e^{iJ_z})$ or by computing the coefficients directly.
		
		Equation \eqref{Eq: Berezin SDQ - product HS orthogonal quantization of spherical harmonics} can be seen as a quantum version of Equation \eqref{Eq: spherical harmonics - expansion of pointwise product}.
		Moreover, the coefficient appearing in Equation \eqref{Eq: Berezin SDQ - product HS orthogonal quantization of spherical harmonics} can be computed in a fairly explicit fashion.
		In particular we find
		\begin{align*}
			&(2\ell+1)
			|\langle\mathcal{Y}_{j|\ell,m_1+m_2}|\mathcal{Y}_{j|\ell_1,m_1}\mathcal{Y}_{j|\ell_2,m_2}\rangle_{\textsc{hs}}|
			\\
			&=\frac{2\ell+1}{2j+1}
			\bigg|\sum_{M,M_1,M_2=-j}^j
			\textsc{cg}_{\ell,m;j,M}^{j,M_1}
			\textsc{cg}_{\ell_1,m_1;j,M_2}^{j,M_1}
			\textsc{cg}_{\ell_2,m_2;j,M}^{j,M_2}\bigg|
			\\
			&=\sqrt{(2\ell+1)(2j+1)}
			\bigg|\textsc{cg}_{\ell_1,m_1;\ell_2,m_2}^{\ell,m}
			\begin{Bmatrix}
				j&j&\ell_2
				\\
				\ell&\ell_1&j
			\end{Bmatrix}\bigg|\,,
		\end{align*}
		where we used a few properties of the Clebsch-Gordan coefficients and the definition of the $6j$-symbols, \textit{cf.} \cite[\S 8.7 and \S 9.1]{Khersonskii_Moskalev_Varshalovich_1988}.
		Notice that the appearance of the Clebsch-Gordan coefficient $\textsc{cg}_{\ell_1,m_1;\ell_2,m_2}^{\ell,m}$ ensures that the coefficients vanish unless $|\ell_1-\ell_2|\leq\ell\leq\ell_1+\ell_2$.
		Moreover, since it is known that the matrix
		\begin{align*}
			C_{pq}:=\sqrt{2p+1}\sqrt{2q+1}
			\begin{Bmatrix}
				j_1&j_2&p
				\\
				\ell_1&\ell_2&q
			\end{Bmatrix}\,,
		\end{align*}
		is orthogonal, \textit{cf.} \cite[9.1.1]{Khersonskii_Moskalev_Varshalovich_1988},we have
		\begin{align*}
			(2\ell+1)
			|\langle\mathcal{Y}_{j|\ell,m_1+m_2}|\mathcal{Y}_{j|\ell_1,m_1}\mathcal{Y}_{j|\ell_2,m_2}\rangle_{\textsc{hs}}|
			\leq 1\,.
		\end{align*}
		Overall we have
		\begin{align}\label{Eq: Berezin SDQ - product HS orthogonal quantization of spherical harmonics}
			\mathcal{Y}_{j|\ell_1,m_1}\mathcal{Y}_{j|\ell_2,m_2}
			&=\sum_{\ell=|\ell_1-\ell_2|}^{\ell_1+\ell_2}(2\ell+1)
			c_{j|\ell,m_1,m_2}
			\mathcal{Y}_{j|\ell,m_1+m_2}\,,
			&|c_{j|\ell,m}|\leq 1\,,
		\end{align}
		which will play the same role in the proof of Theorem \ref{Thm: uniqueness result for quantum KMS states} of Equation \eqref{Eq: spherical harmonics - expansion of pointwise product} in the proof of Theorem \ref{Thm: uniqueness result for classical KMS states}.
	\end{enumerate}
\end{remark}

\begin{proof}[Proof of Thm. \ref{Thm: uniqueness result for quantum KMS states}]
	Let $\omega_j^{\beta,\Gamma}\in S(B_j^\Gamma)$ be a $(\beta,\delta_j^\Gamma)$-KMS quantum states and let $\underline{\omega}_j^{\beta,\Gamma}\in\underline{\mathsf{X}}$ be its associated element of $\underline{\mathsf{X}}$.
	Proceeding as in the proof of Theorem \ref{Thm: uniqueness result for classical KMS states}, we will prove that $\underline{\omega}_j^{\beta,\Gamma}$ is a solution to a linear equation in $\underline{\mathsf{X}}$ which admits a unique solution under assumption \eqref{Eq: uniqueness result for quantum KMS states - assumption on potential}
	
	To this avail we consider the ordering on $\Gamma$ induced by an arbitrary but fixed bijection $\Gamma\simeq\mathbb{Z}$ and set $x:=\min_{y\in\Lambda}y$ for $\Lambda\Subset\Gamma$.
	Let $\ell_\Lambda\in\mathbb{N}^\Lambda$ and $m_\Lambda\in\mathbb{Z}^\Lambda$ be such that $m_y\in[-\ell_y,\ell_y]\cap\mathbb{Z}$ for all $y\in\Lambda$.
	Equation \eqref{Eq: spherical harmonics - consequence of Shur Lemma for HS orthonormal basis} leads to
	\begin{align*}
		\underline{\omega}_j^{\beta,\Gamma}(\ell_\Lambda,m_\Lambda)
		&=\omega_j^{\beta,\Gamma}(\mathcal{Y}_{j|\ell_\Lambda,m_\Lambda})
		\\
		&=
		\omega_j^{\beta,\Gamma}\bigg(\int_{SU(2)}\Big[\mathcal{Y}_{j|\ell_\Lambda,m_\Lambda}
		-D^{(j)}(R_x)(\mathcal{Y}_{j|\ell_\Lambda,m_\Lambda})D^{(j)}(R_x)^*\Big]\mathrm{d}R_x\bigg)
		\\
		&=
		\omega_j^{\beta,\Gamma}\bigg(\int_{SU(2)}\mathcal{Y}_{j|\ell_\Lambda,m_\Lambda}D^{(j)}(R_x)^*(I-\tau^\Gamma_{i\beta})D^{(j)}(R_x)\mathrm{d}R_x\bigg)\,,
	\end{align*}
	where we used Equation \eqref{Eq: spherical harmonics - consequence of Shur Lemma for HS orthonormal basis} and the $(\beta,\delta_j^\Gamma)$-KMS condition.
	We observe that $D^{(j)}(R_x)\in B_j^x$, thus, $\tau^\Gamma_{i\beta}(D^{(j)}(R_x))$ can be computed using Equation \eqref{Eq: analyticity of local observables} for $\beta<\lambda/2\|\Phi_j\|_\lambda$, where $\Phi_{j,\Lambda}=Q_j^\Lambda(\varphi_\Lambda)$.
	We find
	\begin{multline*}
		\underline{\omega}_j^{\beta,\Gamma}(\ell_\Lambda,m_\Lambda)
		=-\sum_{n\geq 1}\frac{(-\beta)^n}{n!}
		\sum_{\substack{X_1,\ldots,X_n\\X_q\cap S_{q-1}\neq\varnothing}}
		\\
		\omega_j^{\beta,\Gamma}\bigg(\int_{SU(2)}\mathcal{Y}_{j|\ell_\Lambda,m_\Lambda}
		D^{(j)}(R_x)^*
		(\operatorname{ad}_{Q_j^{X_n}(\varphi_{X_n})}
		\cdots\operatorname{ad}_{Q_j^{X_1}(\varphi_{X_1})})
		D^{(j)}(R_x)
		\mathrm{d}R_x\bigg)\,,
	\end{multline*}
	where $S_0:=\{x\}$ and $S_q:=S_{q-1}\cup X_q$ for $q\geq 1$, $\operatorname{ad}_{A}(A'):=[A,A']$.
	Denoting by
	\begin{align*}
		W_n:=
		D^{(j)}(R_x)^*
		\big(\operatorname{ad}_{Q_j^{X_n}(\varphi_{X_n})}
		\cdots\operatorname{ad}_{Q_j^{X_1}(\varphi_{X_1})}\big)
		D^{(j)}(R_x)\,,
	\end{align*}
	we find
	\begin{align}\label{Eq: Wn - W1 explicit formula}
		W_1=D^{(j)}(R_x)^*
		Q_j^{X_1}(\varphi_{X_1})
		 D^{(j)}(R_x)
		-Q_j^{X_1}(\varphi_{X_1})
		=Q_j^{X_1}(\hat{R}_x\varphi_{X_1})
		-Q_j^{X_1}(\varphi_{X_1})\,,
	\end{align}
	and by induction
	\begin{align}\label{Eq: Wn - recursion formula}
		W_q=Q_j^{X_q}(\hat{R}_x\varphi_{X_q})W_{q-1}
		-W_{q-1}Q_j^{X_q}(\varphi_{X_q})
		\qquad
		q=2,\ldots,n\,.
	\end{align}
	Out of Equations \eqref{Eq: Wn - W1 explicit formula}-\eqref{Eq: Wn - recursion formula} one may find a reasonably explicit expression for $W_n$.
	To this avail let
	\begin{align*}
		\psi_X^p:=
		\begin{dcases}
			\hat{R}_x\varphi_X
			& p=+
			\\
			-\varphi_X
			& p=-
		\end{dcases}\,.
	\end{align*}
	We consider the set $\Psi:=\{Q_j^{X_1}(\psi_{X_1}^\pm),\ldots,Q_j^{X_n}(\psi_{X_n}^\pm)\}$ with an order relation $\succ$ defined by
	\begin{align*}
		Q_j^{X_n}(\psi_{X_n}^+)
		\succ Q_j^{X_{n-1}}(\psi_{X_{n-1}}^+)
		\succ\ldots
		\succ Q_j^{X_1}(\psi_{X_1}^+)
		\succ Q_j^{X_1}(\psi_{X_1}^-)
		\succ Q_j^{X_2}(\psi_{X_2}^-)
		\succ\ldots
		\succ Q_j^{X_n}(\psi_{X_n}^-)\,.
	\end{align*}
	For two elements in $A,B\in\Psi$ we set $A\cdot_\succ B:=AB$ if $A\succ B$ and $A\cdot_\succ B:=BA$ if $B\succ A$.
	Then
	\begin{align*}
		W_n=\sum_{p\in\{\pm 1\}^n}
		Q_j^{X_1}(\psi_{X_1}^{p(1)})
		\cdot_\succ\ldots
		\cdot_\succ Q_j^{X_n}(\psi_{X_n}^{p(n)})\,.
	\end{align*}
	At this stage we proceed similarly to Theorem \ref{Thm: uniqueness result for classical KMS states} by expanding each $\psi_X$-term in its Fourier-Laplace expansion, \textit{cf.} Remark \ref{Rmk: spherical harmonics - set-up, identities and useful bounds}.
	In particular we have for all $k\in\{1,\ldots,n\}$ and $p\in\{+,-\}$,
	\begin{multline*}
		Q_j^{X_k}(\psi_{X_k}^p)
		=\sum_{\ell_{X_k,p},m_{X_k,p}}
		C_{j|X_k,p}(\ell_{X_k,p},m_{X_k,p})
		\mathcal{Y}_{j|\ell_{X_k,p},m_{X_k,p}}
		\\
		C_{j|X_k,p}(\ell_{X_k,p},m_{X_k,p})
		:=\Big(\prod_{y\in X_k}c_{j,\ell_{X_k,p,y}}^{1/2}(2\ell_{X_k,p,y}+1)\Big)
		\big\langle
		Y_{\ell_{X_k,p},m_{X_k,p}}\big|\psi_{X_k}^p\big\rangle_{L^2(\mathbb{S}^2_{X_j},\mu_0^{X_j})}\,.
	\end{multline*}
	Thus, we find
		\begin{multline*}
		\underline{\omega}_j^{\beta,\Gamma}(\ell_\Lambda,m_\Lambda)
		=-\sum_{n\geq 1}\frac{(-\beta)^n}{n!}
		\sum_{\substack{X_1,\ldots,X_n\\X_q\cap S_{q-1}\neq\varnothing}}
		\sum_{p\in\{\pm 1\}^n}
		\sum_{\substack{\ell_{X_1,p},\ldots,\ell_{X_n,p}\\m_{X_1,p},\ldots,m_{X_n,p}}}
		\\
		\int_{SU(2)}
		\prod_{k=1}^n C_{X_k,p}(\ell_{X_k,p},m_{X_k,p})
		\mathrm{d}R_x\;
		\omega_j^{\beta,\Gamma}\Big(
		\mathcal{Y}_{j|\ell_\Lambda,m_\Lambda}
		\mathcal{Y}_{j|\ell_{X_1,p(1)},m_{X_1,p(1)}}
		\cdot_\succ\ldots\cdot_\succ
		\mathcal{Y}_{j|\ell_{X_n,p(n)},m_{X_1,p(n)}}
		\
		\Big)\,.
	\end{multline*}
	We then expand the product of the $\mathcal{Y}$'s factors by means of Equation \eqref{Eq: Berezin SDQ - product HS orthogonal quantization of spherical harmonics} in exactly the same way we did for the classical spherical harmonics.
	Setting $S_n:=X_1\cup\ldots\cup X_n$ we find
	\begin{align*}
		&\mathcal{Y}_{j|\ell_\Lambda,m_\Lambda}
		\mathcal{Y}_{j|\ell_{X_1,p(1)},m_{X_1,p(1)}}
		\cdots_\succ
		\mathcal{Y}_{j|\ell_{X_n,p(n)},m_{X_1,p(n)}}
		\\
		&=\Big(\prod_{y\in\Lambda\cap S_n^c} \mathcal{Y}_{j|\ell_{\Lambda,y},m_{\Lambda,y}}\Big)
		\prod_{y\in S_n}
		\sum_{s_{y,1},\ldots,s_{y,n}}
		\Big(
		\prod_{k=1}^n c_{j,s_{y,k}}
		\Big)
		\mathcal{Y}_{s_{y,n},\tilde{m}_{\Lambda,y}+\tilde{m}_{X_1,p(1),y}+\ldots+\tilde{m}_{X_n,p(n),y}}
		\\
		&=\Big(\prod_{y\in\Lambda\cap S_n^c} \mathcal{Y}_{j|\ell_{\Lambda,y},m_{\Lambda,y}}\Big)
		\sum_{k=1}^{N(\ell_{X_1,p},\ldots,\ell_{X_n,p})}
		C_{j,p}'(\ell_{S_n}^k,m_{S_n}^k)
		\mathcal{Y}_{j|\ell_{S_n}^k,m_{S_n}^k}\,,
	\end{align*}
	where $|C_{j,p}'(\ell_{S_n}^k,m_{S_n}^k)|\leq 1$ while
	\begin{align*}
		N(\ell_{X_1,p},\ldots,\ell_{X_n,p})
		\leq\prod_{k=1}^n\prod_{y\in X_k}(2\ell_{X_k,p,y}+1)\,.
	\end{align*}
	Setting again $X(n,\Lambda):=\Lambda\cup S_n$ we have that for any $(\beta,\delta_j^\Gamma)$-KMS quantum states $\omega_j^{\beta,\Gamma}\in S(B_j^\Gamma)$ the corresponding element $\underline{\omega}_j^{\beta,\Gamma}\in\underline{\mathsf{X}}$ solves the linear equation
	\begin{align}\label{Eq: Banach space for state - linear equation for quantum uniqueness}
		(I-L_j^\beta)\underline{\omega}=\underline{\delta}\,.
	\end{align}
	where $\underline{\delta}\in\underline{\mathsf{X}}$ has been defined in Equation \eqref{Eq: Banach space for state - non-homogeneous term}.
	The operator $L_j^\beta\colon\underline{\mathsf{X}}\to\underline{\mathsf{X}}$ is defined by setting, for all $\underline{f}\in\underline{\mathsf{X}}$, $(L_j^\beta\underline{f})_\varnothing=0$ and
	\begin{multline}\label{Eq: Banach space for state - linear operator for quantum uniqueness}
		(L_j^\beta\underline{f})_\Lambda(\ell_\Lambda,m_\Lambda)
		\\
		=-\sum_{n\geq 1}\frac{(-\beta)^n}{n!}
		\sum_{\substack{X_1,\ldots,X_n\\X_q\cap S_{q-1}\neq\varnothing}}
		\sum_{p\in\{\pm 1\}^n}
		\sum_{\substack{\ell_{X_1,p},\ldots,\ell_{X_n,p}\\m_{X_1,p},\ldots,m_{X_n,p}}}
		\int_{SU(2)}
		\prod_{k=1}^n C_{j|X_k,p}(\ell_{X_k,p},m_{X_k,p})
		\mathrm{d}R_x
		\\
		\sum_{k=1}^{N(\ell_{X_1},\ldots,\ell_{X_n})}
		C_{j,p}'(\ell_{S_n}^k,m_{S_n}^k)
		f_{X(n,\Lambda)}(\ell_{X(n,\Lambda),p}^k,m_{X(n,\Lambda),p}^k)\,,
	\end{multline}
	for all non-empty $\Lambda\Subset\Gamma$ where $x:=\min_{y\in\Lambda}y$ while $\ell_{X(n,\Lambda),p,y}=\ell_{\Lambda,y}$ for all $y\in\Lambda\cap S_n^c$ and $\ell_{X(n,\Lambda),p,y}=\ell_{S_n,p,y}$ for $y\in S_n$.
	
	It remains to prove that Equation \eqref{Eq: Banach space for state - linear equation for quantum uniqueness} has a unique solution under assumption \eqref{Eq: uniqueness result for quantum KMS states - assumption on potential}.
	To this avail we observe that
	\begin{multline*}
		|(L_j^\beta\underline{f})_\Lambda(\ell_\Lambda,m_\Lambda)|
		\leq\|\underline{f}\|_{\underline{\mathsf{X}}}\sum_{n\geq 1}\frac{\beta^n}{n!}
		\sum_{\substack{X_1,\ldots,X_n\\X_q\cap S_{q-1}\neq\varnothing}}
		\sum_{p\in\{\pm 1\}^n}
		\sum_{\substack{\ell_{X_1,p},\ldots,\ell_{X_n,p}\\m_{X_1,p},\ldots,m_{X_n,p}}}
		\\
		\int_{SU(2)}
		\prod_{k=1}^n |C_{j|X_k,p}(\ell_{X_k,p},m_{X_k,p})|
		\prod_{y\in X_k}(2\ell_{X_k,p,y}+1)
		\mathrm{d}R_x\,,
	\end{multline*}
	where we used the bound on $C_{j,p}'$ and $N(\ell_{X_1,p},\ldots,\ell_{X_n,p})$.
	Moreover, again as in the proof of Theorem \ref{Thm: uniqueness result for classical KMS states}, we find
	\begin{align*}
		&\sum_{\substack{\ell_{X_1,p},\ldots,\ell_{X_n,p}\\m_{X_1,p},\ldots,m_{X_n,p}}}
		\int_{SU(2)}
		\prod_{k=1}^n |C_{j|X_k,p}(\ell_{X_k,p},m_{X_k,p})|
		\prod_{y\in X_k}(2\ell_{X_k,p,y}+1)
		\mathrm{d}R_x
		\\
		&\leq
		\sum_{\ell_{X_1,p},\ldots,\ell_{X_n,p}}
		\prod_{k=1}^n
		\Big(\prod_{y\in X_k}
		\frac{(2\ell_{X_k,p,y}+1)^{5/2}}{[1+\ell_{X_k,p,y}(\ell_{X_k,p,y}+1)]^s}
		\Big)
		C_\Delta^{s|X_k|}
		\|\varphi_{X_k}\|_{C^{2s}(\mathbb{S}^2_{X_k})}
		\\
		&=\prod_{k=1}^n
		(K_sC_\Delta^s)^{|X_k|}
		\|\varphi_{X_k}\|_{C^{2s}(\mathbb{S}^2_{X_k})}\,,
	\end{align*}
	where $C_\Delta$ and $K_s$ have been defined in Equation \eqref{Eq: uniqueness result for classical KMS states - useful constants}.
	The above estimate is uniform over $p\in\{\pm 1\}^n$, therefore,
	\begin{align*}
		|(L_j^\beta\underline{f})_\Lambda(\ell_\Lambda,m_\Lambda)|
		\leq\|\underline{f}\|_{\underline{\mathsf{X}}}\sum_{n\geq 1}\frac{(2\beta)^n}{n!}
		\sum_{\substack{X_1,\ldots,X_n\\X_q\cap S_{q-1}\neq\varnothing}}
		\prod_{k=1}^n
		(K_sC_\Delta^s)^{|X_k|}
		\|\varphi_{X_k}\|_{C^{2s}(\mathbb{S}^2_{X_k})}\,.
	\end{align*}
	Finally we apply the following estimate, \textit{cf.} \cite[Prop. 6.2.45]{Bratteli_Robinson_97}: If $\alpha_X\in\mathbb{R}_+$ for all $X\Subset\Gamma$ then for all $S\Subset\Gamma$
	\begin{align*}
		\sum_{X\cap S\neq\varnothing}\alpha_X
		\leq\sum_{x\in S}\sum_{X\ni x}\alpha_X
		=\sum_{x\in S}\sum_{m\geq 0}\sum_{\substack{|X|=m+1\\X\ni x}}\alpha_X
		\leq|S|\sum_{m\geq 0}\sup_{x\in\Gamma}
		\sum_{\substack{|X|=m+1\\ X\ni x}}\alpha_X\,,
	\end{align*}
	and by iteration, for $S_1=\{x\}$ and $S_q:=X_q\cup S_{q-1}$,
	\begin{align*}
		\sum_{\substack{X_1,\ldots,X_n\\ X_q\cap S_{q-1}\neq \varnothing}}\alpha_{X_1}\cdots\alpha_{X_n}
		&\leq\sum_{m_1\geq 0}\sup_{y\in\Gamma}
		\sum_{\substack{|X_1|=m_1+1\\X_1\ni y}}
		\alpha_{X_1}
		\sum_{\substack{X_2,\ldots,X_n\\ X_q\cap S_{q-1}\neq \varnothing}}\alpha_{X_2}\cdots\alpha_{X_n}
		\\
		&\leq\sum_{m_1,\ldots,m_n\geq 0}
		\prod_{k=1}^n(1+m_1+\ldots+m_{k-1})
		\sup_{y\in\Gamma}
		\sum_{\substack{|X_k|=m_k+1\\X_k\ni y}}
		\alpha_{X_k}
		\\
		&\leq n!\varepsilon^{-n}e^\varepsilon
		\bigg(\sum_{m\geq 0}
		e^{\varepsilon m}
		\sup_{y\in\Gamma}
		\sum_{\substack{|X|=m+1\\X\ni y}}
		\alpha_X\bigg)^n\,,
	\end{align*}
	where in the last line we observe that, for all $\lambda>0$,
	\begin{align*}
		\prod_{k=1}^n(1+m_1+\ldots+m_{k-1})
		\leq(1+m_1+\ldots+m_n)^n
		\leq n!\varepsilon^{-n}e^{\varepsilon(1+m_1+\ldots+m_n)}\,.
	\end{align*}
	Therefore, setting $\alpha_X:=(K_sC_\Delta^s)^{|X|}\|\varphi_X\|_{C^{2s}(\mathbb{S}^2_X)}$ we find
	\begin{align*}
		|(L_j^\beta\underline{f})_\Lambda(\ell_\Lambda,m_\Lambda)|
		&\leq\|\underline{f}\|_{\underline{\mathsf{X}}}e^\varepsilon\sum_{n\geq 1}(2\varepsilon^{-1}\beta K_sC_\Delta^s)^n
		\bigg(
		\sum_{m\geq 0}(e^\varepsilon K_sC_\Delta^s)^m
		\sup_{y\in\Gamma}
		\sum_{\substack{|X|=m+1\\X\ni y}}
		\|\varphi_X\|_{C^{2s}(\mathbb{S}^2_X)}
		\bigg)^n
		\\
		&=\|\underline{f}\|_{\underline{\mathsf{X}}}e^\varepsilon\sum_{n\geq 1}(2\varepsilon^{-1}\beta K_sC_\Delta^s\|\varphi\|_{\varepsilon,s})^n
		\\
		&=\|\underline{f}\|_{\underline{\mathsf{X}}}e^\varepsilon
		\frac{2\varepsilon^{-1}\beta K_sC_\Delta^s\|\varphi\|_{\varepsilon,s}}{1-2\varepsilon^{-1}\beta K_sC_\Delta^s\|\varphi\|_{\varepsilon,s}}\,,
	\end{align*}
	where we used assumption \eqref{Eq: uniqueness result for quantum KMS states - assumption on potential} and considered $\beta<\varepsilon/2K_sC_\Delta^s\|\varphi\|_{\varepsilon,s}$ ---notice that the latter value is lower than the previous bound $\beta<\varepsilon/2\|\Phi_j\|_\varepsilon$ necessary to ensure the expansion of $\tau_{i\beta}^\Gamma(D^{(j)}(R_x))$ according to Equation \eqref{Eq: analyticity of local observables}.
	It follows that $L_j^\beta\in\mathcal{B}(\underline{\mathsf{X}})$, moreover,
	\begin{align*}
		\|L_j^\beta\|_{\underline{\mathsf{X}}}
		\leq e^\varepsilon\frac{2\varepsilon^{-1}\beta K_sC_\Delta^s\|\varphi\|_{\varepsilon,s}}{1-2\varepsilon^{-1}\beta K_sC_\Delta^s\|\varphi\|_{\varepsilon,s}}
		<1\,,
	\end{align*}
	provided that $\beta<\beta_{\varepsilon,s}$, $\beta_{\varepsilon,s}$ being defined by Equation \eqref{Eq: uniqueness result for quantum KMS states - estimate on critical temperature}.
\end{proof}

\begin{remark}\label{Rmk: NO CPT implies NO QPT; comparison with BR result}
	\noindent
	\begin{enumerate}[(i)]
		\item
		By direct inspection assumption \eqref{Eq: uniqueness result for quantum KMS states - assumption on potential} implies \eqref{Eq: uniqueness result for classical KMS states - useful constants}.
        Thus, condition \eqref{Eq: uniqueness result for quantum KMS states - assumption on potential} is a sufficient condition which guarantees uniqueness of both $(\beta,\delta_\infty^\Gamma)$-KMS classical states and $(\beta,\delta_j^\Gamma)$-KMS quantum states for all $j\in\mathbb{Z}_+/2$ and for $\beta\leq\beta_{\varepsilon,s}$.
        In other words, \eqref{Eq: uniqueness result for quantum KMS states - assumption on potential} ensures the absence of both classical and quantum phase transitions at sufficiently high temperature.
        Furthermore, since $\|\varphi\|_{0,s}\leq\|\varphi\|_{\varepsilon,s}$ and $\varepsilon(1+e^\varepsilon)^{-1}<\log 2$, it follows that $\beta_{\varepsilon,s}<\beta_{0,s}$, that is, the corresponding quantum inverse critical temperature is slightly lower with respect to the corresponding classical one ---This ensures absence of phase transition starting from a common critical inverse temperature.
        
		Moreover, on account of Proposition \ref{Thm: limit points of quantum KMS in oo volume are classical KMS in oo volume}, \textit{cf.} Remark \ref{Rmk: classical limit of quantum KMS in absence of phase transition}, in this situation the classical limit $\lim_{j\to\infty}\omega_j^{\beta,\Gamma}\circ Q_j^\Gamma$ of the unique $(\beta,\delta_j^\Gamma)$-KMS quantum states $\omega_j^\Gamma\in S(B_j^\Gamma)$ coincides with the unique $(\beta,\delta_\infty^\Gamma)$-KMS classical states $\omega_\infty^{\beta,\Gamma}\in S(B_\infty^\Gamma)$.
        
		\item
		It is worth to compare Theorem \ref{Thm: uniqueness result for quantum KMS states} with \cite[Prop. 6.2.45]{Bratteli_Robinson_97}.
		The latter provide a sufficient condition for the uniqueness of $(\beta,\delta_j^\Gamma)$-KMS for fixed $j\in\mathbb{Z}_+/2$ which is similar in spirit to \eqref{Eq: uniqueness result for quantum KMS states - assumption on potential} ---in fact, Theorem \ref{Thm: uniqueness result for quantum KMS states} has been inspired by this latter result--- namely
		\begin{align}\label{Eq: uniqueness result for quantum KMS states - BR assumption on potential}
			\|\varphi\|_{\textsc{br},\varepsilon,j}
			:=\sum_{m\geq 0}e^{\varepsilon m}(2j+1)^m
			\sum_{\substack{|X|=m+1\\X\ni x}}\|Q_j^X(\varphi_X)\|_{B_j^X}<\infty\,.
		\end{align}
		Condition \eqref{Eq: uniqueness result for quantum KMS states - BR assumption on potential} is stronger than \eqref{Eq: uniqueness result for quantum KMS states - assumption on potential} because it only uses the $B_j^X$-norm of $Q_j^X(\varphi_X)$.
		However, it is not uniform of $j\in\mathbb{Z}_+/2$, in particular, it requires a faster and faster decay behaviour of the potential $\{\varphi_\Lambda\}_{\Lambda\Subset\mathbb{Z}^d}$ as $j\to\infty$.
		Moreover, the critical quantum inverse temperature $\beta_{\textsc{br}}(j,\lambda)$ predicted by \cite[Prop. 6.2.45]{Bratteli_Robinson_97} vanishes as $j\to\infty$.
		Finally, there is no classical version of condition \eqref{Eq: uniqueness result for quantum KMS states - BR assumption on potential}.
		For all these reasons \cite[Prop. 6.2.45]{Bratteli_Robinson_97} is not suitable for the comparison with the classical setting we are interested in.
		
		Instead, Theorem \ref{Thm: uniqueness result for quantum KMS states} leads to a result which is uniform in $j$, allowing for a simpler comparison with Theorem \ref{Thm: uniqueness result for classical KMS states}.
		The latter theorem can be understood as a classical counterpart of the uniqueness result presented in \cite{Bratteli_Robinson_97}.
		From a technical point of view the uniform behaviour in $j$ is obtained by trading the $B_\infty$-norm with the $C^{2s}$-norm for a suitably high $s$.
	\end{enumerate}
\end{remark}

\appendix

\end{document}